\theoremstyle{definition}
\newtheorem{theorem}{Theorem}[section]
\theoremstyle{definition}
\newtheorem{lemma}[theorem]{Lemma}
\theoremstyle{definition}
\theoremstyle{definition}
\newtheorem{proposition}[theorem]{Proposition}
\theoremstyle{definition}
\theoremstyle{definition}
\newtheorem{definition}{Definition}[section]
\theoremstyle{definition}
\newtheorem{example}{Example}[section]
\theoremstyle{definition}
\newtheorem{remark}{Remark}[section]
\theoremstyle{definition}
\newtheorem{notation}{Notation}[section]
\theoremstyle{definition}
\newtheorem{note}{Notation}[definition]
\newcommand{\VV}{\mathbb{V}}
\newcommand{\Nes}{\mathsf{Nes}}
\newcommand{\C}{\mathsf{C}}
\newcommand{\emp}{\mathsf{0}}
\newcommand{\N}{\mathsf{N}}
\newcommand{\T}{\mathsf{T}}
\newcommand{\W}{\mathsf{W}}
\newcommand{\PCL}{\mathbb{PCL}}
\newcommand{\NN}{\mathbb{N}}
\newcommand{\TT}{\mathbb{T}}
\newcommand{\WW}{\mathbb{W}}
\newcommand{\CC}{\mathbb{C}}
\newcommand{\AAA}{\mathbb{A}}
\newcommand{\UU}{\mathbb{U}}
\newcommand{\LWk}{\mathsf{Wk_L}}
\newcommand{\Cut}{\mathsf{cut}}
\newcommand{\init}{\mathsf{init}}
\newcommand{\fu}{\Vdash^\forall}
\newcommand{\fe}{\Vdash^\exists}
\newcommand{\barra}[1]{\Vdash_{#1}}
\newcommand{\lab}[1]{\mathbf{G3#1}}
\newcommand{\Lfe}{\mathsf{L\fe}}
\newcommand{\Rfe}{\mathsf{R\fe}}
\newcommand{\Lfu}{\mathsf{L\fu}}
\newcommand{\Rfu}{\mathsf{R\fu}}
\newcommand{\Rcon}{\mathsf{R>}}
\newcommand{\Lcon}{\mathsf{L>}}
\newcommand{\Rbar}{\mathsf{R|}}
\newcommand{\Lbar}{\mathsf{L|}}
\newcommand{\Ref}{\mathsf{Ref}}
\newcommand{\Trans}{\mathsf{Tr}}
\newcommand{\Lc}{\mathsf{L\subseteq}}
\newcommand{\Single}{\mathsf{Single}}
\newcommand{\Repl}[1]{\mathsf{Repl_{#1}}}
\newcommand{\Unif}[1]{\mathsf{U_{#1}}}
\newcommand{\Abs}[1]{\mathsf{A_{#1}}}
\newcommand{\Rright}{\mathsf{R\rightarrow}}
\newcommand{\Lright}{\mathsf{L\rightarrow}}
\newcommand{\Rand}{\mathsf{R\wedge}}
\newcommand{\Land}{\mathsf{L\wedge}}
\newcommand{\Ror}{\mathsf{R\lor}}
\newcommand{\Lor}{\mathsf{L\lor}}
\newcommand{\Mon}{\mathsf{Mon\forall}}
\newcommand{\g}{\prec_g}
\newcommand{\ww}{\prec}
\newcommand{\wclos}{\prec^*}
\newcommand{\Ctr}{\mathsf{Ctr}}
\newcommand{\Ctrl}{\mathsf{Ctr_L}}
\newcommand{\Ctrr}{\mathsf{Ctr_R}}
\newcommand{\Wkl}{\mathsf{Wk_L}}
\newcommand{\Wkr}{\mathsf{Wk_R}}
\newcommand{\Wk}{\mathsf{Wk}}
\newcommand{\Lbot}{\bot_{\mathsf{L}}}
\newcommand{\fpcl}{\mathcal{L}}
\newcommand{\val}{\vDash_{\mathcal{N}}}
\newcommand{\Maxc}{\mathsf{MAX}}
\newcommand{\cons}[2]{{#1}^{#2}}
\newcommand{\clesser}[1]{\leqslant_{#1}}
\newcommand{\worldsC}{\mathcal{W}}
\newcommand{\valC}{\mathcal{V}}
\newcommand{\spheres}[2]{\nu^{#1}_{#2}}
\newcommand{\neigh}[1]{\mathcal{N}({#1})}
\newcommand{\canonicalex}[1]{\mathfrak{M}_{#1}}
\newcommand{\modelex}[1]{\mathcal{M}_{#1}}
\newcommand{\modelNeigh}{\mathcal{M}_{N}}
\newcommand{\worldsCW}{\mathcal{W}^{\mathsf{cw}}}
\newcommand{\neighFunctW}{\mathcal{N}^{\mathsf{cw}}}
\newcommand{\neighW}[1]{\mathcal{N}^{\mathsf{cw}}(\mathsf{#1})}
\newcommand{\axMod}{(MOD)}
\newcommand{\univ}[1]{\mathsf{Univ}({#1})}
\newcommand{\worldsCU}{\mathcal{W}^{\mathsf{u}}}
\newcommand{\valCU}{\mathcal{V}^{\mathsf{u}}}
\newcommand{\neighFunctU}{\mathcal{N}^{\mathsf{u}}}
\newcommand{\neighU}[1]{\mathcal{N}^{\mathsf{u}}(\mathsf{#1})}
\newcommand{\grafo}{\mathcal{G}_\mathcal{B}}
\newcommand{\axiom}[1]{\mathcal{H}_{\mathsf{#1}}}
\newcommand{\PP}{\mathbb{P}}
\newcommand{\derivable}[1]{\vdash_{\mathsf{#1}}}
\newcommand{\valid}[1]{\vDash_{\mathsf{#1}}}
\newcommand{\spheresCW}[2]{\mu^{#1}_{#2}}
\newcommand{\LconT}{\mathsf{L}>^\star}
\newcommand{\size}[1]{|#1|}
\title{Uniform labelled calculi for preferential conditional logics based on neighbourhood semantics\footnote{Submitted for publication to the Journal of Logic and Computation. The article will be revised after referees reports.
This work was partially supported by the Project TICAMORE ANR-16-CE91-0002-
		01 and by WWTF project MA 16-28.}
}
\author{Marianna Girlando$^1$, Sara Negri$^2$, Nicola Olivetti$^3$}
\date{}
\affil{ 
	\small{
$^1$ Inria Saclay - Ile-de-France \& LIX, Ecole Polytechnique, France.\\

$^2$ Departement of Philosophy, University of Helsinki, Finland.\\

$^3$ Aix Marseille Univ, Université de Toulon, CNRS, LIS, Marseille, France.
}
}
\begin{document}
	
	\maketitle

\begin{abstract}
	The  preferential conditional logic $ \PCL $, introduced by Burgess, and
	its extensions are studied. First, a natural semantics
	based on neighbourhood models, which generalise Lewis' sphere models for
	counterfactual logics, is proposed.   Soundness and completeness of $ \PCL $ and its extensions
	with respect to this class of models are proved directly.
	Labelled sequent calculi for all logics of the family are then introduced.
	The calculi are modular and have standard proof-theoretical properties,
	the most important of which is  admissibility of cut, that entails a syntactic proof of  completeness of the calculi. 
	By adopting a general strategy, root-first proof search terminates, thereby providing a decision procedure for $ \PCL $ and its extensions. 
	Finally, the semantic completeness of the calculi is established:
	from a finite branch in a  failed proof attempt it is possible  to extract a finite
	countermodel of the root sequent. The latter result gives a
	constructive proof of the finite model property of all the logics considered.
\end{abstract}

\section{Introduction}

Conditional logics have been studied from a philosophical viewpoint since the 60's,  with seminal works by, among other, Lewis, Nute, Stalnaker, Chellas, Pollock  and Burgess.\footnote{Cf. \cite{lewis1973}, \cite{stalnaker1968theory},  \cite{stalnaker1970semantic}, \cite{chellas1975basic}, \cite{pollock1981refined}, \cite{burgess1981quick}, \cite{veltman1985logic}.}  In all cases, the aim is to represent a kind of hypothetical implication $A > B$ different from classical material implication, but also from other non-classical implications, such as the intuitionistic one.  

There are mainly two  kinds of interpretations of  a conditional  $A > B$. The first is  hypothetical/counterfactual: ``If A were the case then B would be the case'', while the second is prototypical: ``Typically (normally) if A then B'', or ``B holds in most  normal/typical cases in which A 
holds''. Applications of conditional logics to computer science, more specifically to artificial intelligence and knowledge representation,  have followed these two interpretations. The hypothetical/counterfactual interpretation has lead to the study of the relation of conditional logics with the notion of \emph{belief change}, which has lead to the crucial issue of the Ramsey Test. The prototypical interpretation  has found an interest in the formalisation of default and non-monotonic reasoning (the well-known KLM systems) and has  some relation with probabilistic reasoning. The range of conditional logics is actually more extensive, comprising also deontic and causal interpretations. 

All interpretations of the conditional operator agree on the rejection of some properties of material implication, along with properties of other non-classical implications, such as the intuitionistic one. These undesirable properties are strengthening,  $A > B$ \emph{implies}  $(A\land C) > B$;  transitivity, $A> B$ \emph{and} $B > C$ \emph{imply} $A > C$, and contraposition, $A > B$ \emph{implies} $\neg B > \neg A$. 

The semantics of conditional logics is defined in terms of various kinds of possible-world models, most of them comprising a notion of preference, comparative similarity or choice among worlds. Intuitively, a conditional $A > B$ is true at a world $x$ if $B$ is true 
in all the worlds  most normal/similar/close to $x$ in which $A$ is true.
In contrast with the situation in
standard modal logic, there is 
no unique semantics for conditional logics.

\

In this paper we consider the  conditional logic $\PCL$  (Preferential Conditional Logic), one of the fundamental systems of conditional logics.   An axiomatization of $\PCL$  (and the respective completeness proof) has been originally presented in the seminal work by  Burgess in \cite{burgess1981quick}, where the system is called S, and then by Veltman \cite{veltman1985logic}. 
Logic $\PCL$  generalises Lewis' basic logic of counterfactuals, and its flat fragment corresponds to the preferential logic P of non-monotonic reasoning proposed by Kraus, Lehmann and Magidor \cite{kraus1990nonmonotonic}. 

The logic takes its name, $\PCL$,  from its original semantics, defined in terms of \emph{preferential models}. In these models,  every world $x$  is associated with a set of accessible worlds $W_x$ and a \emph{preference} relation $\leq_x $ on this set; the intuition is that this relation assesses the relative normality/similarity of pairs of worlds with respect to $x$. Roughly speaking,   a conditional $A > B$ is forced by  $x$ if $B$ is true in all accessible worlds (that is, worlds  in $W_x$) where $A$ holds and that  are  most ``normal" with respect to $x$, where their normality is assessed  by the  relation $\leq_x$\footnote{According to some interpretations, normality means minimality with respect to $\leq_x$.}. 


In this paper we present an alternative semantics for $\PCL$    based on  \emph{neighbourhood models}.   Neighbourhood semantics has been successfully employed to analyse non-normal modal logics \cite{chellas1975basic}, as their semantics cannot be defined in terms of ordinary relational Kripke models. In neighbourhood models, every world $x$ is  equipped with a set of neighbourhoods $N(x)$ and each $\alpha\in N(x)$  is a non-empty set of worlds.  The general intuition is that each neighbourhood $\alpha \in N(x)$  represents a state of information/knowledge/affair to be taken into account in evaluating the truth of modal formulas at world $x$.  In the conditional context, neighbourhood inclusion can be understood as follows:
if  $\alpha, \beta \in N(x)$ and $\beta\subseteq \alpha$, then worlds in $\beta$ are at least as plausible/normal as worlds in $\alpha$.

It turns out that neighbourhood models provide a very natural semantics for $\PCL$. This semantics abstracts away from the details of the preference relations and, moreover,  the definition of the conditional can be seen as a simple modification of the \emph{strict implication} operator, avoiding the unwanted properties of strengthening, transitivity and contraposition. 
The strict implication demands that each $\alpha\in N(x)$  ``validates" the implication  $A \to B$. The truth condition for the conditional only requires that,  for all   $\alpha\in N(x)$ containing an $A$ -world, there is a smaller neighbourhood $\beta\subseteq \alpha$  non-vacuously validating  the  implication  $A \to B$, where non-vacuously means that $\beta$ must contain an $A$-world.
No further properties or structure of neighbourhood models are needed. 

The use of neighbourhood models for analysing conditional logics is not a novelty: Lewis' sphere models for counterfactual logics belong to this approach. However, the crucial property of sphere models is that neighbourhoods (e.g. spheres) are \emph{nested}: given $\alpha, \beta \in N(x)$, either $\alpha\subseteq \beta$ or $\beta\subseteq \alpha$. This property entails that worlds belonging to  $\bigcup N(x)$ can be always be compared according to their level of normality\footnote{In models where minimal spheres always exist, the nesting property is equivalent to the existence of a ranking function  $r_x$  defined for every world $ x $. The function $r_x(y)$ evaluates the level of normality of each world $y\in W_x$ with respect to $x$.}. This assumption is controversial in some contexts such as belief revision \cite{girard2007onions} and non-monotonic reasoning. 
The logic $\PCL$ is more general: its neighbourhood models do not assume nesting of neighbourhoods, whence worlds in $\bigcup N(x)$ are not necessarily comparable with respect to their level of normality. 

Although $\PCL$ is the basic system we consider in this paper, stronger systems can be obtained by assuming properties of neighbourhood models: normality, total reflexivity, weak centering, centering, uniformity and absoluteness. These conditions are analogous to the ones considered by Lewis for sphere models, and give rise to a total of 15 preferential systems.

\

The Hilbert axiomatization of $\PCL$ is given by adding to 
the smallest conditional logic CK  three axioms, namely, (ID), (CM) and (OR). The family of preferential logics is obtained by adding axioms in correspondence with the semantic properties mentioned above.

In sharp contrast with the simplicity of its Hilbert axiomatization, the proof theory of $\PCL$ and its extensions is largely unexplored. To the best of our knowledge, the only existing proof systems for $\PCL$ can be found in  \cite{giordano2009tableau,schroder2010optimal} and, more recently, in \cite{nalon2018resolution,girlando2019uniform}. All of them are based on preferential semantics, and the last two cover only logic $\PCL$ and none of the extensions\footnote{For a more detailed discussion on the literature, refer to section 8.}.

Building on the neighbourhood semantics, we define labelled sequent calculi for $\PCL$ and its extensions\footnote{Some results of this work have been preliminarily presented in \cite{negri2015analytic}.}. The calculi make use of both world and neighbourhood labels to encode the relevant features of the semantics into the syntax. All  calculi are \emph{standard}, meaning that each connective is handled exactly  by  dual left and right rules, justified through a clear meaning explanation. 
As a special feature,  a new operator, $\mid$, is introduced for translating the meaning explanation of the conditional operator into sequent rules. 
Moreover, the calculi are \emph{modular}, to the extent that logical rules are the same for all systems, while relational rules for neighbourhood and world labels are added to define calculi for extensions.
We  do not consider explicitly the family of Lewis' logics, for which several internal and labelled calculi  exist. Nonetheless the present framework can be adapted to cover these systems as well. 

In addition to 
simplicity and modularity, 
the calculi 
have
strong proof theoretical properties, such as height-preserving invertibility and admissibility of  contraction and cut. 

We show that the calculi are terminating under the adoption of a uniform proof search strategy, obtaining thereby a decision procedure for (almost) all logics of the $\PCL$ family. However, since the logics in this family belong to different complexity classes \cite{halpern1994complexity}, the uniform strategy will be  unavoidably far from optimal.  


We also prove semantic completeness of the calculus: from a failed proof of a formula it is possible to extract a \emph{finite} neighbourhood countermodel, built from a branch of the attempted proof. 
This result provides a constructive proof of the finite model property for each logic of the  $\PCL$ family  with respect to the neighbourhood semantics. 

\

The paper is organised as follows: In Section 2, the family of $\PCL$ logics  and the neighbourhood semantics is introduced. Section 3 shows completeness of $\PCL$ and its extensions with respect to the neighbourhood semantics. In Section 4, we introduce  labelled sequent calculi for family of preferential logics. In Section 5 we prove the main syntactic properties of the calculi, including admissibility of cut, thereby obtaining a syntactic proof of the their completeness. 
In Section 6, a decision procedure for the logics is presented. In Section 7, we present a proof of semantic completeness for the calculi, by extracting a countermodel form failed proof search. Finally, Section 8 discusses some related work.

\section{Preferential logics and neighbourhood semantics}
In this section we introduce the family of preferential conditional logics. 

\begin{definition} 
	The set of well formed formulas of $ \PCL $ and its extensions is defined by means of the following grammar, for $ p \in Atm $ propositional variable and $ A, B \in \fpcl $: 
	$$ \fpcl :: = p \mid \bot \mid A \wedge B \mid A\lor B  \mid A \rightarrow B \mid A > B .$$
\end{definition}

Preferential conditional logic $ \PCL $ is the basic system of the family; extensions of $ \PCL $ are obtained by adding to the basic system the axioms for \textit{normality}, \textit{total reflexivity}, \textit{weak centering}, \textit{centering}, \textit{uniformity} and \textit{absoluteness}. The resulting 15 logics are represented in the lattice of Figure \ref{fig:logics}.

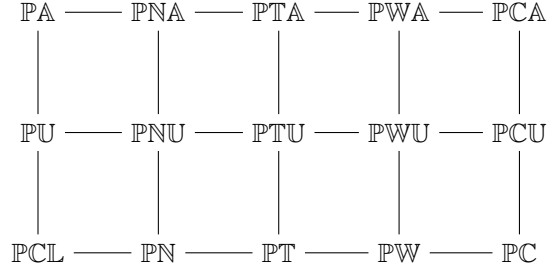
\begin{figure}[h]
	\begin{center}
		\begin{tikzpicture}[scale=0.8]
		\node (x) at  (-2,-2){};
		\node (y) at  (2,-2){};
		\node (z) at  (2,2){};
		\node (k) at  (-2,2){};
		
		\node (PCL) at  (-2,-2)  { $\mathbb{PCL} $};
		\node (PU) at  (-2,0)  { $\mathbb{PU} $};
		\node (PA) at  (-2,2)  { $\mathbb{PA} $};
		\node (PN) at  (0,-2)  { $\mathbb{PN} $};
		\node (PNU) at  (0,0)  { $\mathbb{PNU} $};
		\node (PNA) at  (0,2)  { $\mathbb{PNA} $};
		\node (PT) at  (2,-2)  { $\mathbb{PT} $};
		\node (PTU) at  (2,0)  { $\mathbb{PTU} $};
		\node (PTA) at  (2,2)  { $\mathbb{PTA} $};
		\node (PW) at  (4,-2)  { $\mathbb{PW} $};
		\node (PWU) at  (4,0)  { $\mathbb{PWU} $};
		\node (PWA) at  (4,2)  { $\mathbb{PWA} $};
		\node (PC) at  (6,-2)  { $\mathbb{PC} $};
		\node (PCU) at  (6,0)  { $\mathbb{PCU} $};
		\node (PCA) at  (6,2)  { $\mathbb{PCA} $};
		
		\draw (PCL) --(PN);
		\draw (PN) --(PT);
		\draw (PT) --(PW);
		\draw (PW) --(PC);
		\draw (PU) --(PNU);
		\draw (PNU) --(PTU);
		\draw (PTU) --(PWU);
		\draw (PWU) --(PCU);
		\draw (PA) --(PNA);
		\draw (PNA) --(PTA);
		\draw (PTA) --(PWA);
		\draw (PWA) --(PCA);
		
		\draw (PCL) --(PU);
		\draw (PN) --(PNU);
		\draw (PT) --(PTU);
		\draw (PW) --(PWU);
		\draw (PC) --(PCU);
		\draw (PU) --(PA);
		\draw (PNU) --(PNA);
		\draw (PTU) --(PTA);
		\draw (PWU) --(PWA);
		\draw (PCU) --(PCA);
		\end{tikzpicture}
	\end{center}
	\caption{The preferential family of conditional logics}
	\label{fig:logics}
\end{figure}

The axiomatic presentation of $ \PCL $ and its extensions is given in Figure \ref{axiom_table}. Propositional axioms and rules are standard. Given a logic $ \mathbb{K} $ of the preferential family, we denote its axiom system as $ \axiom{K} $, and derivability of a formula $ F $ in the axiom system as $ \derivable{K} F $.
\begin{figure}[h!]
	\begin{adjustbox}{max width = \textwidth}
		\begin{tabular}{l l}
			\hline
			&\\
			(RCEA) {\Large $\frac{A \leftrightarrow B}{(A> C)\leftrightarrow(B> C)}$} &
			(RCK)   {\Large $\frac{A  \rightarrow B}{(C > A)\rightarrow (C> B)}$}\\[1.5ex]
			(ID)  $A > A$  &
			(R-And)  $ (A > B) \land (A> C) \rightarrow   ( A > (B \land C))$ \\[1.5ex]
			(CM)  $ (A > B) \land (A > C) \rightarrow ((A\land B) > C)$ &
			(OR) $(A > C) \land (B > C) \rightarrow ((A\lor B) > C)$   \\[1ex]
			(N)  $ \lnot (\top > \bot) $ &
			(T) $ A \rightarrow  \lnot (A > \bot) $  \\[1ex] 	
			(W)  $ (A>B) \rightarrow (A\rightarrow B) $&
			(C) $ (A\wedge B) \rightarrow (A>B) $   \\[1ex]
			(U$_1$)  $  (\lnot A > \bot ) \rightarrow \lnot ( \lnot A > \bot ) > \bot $ &
			(U$_2$)  $ \lnot (A > \bot ) \rightarrow( (A > \bot ) > \bot)  $  \\[1ex]
			(A$_1$)  $ (A> B) \rightarrow ( C>(A>B) ) $ &
			(A$_2$)  $ \lnot(A> B) \rightarrow ( C> \lnot (A>B) )$  \\[1ex]
			\hline
		\end{tabular}
	\end{adjustbox}
	
	\vspace{0.3cm}
	
	\begin{adjustbox}{max width = \textwidth}
		\begin{tabular}{c}
			$ \axiom{PCL} =  $ \{(RCEA), (RCK), (ID), (R-And), (CM), (OR)\};\\[0.3ex]
			$ \axiom{PN}= \axiom{PCL} $ + (N); ~  $ \axiom{PT}= \axiom{PN} $ + (T); ~ $ \axiom{PW}= \axiom{PT} $ + (W); ~ $ \axiom{PC}= \axiom{PW} $ + (C);\\[0.3ex]
			$ \axiom{PU}= \axiom{PCL} $ + (U$ _1 $)+(U$ _2 $); ~	$ \axiom{PNU}= \axiom{PU} $ + (N); ~  $ \axiom{PTU}= \axiom{PNU} $ + (T); \\[0.3ex]
			$ \axiom{PWU}= \axiom{PTU} $ + (W); ~ $ \axiom{PCU}= \axiom{PWU} $ + (C); \\[0.3ex]
			$ \axiom{PA}= \axiom{PCL} $ + (A$ _1 $)+(A$ _2 $); ~	$ \axiom{PNA}= \axiom{PA} $ + (N); ~  $ \axiom{PTA}= \axiom{PNA} $ + (T); \\[0.3ex]
			$ \axiom{PWA}= \axiom{PTA} $ + (W); ~ $ \axiom{PCA}= \axiom{PWA} $ + (C). \\[0.3ex]
		\end{tabular}
	\end{adjustbox}
	
	\caption{Axiomatization of $ \PCL $}
	\label{axiom_table}
\end{figure}

The following proposition contains some theorems of $\PCL$ that will be (tacitly) used in the following. The first four are well-known axioms, respectively called (RT), (MOD), (DT), and (CSO)  in the literature. Axiom (DT) is equivalent to (OR), and from (DT) axiom (RT) is derivable. Axiom (CSO) is equivalent to (CM)+(RT). The proof of the last three axioms is given in \cite{kraus1990nonmonotonic}. 

\begin{proposition} \label{lemma:derivable_ax}
	The following formulas are derivable in  $\PCL $:
	\begin{enumerate} 
		\item (RT) ~ $ (A > B) \land ((A\land B) > C) \rightarrow (A > C) $;
		\item (MOD) ~ $(A > \bot) \to (B > \lnot A)$; 
		\item (DT) ~ $((A \land B) > C) \to (A > (B \to C))$;
		\item  (CSO) ~ $(A > B) \land (B > A)) \to ((A > C) \to (B> C))$;
		\item $ ((A\lor B) > A) \wedge ((B\lor C) >B) \rightarrow ((A\lor C) >A) $;
		\item $ ((A\lor B)> A) \wedge ((B\lor C)>B ) \rightarrow A> (C\rightarrow B)) $;
		\item $ ((A\lor B)> A) \wedge (B> C) \rightarrow ( A> (B\rightarrow C))  $.
	\end{enumerate}
\end{proposition}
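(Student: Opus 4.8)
The plan is to derive each of the seven formulas inside the Hilbert system $\axiom{PCL}$, using only the axioms and rules of Figure \ref{axiom_table} (namely (RCEA), (RCK), (ID), (R-And), (CM), (OR)) together with classical propositional reasoning. Since the literature already contains proofs of (RT), (MOD), (DT), (CSO) and of items 5--7 (the latter from \cite{kraus1990nonmonotonic}), the task is essentially to assemble these derivations in a self-contained way, being careful about which equivalences are needed where. I would proceed in the order that exploits dependencies: first (DT), then (RT), then (MOD), then (CSO), and finally the three combined principles 5--7, each of which follows from the earlier ones after suitable instances.

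For (DT), from the hypothesis $(A\wedge B) > C$ I would aim to get $A > (B \to C)$. The key move is to use (RCK) to lift the propositional validity $C \to (B \to C)$ to $((A\wedge B) > C) \to ((A\wedge B) > (B\to C))$, and then separately use (CM) and (ID): from (ID) we have $A > A$, and combined with $A > (B\to C)$-candidate material one reshapes the antecedent $A\wedge B$ into $A$. More precisely the standard route is: $(A\wedge B)>C$ gives $(A\wedge B) > (B\to C)$ by (RCK); meanwhile $A > (B\to C)$ is what we want, so we need to pass from the antecedent $A\wedge B$ to $A$. This is done by noting $A > A$ (ID) and using (OR)/(CM) together with the propositional equivalence $A \leftrightarrow (A\wedge B)\lor(A\wedge\neg B)$ handled via (RCEA) and (RCK) — i.e.\ split on $B$. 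The trivial branch $A\wedge\neg B$ validates $B\to C$ vacuously (from $(A\wedge\neg B) > \neg B$, obtained from (ID) and (RCK), infer $(A\wedge\neg B) > (B\to C)$), the other branch is handled by the (RCK)-consequence above, and (OR) glues the two disjuncts back into $A$. Then (RT) is obtained from (DT) by the observation recorded before the proposition: instantiate (DT) with the relevant formulas and combine with $A>B$ using (R-And) and (RCK); concretely, from $A>B$ and $(A\wedge B)>C$, (R-And) is not directly applicable, so instead one uses $(A\wedge B)>C \Rightarrow A>(B\to C)$ by (DT), then combines $A>B$ and $A>(B\to C)$ by (R-And) to get $A>(B\wedge(B\to C))$, and finishes with (RCK) on $B\wedge(B\to C)\to C$.

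For (MOD), from $A>\bot$ I want $B > \neg A$: using (RCEA) on $\bot \leftrightarrow (A\wedge\neg A)$-type reasoning is awkward, so instead I would use (CM) with the hypothesis $A>\bot$ together with an (RCK)-consequence, or follow the textbook derivation via $(A\lor B)>\cdots$ and (OR). (CSO) is then derived as the paper indicates: it is equivalent to (CM)+(RT), so given both directions — $(A>B)\wedge(B>A)$ lets one use (CM) to move between antecedents $A$ and $A\wedge B$, $B$ and $B\wedge A$, and (RT) to compose — one chains $(A>C)\Rightarrow (A\wedge B)>C \Rightarrow B>C$ through the equivalence $A\wedge B$ enforced by the two conditionals. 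Items 5--7 are instances: 5 is (CSO)-like transitivity of the ``$(X\lor Y)>X$'' relation and follows from (CSO) applied with $A\leftrightarrow$ suitable disjunctions plus (RCEA); 6 and 7 combine item 5 (or (CM)) with (DT) to pull out the material implication $C\to B$ resp.\ $B\to C$ in the consequent. The main obstacle I expect is bookkeeping rather than depth: the derivations of (DT) and (CSO) require several nested applications of (RCEA)/(RCK) on propositional tautologies and a case split on an auxiliary formula, and one must be scrupulous that every rewriting of a conditional's antecedent is justified by (RCEA) applied to a genuine propositional equivalence — a misstep here would silently use strengthening, which the logic rejects.
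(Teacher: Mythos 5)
The paper does not actually prove this proposition: it records the known equivalences ((DT) is equivalent to (OR), (RT) follows from (DT), (CSO) is equivalent to (CM)+(RT)) and defers items 5--7 wholesale to Kraus--Lehmann--Magidor. Your plan realizes exactly those equivalences as explicit Hilbert derivations, so it is compatible with, and strictly more informative than, the paper's treatment. The concrete steps you give are sound: the derivation of (DT) by splitting $A$ as $(A\wedge B)\lor(A\wedge\neg B)$ via (RCEA), handling the second disjunct vacuously with (ID)+(RCK) and recombining with (OR), is the standard argument; (RT) from (DT) via (R-And) and (RCK), and (CSO) as a chain $(A>C)\Rightarrow((A\wedge B)>C)\Rightarrow(B>C)$ using (CM) then (RT), are both correct. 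Two places are thinner than the rest. For (MOD) you hedge between routes; the one that works is $A>\bot\Rightarrow((A\lor B)\wedge A)>\bot$ by (RCEA), then $(A\lor B)>\neg A$ by (DT)+(RCK), then $(A\lor B)>B$ by (ID)+(R-And)+(RCK), and finally $B>\neg A$ by (CM)+(RCEA). For items 5--7, ``apply (CSO) with suitable disjunctions'' and ``combine item 5 with (DT)'' undersell the work: all three require routing through the auxiliary disjunct $A\lor B\lor C$, first establishing $(A\lor B\lor C)>(A\lor B)$ by (OR) and then $(A\lor B\lor C)>A$ by (RT), before (CM)/(DT) pull out the desired consequent (item 6 in particular needs the tautology $(A\lor B)\wedge((B\lor C)\to(C\to B))\to(C\to B)$ under an (RCK) step). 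Since these are precisely the items the paper itself outsources to the literature, this is a matter of incompleteness of detail rather than error; none of your proposed steps smuggles in strengthening or any other invalid principle.
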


The semantics of $ \PCL $ is usually defined in terms of preferential models, as explained in the Introduction. Here we define an alternative semantics in terms of neighbourhood models. 

\begin{definition}\label{def:neighbourhood_models}
	A \textit{neighbourhood model} is a structure $ \modelNeigh = \langle W, N, \llbracket \  \rrbracket  \rangle $
	where:
	\begin{itemize}
		\item $ W $ is a non empty set of elements, the \textit{possible worlds};
		\item $ N:  W \rightarrow \mathcal{P}(\mathcal{P}(W)  ) $ is the \textit{neighbourhood function}, which associates to each $ x \in W $ a set $ N(x)\subseteq \mathcal{P}(W) $, called a \textit{system of neighbourhood};
		\item $\llbracket \  \rrbracket: \mbox{\it Atm} \rightarrow \mathcal{P}(W) $ is the propositional evaluation.		
	\end{itemize}
	The elements of $ N(x) $ are called \textit{neighbourhoods}, and are denoted by lowercase Greek letters.
	For all $ x \in W $, we assume the neighbourhood function to satisfy the property of  \textit{non-emptiness}: For each $ \alpha \in N(x) $, $ \alpha $ is non-empty.	
\end{definition}


\begin{notation}
	The symbol $ \Vdash $ is used to denote the forcing (or truth) of a formula at a world of a model: $ x \Vdash B $ means that $B$ is true at $x$. Given a neighbourhood $\alpha$, 
	we use $ \alpha \fe B $  as a shorthand for \textit{there exists  $ y \in \alpha $ such that $ y \Vdash B $},
	and $ \alpha \fu B $ as a shorthand for  \textit{for all  $ y \in \alpha $ it holds that $ y \Vdash B $}.
\end{notation}

\noindent  Before giving its formal definition, we give an intuitive motivation of the truth condition for the conditional operator in neighbourhood semantics.  Suppose we want to define a conditional operator more fine-grained than the material implication, and suitable for an hypothetical, non-monotonic, or plausible interpretation. As a first attempt, we can  define a kind of strict implication, in analogy to the corresponding notion  in normal modal logic:
\begin{quote}
	(1) $x \Vdash A > B$ \textit{iff for all $\alpha \in N(x)$ it holds $\alpha \fu A \to B$}.
\end{quote}
However, this definition is not suitable for the conditional operator, as it would satisfy the unwanted properties of strengthening (or monotonicity), transitivity, and contraposition.
An equivalent, slightly redundant, formulation of  (1) consists in a restriction to neighbourhoods that contain $A$-worlds: 
\begin{quote}
	(2) $x \Vdash A > B$ \textit{iff for all $\alpha \in N(x)$, if $\alpha \fe A$ then $\alpha \fu A \to B$}. 
\end{quote}
Thus, for every $\alpha \in N(x)$, if $\alpha$ contains an $A$-world, we require  that $\alpha \fu A \to B$. The latter condition is too strong: in the intended interpretation, and in particular in the non-monotonic reading, the  conditional should tolerate exceptions. Thus, instead of requiring $A \to B$ to be verified by the \emph{whole} $\alpha$, we only demand the formula to be verified by a sub-neighbourhood $\beta$ of $\alpha$. 
\begin{quote}
	(3) $x \Vdash A > B$ \textit{iff for all $\alpha \in N(x)$, if $\alpha \fe A$ then there exists a $\beta \in N(x)$, with $\beta\subseteq \alpha$ such that  $\beta \fu A \to B$.}
\end{quote}
Here, however, there is still a problem: the condition on $\beta$ could be vacuously satisfied by choosing a $\beta$ that does not contain any $A$-world (at least whenever $A\not=\top$). To rule out this case, we modify (3) as follows:
\begin{quote}
	(4) $x \Vdash A > B$ \textit{iff for all $\alpha \in N(x)$, if $\alpha \fe A$ then there exists $\beta \in N(x)$, with $\beta\subseteq \alpha$ such that  $\beta \fe A$ and $\beta \fu A \to B$.}
\end{quote}
Definition (4) is the truth definition of conditional
adequate to formalize the logics of the preferential family.

\begin{definition}
	For any formula  $F\in  \fpcl$ and $x\in W$,  truth of a formula in a model, in symbols $ x \Vdash F $, is defined as follows. For atoms $p$, $ x \Vdash p $ if $x\in \llbracket p\rrbracket$; truth conditions for Boolean combinations are the standard ones; the truth condition for the  conditional operator is (4).
	
	We say that a formula $F$ is valid in $ \modelNeigh$  if for all $ x\in W$, $ x \Vdash F $. We say that a formula if valid in the class of all neighbourhood models (resp. in a class of models $\cal K$)  
	if for all neighbourhood models $\modelNeigh$ (resp. in $\cal K$) it holds that $F$ is valid in $\modelNeigh$; this will be denoted by  $\val F$ (resp. $\valid{K} F$).
\end{definition}

\begin{definition}\label{def:extensions_neigh}
	Extensions of the class of neighbourhood models are defined as follows: 
	\begin{itemize}[noitemsep]
		\item \textit{Normality}: For all $ x \in W$ it holds that $ N(x) \neq \emptyset $;
		\item \textit{Total reflexivity}: For all $ x \in W$ there exists $\alpha \in N(x) $ such that $x \in \alpha$;
		\item \textit{Weak centering}: For all $ x \in W$ and $\alpha \in N(x) $, $ x \in \alpha $;
		\item \textit{Centering}: For all $ x \in W $ and $ \alpha \in N(x) $, $x \in \alpha$ and  $\{x\} \in N(x)  $; 
		\item \textit{Uniformity}\footnote{The property of uniformity as we have defined it is sometimes called \emph{local uniformity}, to distinguish it from the following property, called \emph{uniformity}: for all $ x, y \in W $,  $ \bigcup \, N(x) = \bigcup \, N(y)$. However, the set of valid formulas in the class of models satisfying uniformity and local uniformity is the same. A similar remark applies to the property of absoluteness.  }: For all $ x \in W $ it holds that if $ y \in \alpha $ and $ \alpha \in N(x) $, then $ \bigcup \, N(x) = \bigcup \, N(y)$. 
		\item \textit{Absoluteness}:  For all $ x \in W$ it holds that if $ y \in \alpha $ and $ \alpha \in N(x) $, then $  N(x) =  N(y)$.
	\end{itemize}
	The extensions are respectively denoted by $\modelex{N} $, $ \modelex{T} $, $ \modelex{W} $, $ \modelex{C} $, $ \modelex{U} $ and $ \modelex{A} $. As happens with axioms, semantic conditions can be combined, yielding 15 classes of models: so $ \modelex{NT} $ is a neighbourhood model with normality and total reflexivity, $ \modelex{WA} $ is a neighbourhood model with weak centering and absoluteness, and so on.    
\end{definition}

Not all the extensions of the above table are proper conditional logics. We observe that  
\begin{itemize}
	\item[1.] $\mathbb{PCA} $ collapses to classical logic;  
	\item[2.] $\mathbb{PWA} $ collapses to {\bf S5}. 
\end{itemize}
We provide a proof of the above through the semantics, obtaining a collapse of models. This implies the collapse of logical systems, once completeness has been proved.

For 1, we prove that $N(x)=\{\{x\}\}$. Let $y\in\alpha$ and $\alpha\in N(x)$. By absoluteness, $N(x)=N(y)$. By centering, $\{x\}\in N(x)$ and 
$\{y\}\in N(y)$, so that $\{y\}\in N(x)$ and $x\in \{y\}$, whence $x=y$. It follows that there is only one possible world, and the forcing condition of the conditional collapses to the one of material implication.

For 2, we prove that $N(x)=\{S\}$, where $S$ is any set of worlds to which $x$ belongs. Let $\alpha, \beta\in N(x)$. We show that 
$\alpha=\beta$. Let $y\in\alpha$; then, by absoluteness $N(x)=N(y)$, so $\beta\in N(y)$, and by centering $y\in \beta$. We conclude $\alpha\subseteq \beta$. The other inclusion is proved in the same way. Moreover, from the fact that for any $y\in S$, $N(y)=\{S\}$ it follows that all the possible words are equivalent: thus, the forcing condition of a conditional $ A>B $ reduces to the truth condition of the strict implication $ \Box (A \rightarrow B) $.

\

\noindent By adding to $ \axiom{PCL} $ the axiom
	\begin{quotation}
		(CV) ~ $ ((A>C) \wedge \lnot (A>\lnot B)) \rightarrow ((A\wedge B)>C) $
	\end{quotation}
	we obtain logic $ \VV $, which is the basic system of Lewis' counterfactual logic. By adding the axiom to the other preferential logics, we get the family of counterfactual logics, $ \VV $ and extensions, introduced in \cite{lewis1973}. Lewis defined the semantics of counterfactual logics in terms of sphere models; and sphere models for $ \VV $ can be obtained by adding to neighbourhood models the following condition:
	\begin{quotation}
		\textit{Nesting}: For all $ \alpha,\beta \in N(x) $, either $ \alpha \subseteq \beta $ or $ \beta \subseteq \alpha $.	
	\end{quotation} 
	Thus, the family of Lewis' logics is by all means an extension of the preferential systems, and the proof theoretic and model theoretic methods exposed in the following sections can be (more or less modularly) extended to cover Lewis' logics. 

\section{Soundness and completeness of neighbourhood models}

We now  prove soundness and completeness of the classes of models with respect to the axioms of $ \PCL $ and its extensions. 

\subsection{Soundness}

\begin{theorem}[Soundness]
	\label{ax_soundness}
	For $ F \in \fpcl $, $ \axiom{P} $ axiom system for a preferential logic $ \mathbb{P} $ and $ \mathcal{P} $ the corresponding class of neighbourhood models, it holds that if $\derivable{P} F$, then $ \vDash_{\mathcal{P}} F $.
\end{theorem}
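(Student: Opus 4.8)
The plan is to prove soundness by a routine induction on the height of the derivation in $\axiom{P}$, following the standard Hilbert-style soundness argument. The base case splits into propositional axioms, which are immediate from the standard Boolean truth conditions, and the conditional-specific axioms. For the inductive step one checks that the rules (RCEA), (RCK), and modus ponens preserve validity: modus ponens is trivial, (RCEA) follows because the truth condition (4) for $A > C$ only refers to $A$-worlds and $(A\to C)$-worlds, so if $A$ and $B$ are true in exactly the same worlds of every model (which is what $\valid{P}A\leftrightarrow B$ gives us) then $x\Vdash A>C$ iff $x\Vdash B>C$; and (RCK) follows because if $A\to B$ is valid then $\beta\fu A\to(A\to B)$ automatically upgrades any witness for $C>A$ to a witness for $C>B$, since $\beta\fu C\to A$ together with $\beta\fu A\to B$ (globally) yields $\beta\fu C\to B$, and the non-vacuity clause $\beta\fe C$ is unchanged.

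The heart of the proof is verifying that each non-propositional axiom is valid in the appropriate class of models. I would organise this as a sequence of short semantic lemmas, one per axiom. For (ID) $A>A$: given $\alpha\in N(x)$ with $\alpha\fe A$, take $\beta=\alpha$ itself; then $\beta\subseteq\alpha$, $\beta\fe A$, and $\beta\fu A\to A$ trivially. For (R-And): from witnesses $\beta_1\subseteq\alpha$ for $A>B$ and $\beta_2\subseteq\alpha$ for $A>C$ one cannot simply intersect (intersections need not be neighbourhoods), so instead apply the truth condition of $A>C$ to the neighbourhood $\beta_1$ (which is $\subseteq\alpha$ and contains an $A$-world) to get $\beta_2\subseteq\beta_1$ with $\beta_2\fe A$ and $\beta_2\fu A\to C$; since $\beta_2\subseteq\beta_1$ we also have $\beta_2\fu A\to B$, hence $\beta_2\fu A\to(B\wedge C)$, and $\beta_2\subseteq\alpha$. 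The same ``nest the witnesses'' trick handles (CM), using $\beta_2\fe A$ and $\beta_2\fu A\to B$ to conclude $\beta_2\fe A\wedge B$ and $\beta_2\fu (A\wedge B)\to C$. For (OR): given $\alpha\in N(x)$ with $\alpha\fe A\vee B$, split on whether $\alpha$ contains an $A$-world or a $B$-world, apply the corresponding conjunct to obtain a witness $\beta$, and observe that $\beta\fu A\to C$ (resp. $B\to C$) together with $\beta\fe A\vee B$ — note one must also check $\beta\fu B\to C$ on the other side, which may require applying the other conjunct to $\beta$ again, yielding a sub-witness on which both $A\to C$ and $B\to C$, hence $(A\vee B)\to C$, hold.

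For the extension axioms I would check each semantic condition from Definition~\ref{def:extensions_neigh} against the corresponding axiom: (N) against normality, using $N(x)\neq\emptyset$ and non-emptiness to see $\top>\bot$ fails; (T) against total reflexivity, since if $x\Vdash A$ and $x\in\alpha\in N(x)$ then any witness $\beta$ for $A>\bot$ would satisfy $\beta\fe A$ and $\beta\fu A\to\bot$, impossible — wait, more carefully, $\alpha\fe A$ so a witness exists and contradicts; (W) against weak centering, since $x\in\alpha$ for all $\alpha\in N(x)$ forces the witness $\beta$, which has $\beta\subseteq\alpha$ but also — here one uses that $x\Vdash A$, $x\in$ every neighbourhood, so $x\in\beta$, $\beta\fu A\to B$ gives $x\Vdash B$; (C) against centering, similarly using $\{x\}\in N(x)$; and (U$_1$), (U$_2$), (A$_1$), (A$_2$) against uniformity and absoluteness, where one exploits that the truth value of a formula of the form $D>E$ at $x$ depends only on $N(x)$ (for absoluteness) or only on $\bigcup N(x)$ together with the neighbourhood structure restricted to it (for uniformity), so the formula propagates to all worlds reachable through neighbourhoods.

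The main obstacle is the cluster of axioms (OR), (CM), (R-And) where the absence of closure of neighbourhoods under intersection must be circumvented by the ``iterate the witness downward'' argument, and the uniformity axioms (U$_1$), (U$_2$), whose validity proofs require a careful bookkeeping of how nested applications of the conditional truth condition stay inside $\bigcup N(x)=\bigcup N(y)$; once the right auxiliary observation (truth of conditionals is determined locally by the neighbourhood data around $x$) is isolated, these become routine, but stating and using it cleanly is the delicate point. All the remaining cases are short and mechanical.
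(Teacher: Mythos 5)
Your proposal is correct and follows essentially the same route as the paper: validity of each axiom is checked directly against truth condition (4), with the witness-nesting argument for (CM), (R-And) and (OR) circumventing the lack of closure under intersections, and the rules (RCEA), (RCK), modus ponens are shown to preserve validity. If anything, your treatment of (OR) is slightly more careful than the paper's worked example, since you explicitly note the case where the first witness $\beta$ contains no $B$-world (so the second conjunct cannot be applied and $\beta\fu B\to C$ holds vacuously instead), a case the paper's write-up glosses over.
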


\begin{proof}
	The proof consists in showing that the axioms are valid, and that the inference rules preserve validity. By means of example, we prove soundness of axioms (CM), (OR) and (U$_1$).
	
	\noindent (CM) \, $ ((A>B) \wedge (A>C)) \rightarrow ((A\wedge B)>C) $.  Consider an arbitrary neighbourhood model $ \modelNeigh$ and an  arbitrary  world $x$, and suppose that $x$ forces the antecedent of the implication.  We show that $x$ forces the succedent. The assumption means that:
	\begin{itemize}[noitemsep]
		\item [1.] $ \modelNeigh, x \Vdash A>B $, i.e., if there exists $ \alpha \in N(x) $ such that $ \alpha \fe A $, then there exists 
		$ \beta \subseteq \alpha  $ such that $ \beta \fe A $ and $ \beta \fu A\rightarrow B $;
		\item [2.] $\modelNeigh, x \Vdash A>C $, i.e., if there exists $ \alpha \in N(x) $ such that $ \alpha \fe A $, then there exists  
		$ \gamma \subseteq \alpha  $ such that $ \gamma \fe A $ and $ \gamma \fu A\rightarrow C $.
	\end{itemize} 
	Suppose that there is $\alpha\in N(x) $ such that $ \alpha \fe A\wedge B $; in particular, $ \alpha \fe A$ so by 1 we have that there is $ \beta \subseteq \alpha  $ such that $ \beta \fe A $ and $ \beta \fu A\rightarrow B $. By 2 from $ \beta \fe A $, we have that there is 
	$\gamma \subseteq \beta  $ such that $ \gamma \fe A $ and $ \gamma \fu A\rightarrow C $. Since $\gamma\subseteq \beta  $ and $\gamma\fe A$, by $\beta \fu A\rightarrow B $ we get $\gamma\fe A\wedge B $. From $ \gamma \fu A\rightarrow C $, a fortiori we have $ \gamma \fu A\wedge B\rightarrow C $, so we have proved that $x\Vdash A\wedge B> C$. 
	
	\noindent (OR) \, $( (A>C) \wedge (B>C)) \rightarrow ((A\lor B) >C) $. Suppose there is a neighbourhood model which satisfies the antecedent, i.e.
	\begin{itemize}[noitemsep]
		\item [1.] $ \modelNeigh, x \Vdash A>B $ , i.e., if there exists $ \alpha \in N(x) $ such that $ \alpha \fe A $, then there exists $ \beta \subseteq \alpha  $ such that $ \beta \fe A $ and $ \beta \fu A\rightarrow B $;
		\item [2.] $ \modelNeigh, x \Vdash B>C $, , i.e., if there exists $ \alpha'\in N(x) $ such that $ \alpha' \fe B $, then there exists $ \gamma \subseteq \alpha'  $ such that $ \gamma \fe B $ and $ \gamma \fu B\rightarrow C $.
	\end{itemize}
	Our claim is $x\Vdash( A\lor B)> C$. Assume there is $\alpha'' \in N(x) $ such that $ \alpha \fe A\lor B $. Then either  $ \alpha \fe A $ or $ \alpha \fe B $. In the first case we use 1 and obtain that there is $ \beta \subseteq \alpha'' $ such that $ \beta \fe A $ and $ \beta \fu A\rightarrow C $. Then from 2 (with $\beta$ in place of $\alpha'$) we obtain that there is $\gamma \subseteq \beta$ such that 
	$\gamma\Vdash B$ (and a fortiori  $\gamma\Vdash A\lor B$) and $ \gamma \fu B\rightarrow C $. Since $\gamma\subseteq \beta\subseteq \alpha$, by 1 we have $\gamma\fu A\rightarrow C$, and a fortiori $\gamma\fu (A\lor B)\rightarrow C$. The second case is dealt with in a similar way, so we conclude $x\Vdash (A\lor B)> C$.
	
	\noindent (U$ _1 $) \, $ (\lnot A>\bot) \rightarrow (\lnot (\lnot A >\bot) >\bot )$. Suppose there is a neighbourhood model with local uniformity that verifies the antecedent, i.e. 
	\begin{itemize}[noitemsep]
		\item [1.] $ \modelex{U}, x \Vdash \lnot A> \bot $, , i.e., if there exists $ \alpha \in N(x) $ such that $ \alpha \fe \lnot A $, then there exists $ \beta \subseteq \alpha  $ such that $ \beta \fe \lnot A $ and $ \beta \fu \lnot A\rightarrow \bot $.
	\end{itemize}
	We claim that if there exists $ \alpha' \in N(x) $ such that $ \alpha' \fe \lnot (\lnot A>\bot )$, then there is $\gamma\subseteq \alpha'  $ such that $ \gamma\fe \lnot (\lnot A>\bot )$ and  $ \gamma\fu \lnot \lnot (\lnot A>\bot )$. The latter two give a contradiction, so we need to prove that the existence of the above $\alpha'$ leads to a contradiction.
	
	Assume  $ \alpha' \fe \lnot (\lnot A>\bot )$, i.e. there is $y\in  \alpha'$ such that $y \nVdash\lnot A >\bot$. Then there is $\delta\in N((y)$ such that $\delta\fe\lnot A$. Since  $ \bigcup N(x) = \bigcup N(y) $ by the condition of uniformity, there is $\alpha''\in N(x)$ such that $z\in \alpha''$ and $\alpha''\fe \lnot A$. By 1, there is $ \beta \subseteq \alpha''  $ such that $ \beta \fe \lnot A $ and $ \beta \fu \lnot A\rightarrow \bot $, so we have the desired contradiction.	
\end{proof}

\subsection{Completeness of $ \PCL $}

\noindent We here prove the completeness of of $\PCL$ with respect to the neighbourhood semantics (extensions are treated in Subsection 3.3).

Generally speaking, proving completeness for the axiom systems of $ \PCL $ and its extensions seems to be quite an arduous task. Burgess \cite{burgess1981quick} was the first to provide a completeness proof for $ \PCL $, using preferential models. His proof in the mentioned paper, condensed in a few pages, is quite intricate and  not so easy to grasp. In his thesis, Veltman \cite{veltman1985logic} gave a proof of strong completeness of $\PCL$ with respect to preferential semantics. This result is far from elementary.  In  \cite{halpern1994complexity} Halpern and  Friedman sketched a completeness proof for $ \PCL $, claiming the proof to be similar to Burgess' proof. Moreover, they state that the proof can cover extensions of $ \PCL $, but the proof for extensions is postponed to a full paper which never appeared. 

More recently, in \cite{giordano2009tableau}, the completeness of the axiomatization of $ \PCL $ and its extensions is proved with respect to classes of preferential models, assuming the Limit assumption.

For Lewis' sphere models, a direct completeness result was given by Lewis in \cite{lewis1973}: he proved that the axioms of $ \VV $ and extensions are sound and complete with respect to sphere models. However, the proof heavily relies on the connective of comparative plausibility, which is definable in $ \VV $ but not in $\PCL$.

To the best of our knowledge, no completeness result is known for the axioms of $ \PCL $ and its extensions with respect to neighbourhood models. 
The proofs in the rest of this section cover $\PCL$ and all its extensions, except those ones congaing weak centering (and not containing centering). The proofs make use of 
some notions and lemmas from \cite{giordano2009tableau}.


\

We follow the standard strategy: in order to prove the completeness of an axiom system $\axiom{K}$ with respect to a class of models $\modelex{K}$, we define a model $\canonicalex{P}$ and we prove that:
\begin{enumerate}
	\item $\canonicalex{K}$ is \emph{canonical}, meaning that for any formula $F\in \fpcl$, $\derivable{K} F$ if and only if  $ F $ is valid in $ \canonicalex{K} $;
	\item $\canonicalex{K} \in \modelex{K}$.
\end{enumerate}
From these two facts the completeness of $\axiom{K}$ with respect the class $\modelex{K}$ immediately follows. For $\PCL$ the class $\modelex{K}$ will be the class of all neighbourhood models; for extensions, $\modelex{K}$ will be the class of models extended with the properties detailed in Definition \ref{def:extensions_neigh}.

As usual, the model is be built by considering \emph{maximal consistent sets} of formulas. We start by recalling standard definitions and properties. The notion of (in-)consistency and subsequent definitions and lemmas on maximal consistent sets are relative to some axiom system $\axiom{K}$. 

\begin{definition}\label{def:maxc}
	Given a set of formulas $ S \in \fpcl $, we say that $ S $ is \emph{inconsistent} if it has a finite subset $ \{B_1, \dots , B_n \} \subseteq S $ such that $ \derivable{K} (B_1 \wedge \dots \wedge B_n)\rightarrow \bot $. We say that $ S $ is \emph{consistent} if it is not inconsistent. We say that $ S $ is \emph{maximal consistent} if $ S $ is consistent and for any formula $ A \notin S $, $ S \cup \{A\} $ is inconsistent.  We denote by $X,Y,Z,\dots$ the maximal consistent sets and by $\Maxc$ the set of all maximal consistent sets over a fixed language. 
\end{definition}
We assume all standard properties of $\Maxc$ sets, in particular the following:

\begin{lemma}\label{lindem} ~
	\begin{itemize}
		\item [$a)$] For $ S $ set of formulas, $S$ is consistent if and only if there exists $Z\in \Maxc$  such that $S\subseteq Z$.
		\item [$b)$] For $ A $ formula, $\valid{K} A$ if and only if for all $ Z\in \Maxc$, $A\in Z$.
	\end{itemize}
\end{lemma}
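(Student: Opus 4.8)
The plan is to prove a) by the standard Lindenbaum extension and to reduce b) to a purely syntactic characterization of $\Maxc$-membership, after which the two directions of b) follow from soundness and from the completeness theorem that the remainder of this section establishes. For a), I would fix an enumeration $A_0, A_1, \dots$ of $\fpcl$ (the language being countable) and build an increasing chain $S = S_0 \subseteq S_1 \subseteq \cdots$, setting $S_{n+1} = S_n \cup \{A_n\}$ whenever this set is consistent and $S_{n+1} = S_n$ otherwise; I then put $Z = \bigcup_n S_n$. The key point is that (in)consistency is finitary: by Definition \ref{def:maxc} a set is inconsistent exactly when some finite subset proves $\bot$. Hence each $S_n$ is consistent by construction, any finite subset of $Z$ already lies in some $S_n$ so $Z$ is consistent, and maximality holds because every $A_n \notin Z$ was rejected precisely because $S_n \cup \{A_n\}$ is inconsistent. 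Since $S \subseteq Z$, this gives one direction; the converse is immediate, as every subset of a consistent set is consistent. (In the uncountable case the enumeration is replaced by a Zorn's lemma argument on the poset of consistent extensions of $S$.)

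For b), I would first establish the syntactic bridge
\[ \derivable{K} A \quad\Longleftrightarrow\quad A \in Z \text{ for all } Z \in \Maxc. \]
For the forward implication, if $\derivable{K} A$ but $A \notin Z$ for some $Z$, then by maximality there is a finite $\{B_1, \dots, B_n\} \subseteq Z$ with $\derivable{K} (B_1 \wedge \cdots \wedge B_n \wedge A) \to \bot$; combined with $\derivable{K} A$ this yields $\derivable{K} (B_1 \wedge \cdots \wedge B_n) \to \bot$, contradicting the consistency of $Z$. For the converse I argue contrapositively: if $\not\derivable{K} A$ then $\{A \to \bot\}$ is consistent (otherwise $\derivable{K} (A \to \bot) \to \bot$, whence $\derivable{K} A$ by classical reasoning over the propositional base), so by a) it extends to some $Z \in \Maxc$ with $A \to \bot \in Z$, and then $A \notin Z$ since $Z$ is consistent.

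Statement b) equates the semantic condition $\valid{K} A$ with $\Maxc$-membership, so through the bridge it is equivalent to $\valid{K} A \Leftrightarrow \derivable{K} A$. The implication $\Leftarrow$ is exactly Theorem \ref{ax_soundness} (Soundness), which is already available. The implication $\valid{K} A \Rightarrow \derivable{K} A$ is completeness itself, and this is the main obstacle: it is not a genuinely ``standard'' property of $\Maxc$ but precisely the content of the canonical-model construction $\canonicalex{K}$ carried out in the rest of this section, namely the canonicity of $\canonicalex{K}$ together with $\canonicalex{K} \in \modelex{K}$. Accordingly, the elementary content of the lemma is a) and the syntactic bridge, while the non-trivial half of b) is a repackaging of the section's completeness result and is discharged only once the canonical model has been shown to be canonical and to belong to $\modelex{K}$.
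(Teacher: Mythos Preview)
Your proof of a) is the standard Lindenbaum construction and matches what the paper does (the paper just says ``proved by means of an inductive construction'').

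For b), you have correctly diagnosed a real oddity in the statement. As printed, b) uses $\valid{K}$ (semantic validity), but this is a typo: the intended statement is $\derivable{K} A$ iff $A\in Z$ for all $Z\in\Maxc$. Two pieces of evidence. First, the paper's own proof of b) says it is obtained from a) ``by taking $S=\{\lnot A\}$, by contraposition and completeness of all $Z$''; applying a) to $\{\lnot A\}$ yields exactly your ``syntactic bridge'' and involves no semantics whatsoever. Second, the paper invokes this lemma later (just before the Truth Lemma) precisely in the form ``for all $Z$, $A\in Z$ iff $\derivable{K} A$'' in order to \emph{derive} completeness; reading b) semantically would make that use circular, as you correctly point out.

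So your argument for the syntactic bridge is exactly the paper's intended proof of b), and your extra layer (soundness plus the forward completeness direction) is an artefact of taking the typo at face value. There is no genuine obstacle: simply read $\valid{K}$ as $\derivable{K}$ in the statement, keep your bridge argument, and drop the appeal to the canonical-model construction.
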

\begin{proof}
	The direction \emph{only if} of $a)$ is  the standard \emph{Lindembaum lemma}, proved by means of an inductive construction. Property $b)$ is a  sub-case of $a)$, obtained by taking $S = \{\lnot A\}$, by controposition and completeness of all $Z\in \Maxc$ (either $A\in Z$ or $\neg A\in Z$). 
\end{proof}

We will (shortly) define the worlds of the canonical model $ \canonicalex{K} $ as the set $\{(X, A) \mid X\in \Maxc \ \mbox{and} \ A\in \fpcl \ \mbox{and} \ A\in X \}$. 
Thanks to Lemma \ref{lindem}, in order to prove that the  $ \canonicalex{K} $ is indeed \emph{canonical}, we will only have to show that for any formula $F\in\fpcl$ and for any world $(X, A)$, it holds that:
\begin{quotation} 
	$(\textit{Truth Lemma})$ ~ $F\in X$ if and only if  $(X, A) \Vdash F$.
\end{quotation}
It easy to see that canonicity of $ \canonicalex{K} $ follows: $A$ is valid in $ \canonicalex{K} $ if and only if for all $ Z\in \Maxc$, $A\in Z$ (by the Truth Lemma and definition of the worlds),  if and only if  $\derivable{K} A$ (by Lemma \ref{lindem}). 

Before providing the canonical model construction, we introduce some additional definitions and lemmas.

\begin{definition}
	Let $ X \in \Maxc $. The set of \emph{conditional consequences} of a formula $B \in \fpcl$ is defined as:
	$ \cons{X}{B} = \{C \in \fpcl \mid B>C \in X \} $.
\end{definition}


\begin{lemma}\label{lemma:maximal_consequences}
	The following hold: 
	\begin{enumerate}
		\item $ B \in \cons{X}{B} $;
		\item If $\cons{X}{B} \subseteq Y $ and $B>C \in X $, then $ C \in Y$; 
		\item $B>C \in X$ iff for all $ Y$, $\cons{X}{B}\subseteq Y $ implies $C \in Y$.
	\end{enumerate}
	
\end{lemma}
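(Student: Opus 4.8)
The plan is to prove the three items of Lemma \ref{lemma:maximal_consequences} largely by unwinding the definition of $\cons{X}{B}$ together with standard properties of maximal consistent sets and the axioms of $\PCL$ collected in Proposition \ref{lemma:derivable_ax}. First I would observe that item 1 is immediate: since $B>B\in X$ by axiom (ID) (and the fact that $X$ is closed under derivable formulas, as it is maximal consistent), $B\in\cons{X}{B}$ by definition.

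For item 2, I would argue directly. Assume $\cons{X}{B}\subseteq Y$ and $B>C\in X$. Then by definition of $\cons{X}{B}$ we have $C\in\cons{X}{B}$, hence $C\in Y$. So this is essentially definitional once the hypothesis is spelled out; the only subtlety is making sure we are using the right reading of the hypothesis, namely that every conditional consequence of $B$ (relative to $X$) lies in $Y$.

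For item 3, the left-to-right direction is just item 2: if $B>C\in X$ and $\cons{X}{B}\subseteq Y$, then $C\in Y$. The interesting direction is right-to-left: suppose $B>C\notin X$; I want to produce a maximal consistent $Y$ with $\cons{X}{B}\subseteq Y$ but $C\notin Y$. The natural candidate is a maximal consistent extension of $\cons{X}{B}\cup\{\lnot C\}$, which exists by the Lindenbaum lemma (Lemma \ref{lindem}$a)$) provided that set is consistent. So the crux is to show $\cons{X}{B}\cup\{\lnot C\}$ is consistent. If it were not, there would be finitely many $C_1,\dots,C_n\in\cons{X}{B}$ with $\derivable{K}(C_1\wedge\dots\wedge C_n\wedge\lnot C)\rightarrow\bot$, i.e. $\derivable{K}(C_1\wedge\dots\wedge C_n)\rightarrow C$. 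Now each $B>C_i\in X$; using axiom (R-And) iteratively we get $B>(C_1\wedge\dots\wedge C_n)\in X$, and then using (RCK) (monotonicity of the conditional in its consequent, applied to $\derivable{K}(C_1\wedge\dots\wedge C_n)\rightarrow C$) we obtain $B>C\in X$, contradicting our assumption. Hence the set is consistent, and the desired $Y$ exists.

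The main obstacle — though it is a mild one — is just being careful with the bookkeeping in the consistency argument of item 3: packaging the finite conjunction correctly, invoking (R-And) the right number of times to push the conjunction inside the conditional, and then applying (RCK) to the propositional derivation $\derivable{K}(C_1\wedge\dots\wedge C_n)\rightarrow C$. Everything else is a routine application of the definitions, closure of maximal consistent sets under derivability, and the Lindenbaum lemma. No genuinely new idea is needed beyond recognizing that $\cons{X}{B}$ behaves like a ``theory generated by $B$ under $>$'' and that (ID), (R-And) and (RCK) are exactly the closure conditions making the syntactic argument go through.
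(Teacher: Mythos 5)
Your proposal is correct and follows essentially the same route as the paper: items 1 and 2 are handled definitionally (via (ID) and closure of maximal consistent sets), and for the nontrivial direction of item 3 you reduce to the consistency of $\cons{X}{B}\cup\{\lnot C\}$ and use (R-And) together with (RCK) to derive $B>C\in X$ from the inconsistency, exactly as in the paper. The only cosmetic difference is that you argue by contraposition where the paper argues the $\Leftarrow$ direction directly; the underlying argument is identical.
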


\begin{proof}
	We prove only  direction $\Leftarrow$ of statement 3. By hypothesis, there is no $Z \in \Maxc $ such that $ \cons{X}{B} \cup \{\lnot C \}\subseteq Z$. By lemma  \ref{lindem},  $\cons{X}{B} \cup \{\lnot C \}$ is inconsistent, and there must be some $ D_1, \dots, D_n \in \cons{X}{B}$ such that $\derivable{K}  (D_1 \wedge \dots \wedge D_n) \rightarrow C$. Thus, by (RCK) and (R-And), $\derivable{K} ((B>D_1) \wedge \dots \wedge (B>D_n)) \rightarrow (B>C)$. Since $ (B>D_1), \dots , (B>D_n) \in X$, also $B>C \in X $.   
\end{proof}

\begin{definition}\label{def:lesser_equal}
	Let $X \in \Maxc$, $A, B \in \fpcl $. Define $A \clesser{X} B $ if  $ (A \lor B) > A \in X$.
\end{definition}

\begin{proposition} \label{prop:ref_tr}
	The relation $ \clesser{X} $ is reflexive and transitive. 
\end{proposition}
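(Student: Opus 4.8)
The plan is to unwind the definition $A \clesser{X} B$ iff $(A \lor B) > A \in X$ and reduce reflexivity and transitivity to derivability facts in $\axiom{PCL}$, using the theorems collected in Proposition \ref{lemma:derivable_ax} together with Lemma \ref{lindem}(b) (a valid formula, hence a theorem, belongs to every maximal consistent set, and maximal consistent sets are closed under modus ponens and conjunction).

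For reflexivity, I would argue as follows. We need $A \clesser{X} A$, i.e. $(A \lor A) > A \in X$. Since $A \lor A \leftrightarrow A$ is a propositional tautology, by (RCEA) we get $\derivable{PCL} ((A \lor A) > A) \leftrightarrow (A > A)$, and by (ID) $\derivable{PCL} A > A$; hence $\derivable{PCL} (A \lor A) > A$, so this formula lies in $X$ by closure of maximal consistent sets under derivable consequences (or directly by Lemma \ref{lindem}(b)).

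For transitivity, suppose $A \clesser{X} B$ and $B \clesser{X} C$, that is, $(A \lor B) > A \in X$ and $(B \lor C) > B \in X$. We must show $(A \lor C) > A \in X$. The key observation is that this is exactly the shape of the derivable formula in item 5 of Proposition \ref{lemma:derivable_ax}:
\[
\derivable{PCL} \big( ((A\lor B) > A) \wedge ((B\lor C) > B) \big) \rightarrow ((A\lor C) > A).
\]
Since $X$ is maximal consistent, it contains the conjunction of its two members $(A \lor B) > A$ and $(B \lor C) > B$, and it contains the above theorem; closure under modus ponens then yields $(A \lor C) > A \in X$, i.e. $A \clesser{X} C$. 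The only real work, therefore, is invoking Proposition \ref{lemma:derivable_ax}(5); I expect the main (and only mild) obstacle to be making sure the auxiliary derivable theorems of Proposition \ref{lemma:derivable_ax} are genuinely available, which the excerpt has already asserted. No case analysis or semantic argument is needed — everything is syntactic and routine given the cited theorems.
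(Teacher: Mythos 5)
Your proof is correct and takes essentially the same syntactic route as the paper, which derives reflexivity from (ID) together with (OR) and transitivity from the pre-established theorems of Proposition~\ref{lemma:derivable_ax}. Two small remarks: your use of item~5 for transitivity is in fact the right citation --- the paper's text points to item~1 (RT), which appears to be a slip, since item~5 is literally the transitivity schema for $\clesser{X}$ --- and your reflexivity argument via (RCEA) applied to the tautology $A \lor A \leftrightarrow A$ is an equally routine alternative to the paper's instantiation of (OR) with $B = C = A$.
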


\begin{proof}
	Reflexivity follows from axiom (ID) and (OR).
	Transitivity immediately follows from 1 of Lemma \ref{lemma:derivable_ax}. 
\end{proof}

\begin{proposition} [From \cite{giordano2009tableau}] \label{prop:relations}
	If $ A \clesser{X} B $, $ \cons{X}{A} \subseteq Y$ and $B \in Y $, then $ \cons{X}{B} \subseteq Y $.
\end{proposition}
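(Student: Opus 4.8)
The plan is to prove Proposition \ref{prop:relations} by combining Definition \ref{def:lesser_equal} (which unpacks $A \clesser{X} B$ as $(A\lor B)>A \in X$) with the characterization of conditional consequences from Lemma \ref{lemma:maximal_consequences}, in particular item 2. We are given that $A\clesser{X} B$, that $\cons{X}{A}\subseteq Y$, and that $B\in Y$; we must show $\cons{X}{B}\subseteq Y$, i.e.\ that every $C$ with $B>C\in X$ lies in $Y$. So fix such a $C$ and aim to deduce $C\in Y$.

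First I would use the derivable theorems collected in Proposition \ref{lemma:derivable_ax} to manufacture, from the two ingredients $(A\lor B)>A \in X$ and $B>C \in X$, a conditional of the form $A > (\text{something involving } B \text{ and } C)$ that belongs to $X$. The natural candidate is item 7 of Proposition \ref{lemma:derivable_ax}: $((A\lor B)>A) \wedge (B>C) \rightarrow (A>(B\rightarrow C))$. Since $X$ is a maximal consistent set closed under derivable implications and contains both conjuncts of the antecedent, it follows that $A>(B\rightarrow C) \in X$. Now $B\rightarrow C \in \cons{X}{A}$ by definition of conditional consequences, and since $\cons{X}{A}\subseteq Y$ we get $B\rightarrow C\in Y$. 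Combined with the hypothesis $B\in Y$ and the fact that $Y$ is maximal consistent (hence closed under modus ponens), we conclude $C\in Y$, as desired.

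Alternatively, and perhaps more in the spirit of Lemma \ref{lemma:maximal_consequences}, one can argue as follows: from $(A\lor B)>A\in X$ we also get, via item 6 of Proposition \ref{lemma:derivable_ax} or directly, control over $\cons{X}{A\lor B}$ versus $\cons{X}{A}$; but the route through item 7 is the cleanest since it directly yields the implication $B\to C$ we need to transport into $Y$. Either way, the key move is to realize that $\cons{X}{A}\subseteq Y$ is a strong hypothesis: it lets us push any conditional consequence of $A$ (in $X$) into $Y$, so the whole proof reduces to showing $A>(B\rightarrow C)\in X$, which is a purely syntactic consequence of the PCL axioms applied to $(A\lor B)>A$ and $B>C$.

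I do not expect a serious obstacle here; the statement is essentially a repackaging of Proposition \ref{lemma:derivable_ax}(7) at the level of maximal consistent sets. The only point requiring a little care is making sure the correct derived theorem is invoked — one must check that item 7, $((A\lor B)> A) \wedge (B> C) \rightarrow (A> (B\rightarrow C))$, is exactly what bridges the gap, rather than item 5 or 6 — and that the hypothesis $B\in Y$ is genuinely used (it is: it discharges the antecedent of $B\to C$ inside $Y$). Since the proposition is attributed to \cite{giordano2009tableau}, I would also note that our proof is phrased entirely within the present neighbourhood-based axiomatic setting, using only the $\clesser{X}$ relation and the conditional-consequence machinery already established.
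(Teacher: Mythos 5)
Your proof is correct and follows essentially the same route as the paper: fix $C$ with $B>C\in X$, derive $A>(B\to C)\in X$ from $(A\lor B)>A\in X$ and $B>C\in X$ via the derived theorem $((A\lor B)>A)\wedge(B>C)\to(A>(B\to C))$, push $B\to C$ into $Y$ through $\cons{X}{A}\subseteq Y$, and close with $B\in Y$ and modus ponens. You correctly identify item 7 of Proposition \ref{lemma:derivable_ax} as the theorem needed; the paper's text cites ``Axiom 6'' at this point, which appears to be a typo for item 7.
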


\begin{proof}
	Let  $B>C \in X $ (thus, $ C \in \cons{X}{B}$). Our goal is to show that $ C \in Y $. 
	By hypothesis, we know that $ (A\lor B)>A \in X $. From Axiom 6 of Proposition \ref{lemma:derivable_ax} it follows that $A>(B\rightarrow C) \in X$. Thus, $B\rightarrow C \in \cons{X}{A} $ and, by hypothesis $B\rightarrow C \in Y $ and $B \in Y $. Thus, $ C \in Y$. 
\end{proof}

\begin{proposition} \label{prop:relations_important}
	If $ A \clesser{X} B \clesser{X} C $, $ \cons{X}{A} \subseteq Y$ and $C \in Y $, then $\cons{X}{B} \subseteq Y $.
\end{proposition}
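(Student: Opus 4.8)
The plan is to reduce the statement to Proposition~\ref{prop:relations} by first establishing that $B \in Y$. Once $B \in Y$ is known, Proposition~\ref{prop:relations} applied to $A \clesser{X} B$, $\cons{X}{A} \subseteq Y$ and $B \in Y$ delivers $\cons{X}{B} \subseteq Y$ immediately, which is exactly the conclusion sought.

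To obtain $B \in Y$ I would exploit the two hypotheses $A \clesser{X} B$ and $B \clesser{X} C$, which by Definition~\ref{def:lesser_equal} mean $(A\lor B) > A \in X$ and $(B\lor C) > B \in X$. Applying Axiom~6 of Proposition~\ref{lemma:derivable_ax}, namely $((A\lor B)> A) \wedge ((B\lor C)>B) \rightarrow (A> (C\rightarrow B))$, I get $A > (C \rightarrow B) \in X$, hence $C \rightarrow B \in \cons{X}{A}$. Since $\cons{X}{A} \subseteq Y$ by hypothesis, $C \rightarrow B \in Y$; combining this with $C \in Y$ and the fact that $Y$ is a maximal consistent set (in particular closed under modus ponens) gives $B \in Y$.

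Then I would conclude by invoking Proposition~\ref{prop:relations} as described above. There is no real obstacle: the only point requiring care is identifying the right derived principle, Axiom~6 of Proposition~\ref{lemma:derivable_ax}, which is precisely tailored to ``shift'' a $\clesser{X}$-chain into a conditional consequence of the form $A > (C \to B)$; the rest is routine manipulation of maximal consistent sets. One could alternatively first derive $A \clesser{X} C$ via transitivity (Axiom~5 of Proposition~\ref{lemma:derivable_ax}) and try to apply Proposition~\ref{prop:relations} to the pair $A \clesser{X} C$, $C \in Y$, but that only yields $\cons{X}{C} \subseteq Y$, not $\cons{X}{B} \subseteq Y$, so recovering $B \in Y$ first seems unavoidable and the route above is the cleanest.
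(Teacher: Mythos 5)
Your proposal is correct and follows essentially the same route as the paper: derive $A > (C\rightarrow B) \in X$ from the two $\clesser{X}$ hypotheses, push $C\rightarrow B$ into $Y$ via $\cons{X}{A}\subseteq Y$, use $C\in Y$ to get $B\in Y$, and then invoke Proposition~\ref{prop:relations}. You even cite the right derived principle (item~6 of Proposition~\ref{lemma:derivable_ax}), whereas the paper's text refers to item~5, which is evidently a slip since the formula actually used is that of item~6.
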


\begin{proof}
	By hypothesis, $(A \lor B) > A \in X $ and $(B\lor C)> B \in X$. By Axiom 5 of Proposition \ref{lemma:derivable_ax}, $A>(C\rightarrow B) \in X $. Thus, $ C\rightarrow B \in X^A$, and $C\rightarrow B \in Y $. Since $C \in Y $, we have $B \in Y $. Thus, we have that $A \clesser{X} B $,  $ \cons{X}{A} \subseteq Y$ and $B \in Y $. Applying Proposition \ref{prop:relations} we obtain  $\cons{X}{B} \subseteq Y $.
\end{proof}

\noindent We can now proceed with the construction of the canonical model. 

\begin{definition} \label{def:canonical_model}
	For $ p $ propositional atom, let 
	\begin{itemize}
		\item $\worldsC = \{ (X,A) \mid X \in \Maxc \textit{ and } A \in \fpcl \textit{ and } A \in X \} $;
		\item $ \valC(p) = \{ (X,A) \in \worldsC \mid p \in X \} $. 
	\end{itemize}
	For $ (X,A), (Y,B) \in \worldsC $, we define a neighbourhood as:
	$$
	\spheres{(X,A)}{(Y,B)} = \{ (Z,C) \in \worldsC \mid \cons{X}{C} \subseteq Z \textit{ and } C \clesser{X} B \textit{ and } B \notin Z \} \cup \{(Y,B) \}
	$$
	Now for any $ (X,A) \in \worldsC$, let the neighbourhood function be defined as :
	$$ 
	\neigh{(X,A)} = \{ \spheres{(X,A)}{(Y,B)} \mid \cons{X}{B} \subseteq Y \ \mbox{and} \ B\in\fpcl\} 
	$$
	Finally, let the \textit{canonical model} be defined as $ \canonicalex{N} = \langle \worldsC, \mathcal{N}, \valC\rangle $.
\end{definition}

\begin{note}
	Slightly abusing the notation, we write $  \neigh{X,A}$ instead of  $\neigh{(X,A)}$. Moreover, since in $\spheres{(X,A)}{(Y,B)}$ the $A$ is not needed, we simplify the notation to $\spheres{X}{(Y,B)}$.
\end{note}

\begin{proposition} \label{prop:verification_canonical}
	The canonical model $ \canonicalex{N} $ is a neighbourhood model.
\end{proposition}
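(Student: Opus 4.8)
The goal is to verify that $\canonicalex{N} = \langle \worldsC, \mathcal{N}, \valC\rangle$ satisfies all the defining conditions of a neighbourhood model from Definition~\ref{def:neighbourhood_models}: that $\worldsC$ is non-empty, that $\mathcal{N}$ is a well-defined function $\worldsC \to \mathcal{P}(\mathcal{P}(\worldsC))$, that $\valC$ maps atoms into $\mathcal{P}(\worldsC)$, and — the only non-trivial part — that every neighbourhood $\spheres{X}{(Y,B)} \in \neigh{X,A}$ is non-empty.

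First I would dispatch the bookkeeping. The set $\worldsC$ is non-empty because $\Maxc$ is non-empty (take any consistent set, e.g. the deductive closure of $\top$, and extend it via Lemma~\ref{lindem}; it contains some formula, so it yields a world). That $\valC(p) \subseteq \worldsC$ is immediate from its definition, and $\mathcal{N}$ manifestly assigns to each $(X,A)$ a set of subsets of $\worldsC$, since each $\spheres{X}{(Y,B)}$ is carved out of $\worldsC$ by a defining condition. So everything reduces to the non-emptiness requirement.

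For non-emptiness, the key observation is that $\spheres{X}{(Y,B)}$ is defined as a set-builder union with $\{(Y,B)\}$, so it suffices to check that $(Y,B) \in \worldsC$, i.e. that $B \in Y$ and $Y \in \Maxc$. But by definition of $\neigh{X,A}$, the neighbourhood $\spheres{X}{(Y,B)}$ only appears in $\neigh{X,A}$ when $\cons{X}{B} \subseteq Y$; and by statement~1 of Lemma~\ref{lemma:maximal_consequences}, $B \in \cons{X}{B}$, hence $B \in Y$. Since moreover $Y \in \Maxc$ by the quantification in the definition of $\neigh{X,A}$, we get $(Y,B) \in \worldsC$, so $(Y,B) \in \spheres{X}{(Y,B)}$ and the neighbourhood is non-empty.

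I expect no real obstacle here — this proposition is essentially a sanity check that the definitions cohere, and the only subtlety is remembering that the explicit $\cup\,\{(Y,B)\}$ in the definition of $\spheres{X}{(Y,B)}$ together with the side condition $\cons{X}{B} \subseteq Y$ built into $\neigh{X,A}$ is precisely what guarantees non-emptiness. The genuinely hard work (the Truth Lemma, and checking $\canonicalex{N}$ lies in the relevant extension classes) comes afterwards and is not part of this statement.
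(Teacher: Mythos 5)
Your proposal is correct and follows the same route as the paper: the paper's proof likewise reduces everything to non-emptiness and observes that $(Y,B)\in\spheres{X}{(Y,B)}$ by construction. Your extra check that $(Y,B)\in\worldsC$ (via $B\in\cons{X}{B}$ from Lemma~\ref{lemma:maximal_consequences}) is a small but welcome elaboration of a step the paper leaves implicit.
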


\begin{proof}
	It suffices to verify that that non-emptiness holds; since for all $ (Y, B ) \in \worldsC $ it holds that $ (Y,B) \in \spheres{X}{(Y,B)} $, the property immediately follows. 
\end{proof}

\begin{lemma}\label{lemma:neighbourhood_nesting}
	If $ \spheres{X}{(Y,B)} \in \neigh{X,A} $ and $ (U,D) \in \spheres{X}{(Y,B)}  $, then $ \spheres{X}{(U,D)} \subseteq \spheres{X}{(Y,B)}   $. 
\end{lemma}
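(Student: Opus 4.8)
The plan is to unfold the definitions of the neighbourhoods $\spheres{X}{(Y,B)}$ and $\spheres{X}{(U,D)}$ and show that membership in the latter implies membership in the former. Recall that $\spheres{X}{(Y,B)} \in \neigh{X,A}$ means $\cons{X}{B}\subseteq Y$, and that $(U,D)\in\spheres{X}{(Y,B)}$ means either $(U,D)=(Y,B)$, or $\cons{X}{D}\subseteq U$, $D\clesser{X}B$ and $B\notin U$. First I would dispose of the trivial case $(U,D)=(Y,B)$, where the claim is immediate. So assume $\cons{X}{D}\subseteq U$, $D\clesser{X}B$ and $B\notin U$; note in particular that $\spheres{X}{(U,D)}$ is a genuine element of $\neigh{X,A}$ since $\cons{X}{D}\subseteq U$.

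Next I would take an arbitrary $(Z,C)\in\spheres{X}{(U,D)}$ and show $(Z,C)\in\spheres{X}{(Y,B)}$. Again split on whether $(Z,C)=(U,D)$: if so, we already know $(U,D)\in\spheres{X}{(Y,B)}$ by hypothesis. Otherwise we have $\cons{X}{C}\subseteq Z$, $C\clesser{X}D$ and $D\notin Z$. To land in $\spheres{X}{(Y,B)}$ we must verify three things: $\cons{X}{C}\subseteq Z$ (already have it), $C\clesser{X}B$, and $B\notin Z$. The relation $C\clesser{X}B$ follows from $C\clesser{X}D$ and $D\clesser{X}B$ by transitivity of $\clesser{X}$ (Proposition \ref{prop:ref_tr}).

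The remaining obligation, $B\notin Z$, is the crux and the step I expect to be the main obstacle, since it is the only place where the hypothesis $\cons{X}{B}\subseteq Y$ and the finer structure of $\clesser{X}$ come into play. The idea is to argue by contradiction: suppose $B\in Z$. We have $C\clesser{X}D\clesser{X}B$, $\cons{X}{C}\subseteq Z$ and $B\in Z$; by Proposition \ref{prop:relations_important} (with $C,D,B$ in the roles of $A,B,C$ and $Z$ in the role of $Y$) this yields $\cons{X}{D}\subseteq Z$. Now we have $D\clesser{X}B$, $\cons{X}{D}\subseteq Z$ and $B\in Z$, so Proposition \ref{prop:relations} gives $\cons{X}{B}\subseteq Z$. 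Since $B\in\cons{X}{B}$ by Lemma \ref{lemma:maximal_consequences}(1), this forces $B\in Z$ — no contradiction yet, so I must instead use it against the hypothesis on $(U,D)$: from $C\clesser{X}D$, $\cons{X}{C}\subseteq Z$... the correct route is to apply Proposition \ref{prop:relations_important} to get $\cons{X}{D}\subseteq Z$ and then Proposition \ref{prop:relations} (using $D\clesser{X}B$ and $B\in Z$) to get $\cons{X}{B}\subseteq Z$, hence, since also $\cons{X}{D}\subseteq U$ with $B\notin U$ contradicting... Here the clean contradiction comes from observing that $B\in\cons{X}{B}$ together with $\cons{X}{B}\subseteq Z$ is consistent, so the real leverage must be that $D\notin Z$: applying Proposition \ref{prop:relations_important} already required $C\clesser{X}D\clesser{X}B$ and $B\in Z$ to conclude $\cons{X}{D}\subseteq Z$, and then $D\in\cons{X}{D}$ (Lemma \ref{lemma:maximal_consequences}(1)) gives $D\in Z$, contradicting $D\notin Z$. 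This is the contradiction that closes the argument, and pinning down exactly this chain of applications of Propositions \ref{prop:relations} and \ref{prop:relations_important} is the delicate part; everything else is routine unfolding of definitions.
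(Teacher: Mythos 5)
Your proposal is correct and follows essentially the same route as the paper: trivial cases aside, the first two membership conditions are immediate and by transitivity of $\clesser{X}$, and the third ($B\notin Z$) is obtained by contradiction via Proposition \ref{prop:relations_important} (yielding $\cons{X}{D}\subseteq Z$) and Lemma \ref{lemma:maximal_consequences}(1) (yielding $D\in Z$, against $D\notin Z$). The final paragraph wanders through a few abandoned attempts before landing on this, but the argument it settles on is exactly the paper's.
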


\begin{proof}
	We prove the non-trivial case in which $ (U,D) \neq (Y,B) $.
	Let $ (V,E) \in \spheres{X}{(U,D)}  $; we have to show that $ (V,E) \in \spheres{X}{(Y,B)}  $. Thus, we have to show that $a)$ $ \cons{X}{E} \subseteq V $, $b)$ $ E \clesser{X} B $ and $c)$ $ B \notin V $.
	Again, we consider the non-trivial case in which $ (V,E) \neq (U,D) $. 
	Since $ (V,E) \in \spheres{X}{(U,D)}  $ we have that $\cons{X}{E} \subseteq V  $ (requirement $a$ is met),  $ E \clesser{X} D $ and $ D \notin V $. Since $ (U,D) \in \spheres{X}{(Y,B)}  $ we have, among the others, that $ D\clesser{X} B $. By transitivity of $ \clesser{X} $ (Proposition \ref{prop:ref_tr}) it follows that $ E \clesser{X} B $. Thus, $b)$ is satisfied. It remains to prove that $ B \notin V $. For the sake of contradiction, suppose that $ B \in V $.  From this latter, $ E \clesser{X} D \clesser{X} B $ and $\cons{X}{E} \subseteq V  $ it follows by Proposition \ref{prop:relations_important} that $ \cons{X}{D}  \subseteq V$; thus, by Lemma \ref{lemma:maximal_consequences}, $ D \in V $, against previous assumption. Thus, requirement $c)$ is satisfied.
\end{proof}

\noindent We are now ready to prove the Truth Lemma. 

\begin{lemma}[Truth Lemma] \label{lemma:truth_lemma}
	Let $ F \in \fpcl$ and $ X \in \Maxc $. The following statements are equivalent:
	\begin{itemize}
		\item $ F \in X $;
		\item $ \canonicalex{N}, (X,A) \Vdash F $.
	\end{itemize}
\end{lemma}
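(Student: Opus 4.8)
The plan is to prove the Truth Lemma by induction on the structure of the formula $F$. The base case ($F$ an atom $p$) is immediate from the definition of the valuation $\valC$ in Definition \ref{def:canonical_model}: $(X,A)\Vdash p$ iff $(X,A)\in\valC(p)$ iff $p\in X$. The Boolean cases ($\bot$, $\wedge$, $\lor$, $\rightarrow$) follow routinely from the inductive hypothesis together with the standard properties of maximal consistent sets in $\Maxc$ (closure under provable consequence, $\bot\notin X$, $\neg B\in X\Leftrightarrow B\notin X$, and the fact that $X$ respects the connectives). The only genuinely substantial case is $F = B>C$, and this is where all the work lies.

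For the conditional case I would argue both directions. For the \emph{left-to-right} direction, assume $B>C\in X$; I must show $(X,A)\Vdash B>C$, i.e. for every $\spheres{X}{(Y,B')}\in\neigh{X,A}$ with $\spheres{X}{(Y,B')}\fe B$, there is a sub-neighbourhood $\beta\subseteq\spheres{X}{(Y,B')}$ with $\beta\fe B$ and $\beta\fu B\rightarrow C$. The natural candidate for $\beta$ is $\spheres{X}{(B{\lor}B', B)}$ — informally the ``$B$-layer'' sitting inside the given neighbourhood — provided this set is a legitimate neighbourhood, i.e. $\cons{X}{B}\subseteq$ the world-component. One checks using Lemma \ref{lemma:neighbourhood_nesting} and the definition of $\spheres{X}{(Y,B')}$ that a member of the given neighbourhood that forces $B$ (by IH, contains $B$) generates, via Lemma \ref{lemma:maximal_consequences}, a world of the form $(Z,\ldots)$ with $\cons{X}{B}\subseteq Z$; then every world $(Z,C')$ in $\spheres{X}{(Z',B)}$ has $\cons{X}{B}\subseteq Z$ by definition of the sphere, so $C\in\cons{X}{B}\subseteq Z$ by Lemma \ref{lemma:maximal_consequences}(1), i.e. $C'\in Z$ gives $C\in Z$, hence $(Z,C')\Vdash C$ by IH, and since also $B\in Z$ we get $(Z,C')\Vdash B\rightarrow C$; and $\beta\fe B$ because $B\in$ its own generating world. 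For the \emph{right-to-left} direction, I argue contrapositively: if $B>C\notin X$, then by Lemma \ref{lemma:maximal_consequences}(3) there is $Y\in\Maxc$ with $\cons{X}{B}\subseteq Y$ and $C\notin Y$. Then $\spheres{X}{(Y,B)}\in\neigh{X,A}$, it contains an element forcing $B$ (namely $(Y,B)$, since $B\in\cons{X}{B}\subseteq Y$, so $(Y,B)\Vdash B$ by IH), but I must show \emph{no} sub-neighbourhood of it non-vacuously forces $B\rightarrow C$ — this is the crux. Every neighbourhood of $(X,A)$ is of the form $\spheres{X}{(Z,D)}$; one shows that such a sphere being $\subseteq\spheres{X}{(Y,B)}$ and containing a $B$-world forces, via the $\clesser{X}$ machinery (Propositions \ref{prop:relations}, \ref{prop:relations_important} and transitivity of $\clesser{X}$), that $D\clesser{X} B$ and that $(Y,B)$ lies in it, or more directly that it contains a world $(Z',D')$ with $\cons{X}{B}\subseteq Z'$ and $C\notin Z'$, using that $C\notin Y$ propagates; such a world forces $B$ but not $C$, so does not force $B\rightarrow C$.

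The \textbf{main obstacle} I expect is the right-to-left direction of the conditional case: showing that the single neighbourhood $\spheres{X}{(Y,B)}$ genuinely witnesses the failure of $B>C$, i.e. that one cannot escape into a smaller neighbourhood that vacuously or non-vacuously validates $B\rightarrow C$. This requires a careful analysis of which spheres $\spheres{X}{(Z,D)}$ can sit inside $\spheres{X}{(Y,B)}$ and still contain a $B$-world, and it is precisely here that Propositions \ref{prop:relations} and \ref{prop:relations_important} (the order-theoretic lemmas about $\clesser{X}$, themselves imported from \cite{giordano2009tableau}) do the heavy lifting, together with the nesting Lemma \ref{lemma:neighbourhood_nesting}. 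A secondary delicate point is bookkeeping the second component of worlds: since worlds are \emph{pairs} $(X,A)$, one must consistently exhibit, for each maximal consistent set produced by Lemma \ref{lemma:maximal_consequences}(3), a suitable witnessing formula in it (typically $B$ itself, or $B\lor B'$) so that the pair is a legitimate element of $\worldsC$; the note following Definition \ref{def:canonical_model} that the first component of the sphere-index is irrelevant helps keep this manageable.
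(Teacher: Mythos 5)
Your overall plan---induction on $F$ with all the work in the conditional case, carried out in the canonical model via the spheres $\spheres{X}{(Y,B)}$ and the order $\clesser{X}$---is the paper's, and your right-to-left direction is essentially correct, though easier than you anticipate: by the clause $B\notin Z$ in Definition \ref{def:canonical_model}, the \emph{only} world of $\spheres{X}{(Y,B)}$ containing $B$ is the generator $(Y,B)$ itself, so any $\beta\subseteq\spheres{X}{(Y,B)}$ with $\beta\fe B$ must contain $(Y,B)$, and $\beta\fu B\rightarrow C$ then yields $C\in Y$ at once; no appeal to Propositions \ref{prop:relations} or \ref{prop:relations_important} is needed there.

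The genuine gap is in the left-to-right direction. You assert that a $B$-world of the given sphere $\spheres{X}{(Y,B')}$ ``generates a world with $\cons{X}{B}\subseteq Z$'', and that ``every world $(Z,C')$ in $\spheres{X}{(Z',B)}$ has $\cons{X}{B}\subseteq Z$ by definition of the sphere'' with ``also $B\in Z$''. Both claims fail: by Definition \ref{def:canonical_model} a non-generator member $(Z,C')$ of a sphere indexed by $B$ satisfies $\cons{X}{C'}\subseteq Z$ and $B\notin Z$ (so it \emph{omits} $B$), and upgrading $\cons{X}{D}\subseteq U$ to $\cons{X}{B}\subseteq U$ via Proposition \ref{prop:relations} requires $D\clesser{X}B$, which is available only when $B'\clesser{X}B$ (giving $D\clesser{X}B'\clesser{X}B$ by transitivity). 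This is precisely why the paper's proof splits into two cases. When $B'\clesser{X}B$ holds, one takes $\beta=\alpha$ and argues as you intend, suitably corrected. When it fails, i.e.\ $(B'\lor B)>B'\notin X$, the witness is \emph{not} extracted from the $B$-world guaranteed by $\alpha\fe B$ at all: Lemma \ref{lemma:maximal_consequences}(3) yields $Z\supseteq\cons{X}{B'\lor B}$ with $B'\notin Z$, hence $B\in Z$; one checks $(Z,B'\lor B)\in\spheres{X}{(Y,B')}$ and sets $\beta=\spheres{X}{(Z,B'\lor B)}$, whose members satisfy $D\clesser{X}(B'\lor B)\clesser{X}B$, so that Proposition \ref{prop:relations} applies to every $B$-world of $\beta$ while the remaining worlds satisfy $B\rightarrow C$ vacuously. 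Your single uniform argument does not survive the case in which $B'\clesser{X}B$ fails; the disjunctive index $B'\lor B$, which you mention only as a bookkeeping afterthought, is the idea that closes this gap.
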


\begin{proof}
	As usual, the proof proceeds by mutual induction on the complexity of the formula $F$. We show only the case of $ F \equiv G>H $, tacitly assuming that the inductive hypothesis holds on subformulas of $ F $, that is for $G$ (and similarly for $H$): and any world $(U, B)$: $ G \in U $ iff $ \canonicalex{N}, (U,B) \Vdash G $. 
	Thus, we have to prove the equivalence of the following statements:
	\begin{enumerate}
		\item $ G>H \in X $;
		\item For all $ \alpha \in \neigh{X,A} $, if $ \alpha \fe G $ then there exists $ \beta \in \neigh{X,A} $ with $ \beta \subseteq \alpha $, $ \beta \fe G $ and $ \beta \fu G\rightarrow H $.		
	\end{enumerate}
	\noindent $ \mathbf{[1\Rightarrow 2]} $ Assume 1, and suppose that $ \alpha \in \neigh{X,A} $ and $ \alpha \fe G $, for $ \alpha = \spheres{X}{(Y,B)} $. We  must show that there exists a $ \beta \in \neigh{X,A} $ such that $\beta \subseteq \alpha$,  $\beta \fe G $ and $ \beta \fu G\rightarrow H $.
	
	We distinguish two cases, depending on whether $ B\clesser{X} G $ holds or not. Suppose it holds; then, we show that we can take $ \beta = \alpha = \spheres{X}{(Y,B)} $. Given the hypothesis we only have  to prove that $ \alpha \fu G\rightarrow H $. To this aim let $ (U,D) \in \spheres{X}{(Y,B)} $ and  $ G\in U $. From $ (U,D) \in \spheres{X}{(Y,B)} $ it follows that $ \cons{X}{D} \subseteq U $ and $ D \clesser{X} B $. Since $ B\clesser{X} G $, by transitivity of $\leq_X$ we obtain   $ D \clesser{X} G $. Therefore we have:  $ G \in U $,  $ \cons{X}{D} \subseteq U $ and  $ B\clesser{X} G $,so that by Proposition \ref{prop:relations} we obtain $ \cons{X}{G} \subseteq U $. Since  $ G>H \in X $ we have $ H \in \cons{X}{G} $, and finally $ H\in U $. 
	
	Now suppose that $ B\clesser{X} G $ does not hold. Therefore $ \lnot ((B\lor G) > B) \in X$. Thus, $ \cons{X}{B\lor G} \cup \{ \lnot B\}$ is consistent, so that (by lemma \ref{lindem}) there exists some $ Z \in \Maxc $ such that $ \cons{X}{B\lor G} \cup \{ \lnot B\} \subseteq Z $ (whence $G\in Z$). Let us consider the world $(Z, B\lor G) $. Note that by construction $ \cons{X}{B\lor G} \subseteq Z $, and obviously $ (B\lor G) \clesser{X} B $ and $ B \notin Z$,  By Definition \ref{def:canonical_model}  $ (Z,B\lor G) \in \spheres{X}{(Y,B)}$. We show that we can take the required $ \beta = \spheres{X}{(Z,B\lor G)} $:  since  $\cons{X}{B\lor G} \subseteq Z $, we have $\spheres{X}{(Z,B\lor G)} \in \neigh{X,A} $; since $ (Z,B\lor G) \in \spheres{X}{Y,B}$, by lemma \ref{lemma:neighbourhood_nesting} we have $\spheres{X}{(Z,B\lor G)} \subseteq \spheres{X}{(Y,B)}$; since $G\in Z$, we immediately have $\spheres{X}{(Z,B\lor G)} \fe G$. We still have to prove that $ \spheres{X}{(Z,B\lor G)} \fu G\rightarrow H $. 
	To this purpose suppose $ (U,D) \in \spheres{X}{(Z,B\lor G)} $ and $G\in U$: since  we have  $D \leq_X B\lor G \leq_X G$, as before, by Proposition \ref{prop:relations}, we obtain $ \cons{X}{G} \subseteq Z$ and we can  conclude   $ H\in Z $.

	\
	
	\noindent $ \mathbf{[2\Rightarrow 1]} $ Assume 2. We show that for all $ Z \in \Maxc $, if $ \cons{X}{G} \subseteq Z $, then $ H \in Z $. 
	By Lemma \ref{lemma:maximal_consequences}, this is equivalent to $ G>H \in X $.
	
	To this aim, suppose that $ \cons{X}{G} \subseteq Z $, for some $ Z $. Then, $ (Z,G) \in \worldsC $. Let us consider the neighbourhood $  \spheres{X}{(Z,G)} = \alpha$: by construction this world belongs to $ \neigh{X,A} $ and thus, by hypothesis, $ \spheres{X}{(Z,G)} \fe G $. By hypothesis 2., there exists some neighbourhood $ \beta \in \neigh{X,A}$ such that $ \beta \subseteq \alpha $, $ \beta \fe G $ and $ \beta \fu G\rightarrow H $. It easy to see that it must be  $ \beta = \alpha =  \spheres{X}{(Z,G)} $, since by Definition \ref{def:canonical_model} the only world that satisfies $ G $ in the neighbourhood $  \spheres{X}{(Z,G)} $ is $ (Z,G) $ itself ($\forall  (U,D) \in \spheres{X}{(Z,G)}$ if $(U,D) \not= (Z,G)$ then $G \notin U$). 
	Thus, from $ \spheres{X}{(Z,G)} \fu G\rightarrow H $,  $ (Z,G) \in  \spheres{X}{(Z,G)} $ and $ G \in Z $ it immediately follows that $ H \in Z $.
\end{proof}

\noindent By the previous lemma we immediately obtain:

\begin{theorem}[Completeness]
	\label{theorem:completeness}
	For $ F \in \fpcl$, if $ \valid{N} $ then $ \derivable{PCL} F $.
\end{theorem}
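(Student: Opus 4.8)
The plan is to observe that all the substantive work has already been carried out in the construction of the canonical model and in the Truth Lemma, so that the completeness statement follows by a short composition of facts already established.

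First I would recall that, by Proposition \ref{prop:verification_canonical}, the canonical model $\canonicalex{N}$ is a genuine neighbourhood model. Hence, if $F$ is valid in \emph{every} neighbourhood model, then in particular it is valid in $\canonicalex{N}$, i.e. $\canonicalex{N}, (X,A) \Vdash F$ for every world $(X,A) \in \worldsC$. Next I would invoke the Truth Lemma (Lemma \ref{lemma:truth_lemma}): for every $X \in \Maxc$ and every $A \in X$ we have $\canonicalex{N}, (X,A) \Vdash F$ if and only if $F \in X$. Since every maximal consistent set $X$ is non-empty (for instance $\top \in X$), the pair $(X,\top)$ lies in $\worldsC$, so validity of $F$ in $\canonicalex{N}$ yields $F \in X$ for \emph{every} $X \in \Maxc$.

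Finally I would apply Lemma \ref{lindem}(b) — in the form already spelled out as the canonicity of $\canonicalex{N}$ right after the statement of the Truth Lemma — namely that $F$ belongs to every maximal consistent set if and only if $\derivable{PCL} F$. Chaining these three implications gives: $\valid{N} F$ implies $F$ valid in $\canonicalex{N}$, implies $F \in X$ for all $X \in \Maxc$, implies $\derivable{PCL} F$, which is exactly the claim.

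As for the main obstacle: at this point in the development there essentially is none, since the proof is a two-line corollary of the preceding results. The genuine difficulties lie upstream, in the machinery this proof rests on — choosing the right definition of the spheres $\spheres{X}{(Y,B)}$ so that the nesting Lemma \ref{lemma:neighbourhood_nesting} holds, and in particular carrying out the $[1 \Rightarrow 2]$ direction of the Truth Lemma with its case split on whether $B \clesser{X} G$ holds, which is where the axioms (ID), (CM), (OR) and the derived Propositions on $\clesser{X}$ are really used.
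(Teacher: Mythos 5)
Your proposal is correct and follows exactly the route the paper takes: the paper's proof of Theorem \ref{theorem:completeness} is precisely the two-line composition of Proposition \ref{prop:verification_canonical}, the Truth Lemma \ref{lemma:truth_lemma}, and Lemma \ref{lindem} that was already spelled out as the ``canonicity'' argument just before the Truth Lemma. Your reading of Lemma \ref{lindem}$(b)$ as ``$F$ is in every maximal consistent set iff $\derivable{PCL} F$'' is the intended one, and your remark about where the real work lies (the sphere definition and the $[1\Rightarrow 2]$ direction of the Truth Lemma) matches the paper's emphasis.
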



\subsection{Completeness for  extensions of $\PCL$}

\noindent Our aim is to extend the completeness proof to the whole family of all preferential logics. We are able to extend the proof to  all extensions of $ \PCL $, except for the systems containing \textit{weak centering} (and not containing centering).
To obtain a proof for a logic featuring more than one semantic condition, it suffices to combine the proof strategies for each case. 

Unless otherwise specified, all notions refer to the canonical model for $ \PCL $ defined in the previous section. In some cases, the canonical model needs to be modified to account for specific conditions.
The following proposition (whose proof is obvious) will used for the cases of absoluteness and uniformity.
\begin{proposition}
	For every $(X,A) \in \worldsC$, it holds:
	$$\bigcup \neigh{X,A} = \{(Z,C)\in \worldsC \mid Z^C \subseteq X\}.$$
\end{proposition}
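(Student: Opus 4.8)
The plan is to prove the set equality
$$\bigcup \neigh{X,A} = \{(Z,C)\in \worldsC \mid \cons{Z}{C} \subseteq X\}$$
by double inclusion, unfolding the definitions of $\neigh{X,A}$ and of the neighbourhoods $\spheres{X}{(Y,B)}$ from Definition \ref{def:canonical_model}, and then extracting from membership in some $\spheres{X}{(Y,B)}$ precisely the condition $\cons{Z}{C}\subseteq X$. Recall that $(Z,C)\in\worldsC$ already forces $C\in Z$, and that $\cons{Z}{C}=\{D\mid C>D\in Z\}$. Since the proposition is asserted to have an \emph{obvious} proof, I expect the content to reduce to recognising that the defining clauses of a neighbourhood, once one quantifies over all $(Y,B)$, impose no constraint on $(Z,C)$ beyond what is recorded by $\cons{Z}{C}\subseteq X$.

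For the inclusion $\supseteq$, I would start from an arbitrary $(Z,C)\in\worldsC$ with $\cons{Z}{C}\subseteq X$ and exhibit a single neighbourhood of $(X,A)$ containing it. The natural candidate is the neighbourhood generated by $(Z,C)$ itself: I would check that $\cons{X}{C}\subseteq Z$ holds (so that $\spheres{X}{(Z,C)}$ is a legitimate element of $\neigh{X,A}$ according to the generating condition in Definition \ref{def:canonical_model}), using the hypothesis $\cons{Z}{C}\subseteq X$ together with Lemma \ref{lemma:maximal_consequences} and reflexivity of $\clesser{X}$ (Proposition \ref{prop:ref_tr}); then $(Z,C)\in\spheres{X}{(Z,C)}$ is immediate from the $\cup\{(Y,B)\}$ clause, placing $(Z,C)$ inside $\bigcup\neigh{X,A}$.

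For the inclusion $\subseteq$, I would take any $(Z,C)\in\bigcup\neigh{X,A}$, so $(Z,C)\in\spheres{X}{(Y,B)}$ for some generating $(Y,B)$ with $\cons{X}{B}\subseteq Y$. Splitting into the two cases of the union in the definition of $\spheres{X}{(Y,B)}$, either $(Z,C)=(Y,B)$ or $(Z,C)$ satisfies $\cons{X}{C}\subseteq Z$, $C\clesser{X}B$ and $B\notin Z$. In both cases I must derive $\cons{Z}{C}\subseteq X$, i.e. that $C>D\in Z$ implies $D\in X$; here the key algebraic facts are the membership conditions recorded by $\clesser{X}$ (Definition \ref{def:lesser_equal}) and the conditional-consequence lemmas (Lemma \ref{lemma:maximal_consequences}, Propositions \ref{prop:relations} and \ref{prop:relations_important}), together with the fact that $\cons{X}{B}\subseteq Y$ for the generating world.

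The main obstacle I anticipate is the $\subseteq$ direction: translating the ``downward'' information packaged in $\spheres{X}{(Y,B)}$ (namely $\cons{X}{C}\subseteq Z$ and the $\clesser{X}$-comparisons, which speak about consequences \emph{of} $X$ landing in $Z$) into the ``upward'' statement $\cons{Z}{C}\subseteq X$ (consequences of $Z$ landing in $X$). Despite the proposition being labelled obvious, getting the direction of these consequence-inclusions correct — and confirming that the generating constraint $\cons{X}{B}\subseteq Y$ plus the $\clesser{X}$-chain genuinely forces $C>D\in Z\Rightarrow D\in X$ rather than the reverse — is the step that requires care, and I would verify it by a direct maximal-consistent-set argument appealing to the characterisation in Lemma \ref{lemma:maximal_consequences}(3).
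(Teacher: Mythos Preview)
The difficulty you flag in the last paragraph is real, and it is not a matter of care but of correctness: the statement as printed contains a typo. The right-hand side should read
\[
\{(Z,C)\in \worldsC \mid \cons{X}{C} \subseteq Z\},
\]
i.e.\ $X^{C}\subseteq Z$, not $Z^{C}\subseteq X$. This is confirmed by the way the proposition is invoked in the Absoluteness subsection, where from $(Z,C)\in\bigcup\neigh{X,A}$ the paper concludes $\cons{X}{C}\subseteq Z$.

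With the corrected statement the proof really is obvious, and your structural plan becomes sound without any of the machinery you reach for. For $\subseteq$: if $(Z,C)\in\spheres{X}{(Y,B)}$ with $\cons{X}{B}\subseteq Y$, then in the case $(Z,C)=(Y,B)$ we have $\cons{X}{C}=\cons{X}{B}\subseteq Y=Z$, and in the other case $\cons{X}{C}\subseteq Z$ is literally one of the defining clauses. For $\supseteq$: if $\cons{X}{C}\subseteq Z$ then $\spheres{X}{(Z,C)}\in\neigh{X,A}$ by the generating condition, and $(Z,C)\in\spheres{X}{(Z,C)}$ by the $\cup\{(Y,B)\}$ clause. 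No appeal to Lemma~\ref{lemma:maximal_consequences}, Proposition~\ref{prop:relations}, or $\clesser{X}$ is needed.

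Your attempt at the $\supseteq$ direction also contains a gap that would not close: from $\cons{Z}{C}\subseteq X$ one cannot derive $\cons{X}{C}\subseteq Z$; these speak about conditionals in different maximal consistent sets and are independent. That step fails for the same reason your $\subseteq$ direction was stuck: you were trying to prove the wrong inclusion.
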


\subsubsection*{Normality}
We show that in presence of Axiom (N), the canonical model $ \canonicalex{N} $ satisfies the condition of normality:
\begin{quotation}
	For all $(X, A)\in \worldsC$, it holds that $\neigh{X,A}\neq \emptyset$.
\end{quotation}
By Axiom (N), we have that for all $(X, A)\in \worldsC$, it holds that $\lnot (\top > \bot)\in X$. Thus, $X^{\top}$ is consistent and by Lemma \ref{lindem} there is $Z\in \Maxc$ such that $X^{\top} \subseteq Z$. As a consequence, $(Z,\top)\in \worldsC$, and $\spheres{X}{(Z,\top)} \in \neigh{X,A}$, whence  $\neigh{X,A}\neq \emptyset$.

\subsubsection*{Absoluteness}
We show that in presence of Axioms $(A_1),(A_2)$, the canonical model $ \canonicalex{N} $ satisfies the condition of local absoluteness: 
\begin{quotation}
	If $(Z,C) \in \bigcup \neigh{X,A}$, then $ \neigh{X,A} = \neigh{Z,C}$.
\end{quotation}
We first prove that $a)$ for any formula $B\in \fpcl$, $ \cons{X}{B} = \cons{Z}{B} $. To this aim, let $  G \in \cons{X}{B} $; then $ B>G \in X $. By Axiom (A$ _1 $), $ C>(B>G) \in X$, and $ B>G \in \cons{X}{C}$. Since $(Z,C) \in \bigcup \neigh{X,A}$, it holds that $ \cons{X}{C} \subseteq Z$; from this follows that $ B>G \in Z $, and thus $ G \in \cons{Z}{B} $. Conversely, suppose $ G \notin \cons{X}{B} $. Then $ \lnot (B>G) \in X $; by (A$ _2 $) $ C > \lnot (B>G) \in X $, and $ \lnot (B>G) \in \cons{X}{C} $. Again, since  $ \cons{X}{C} \subseteq Z$ we have $ \lnot(B>G) \in Z $, and thus $ G \notin \cons{Z}{B} $.

Observe that $b)$ for any formulas $D,E\in \fpcl$, it holds $D \clesser{X} E$ if and only if  $D \clesser{Z} E$. In fact, from $D \clesser{Z} E$ follows that  $(D \lor E)> D \in Z$, and by proceeding similarly as in $a)$ we obtain that $(D \lor E)> D\in X$ if and only if $(D \lor E)> D \in Z$.

From $a)$ it immediately follows that for any $(Y,B)$,  $\cons{X}{B} \subseteq Y$ if and only if $\cons{Z}{B} \subseteq Y$. Then, by $b)$ we have $ \spheres{X}{(Y,B)} = \spheres{Z}{(Y,B)}$, whence by $a)$ we obtain $ \neigh{X,A} = \neigh{Z,C}$.


\subsubsection*{Total Reflexivity}
In this case we need to modify the construction of the canonical model.

\begin{definition}\label{def:universe}
	The \emph{universe} of $ (X,A) $ is the set:
	$$ \univ{X,A} = \{(Y,B) \in \worldsC \mid \textit{ for all } G \in \fpcl, ~ G>\bot \in X\textit{ implies } \lnot G \in Y \} .$$
	The canonical model $ \canonicalex{U} = \langle \worldsCU, \neighFunctU, \valCU \rangle $ is defined by stipulating $ \worldsCU = \worldsC $, $ \valCU = \valC $, and
	$$
	\neighU{X,A} = \neigh{X,A} \cup \{ \univ{X,A} \}.
	$$
	where $ \neigh{X,A}$ is the same as in Definition \ref{def:canonical_model}. 
\end{definition}


\begin{lemma}\label{prop:univ}
	For any $ (X,A), (Y, B) \in \worldsC $, it holds that $\spheres{X}{(Y,B)} \subseteq \univ{X,A} $.
\end{lemma}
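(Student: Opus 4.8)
The plan is to unfold the two definitions involved — that of $\univ{X,A}$ (Definition \ref{def:universe}) and that of $\spheres{X}{(Y,B)}$ (Definition \ref{def:canonical_model}) — and reduce everything to a single application of the derivable axiom (MOD) of Proposition \ref{lemma:derivable_ax}. Concretely, I would fix an arbitrary $(Z,C) \in \spheres{X}{(Y,B)}$ and an arbitrary formula $G$ with $G>\bot \in X$, and aim to derive $\lnot G \in Z$; by the very definition of $\univ{X,A}$, this is exactly what membership $(Z,C) \in \univ{X,A}$ amounts to.

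The key step is the observation that, from the instance $(G>\bot) \to (C > \lnot G)$ of (MOD) together with $G>\bot \in X$ and deductive closure of $X$, one gets $C > \lnot G \in X$, i.e. $\lnot G \in \cons{X}{C}$. Hence the whole lemma reduces to establishing $\cons{X}{C} \subseteq Z$ for every $(Z,C) \in \spheres{X}{(Y,B)}$.

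To finish I would distinguish the two cases built into the definition of $\spheres{X}{(Y,B)}$. If $(Z,C) \neq (Y,B)$, then membership in $\spheres{X}{(Y,B)}$ gives $\cons{X}{C} \subseteq Z$ directly, and combined with the previous step this yields $\lnot G \in Z$. If instead $(Z,C) = (Y,B)$, then $\cons{X}{C} = \cons{X}{B}$ and $Z = Y$, and I would invoke the side condition defining the neighbourhoods of $\neigh{X,A}$ in Definition \ref{def:canonical_model}, which guarantees $\cons{X}{B} \subseteq Y$; again $\lnot G \in Z$ follows.

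The only delicate point — and the one I would flag explicitly — is precisely this last case: the adjoined element $(Y,B)$ of $\spheres{X}{(Y,B)}$ carries by itself no relation between $\cons{X}{B}$ and $Y$, so one genuinely needs $\spheres{X}{(Y,B)}$ to be a \emph{neighbourhood} (that is, $\cons{X}{B} \subseteq Y$) for the inclusion to hold; this is tacit in the statement, since $\univ{X,A}$ is intended to sit above the members of $\neigh{X,A}$. Apart from this, the argument is a one-line consequence of (MOD) and the definition of conditional consequences, requiring no induction or further model-theoretic machinery.
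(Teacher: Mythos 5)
Your proof is correct and takes essentially the same route as the paper's, which likewise reduces the claim to a single application of (MOD) together with $\cons{X}{C} \subseteq Z$ for each $(Z,C)$ in the neighbourhood. The one point where you go beyond the paper is worth keeping: the paper's one-line proof silently assumes $\cons{X}{C}\subseteq Z$ for \emph{every} element, which fails for the adjoined element $(Y,B)$ unless $\cons{X}{B}\subseteq Y$, i.e.\ unless $\spheres{X}{(Y,B)}$ is genuinely a member of $\neigh{X,A}$ --- exactly the proviso you flag, and the only way the lemma is ever invoked.
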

\begin{proof}
	Assume that some $(Z,C) \in  \spheres{X}{(Y,B)} $. We have to prove that for all $ G \in \fpcl $, if $ G>\bot \in X $ then $ \lnot G \in Z $, and this immediately follows from \axMod \ and $ \cons{X}{C} \subseteq Z $. 
\end{proof}

We show that in presence of axiom $(T)$, the canonical model $ \canonicalex{U} $ satisfies the condition of total reflexivity, that is:
\begin{quotation}
	If $(X,A)\in \worldsCU$, there exists $\alpha\in \neighU{X,A}$ such that $(X,A)\in\alpha$.
\end{quotation}
It is immediate to verify that the condition holds: because of Axiom (T), we have that $(X,A)\in \univ{X,A}$.

Since we have modified the definition of the canonical model, we have to verify that the Truth Lemma still holds. 
To this aim,  we  need to add one case in the direction  [\textbf{1} $\Rightarrow$ \textbf{2}]  of the proof, that is, if $ G>H \in X $, then $ \canonicalex{U}, (X,A) \Vdash G>H $. Assume that $ G>H \in X $ and that for some  $ \alpha \in \neighU{X,A} $ it holds $ \alpha \fe A $. If $ \alpha \in \neigh{X,A} $ the proof proceeds as in Lemma \ref{lemma:truth_lemma}. Let now $ \alpha = \univ{X,A} $ and suppose for some $ (Z,C) \in \univ{X,A} $ it holds that $ (Z,C) \Vdash G $, whence $ G \in Z $. We show that there must exist an $ (U,D) \in \bigcup \neigh{X,A} $ such that $ (U,D) \Vdash G $. If this were not the case, we would get that for all $ (U,D) \in \bigcup \neigh{X,A} $, $ G\notin U $. But this entails that $ \cons{X}{G} $ is inconsistent; and thus $ G> \bot \in X $, against the hypothesis that $ (Z,C) \Vdash G $ and $ (Z,C) \in \univ{X,A} $.  Thus there is a $ (U,D) \in \bigcup \neigh{X,A} $ such that $ G \in U $. We take $ \alpha' = \spheres{X}{(U,D)} $. Observe that $\alpha' \subseteq  \alpha = \univ{X,A}$. We can proceed as in proof of Lemma \ref{lemma:truth_lemma} by finding for a $\beta \in \neigh{X,A}$ with $\beta \subseteq \alpha'$ fulfilling the required conditions.

\subsubsection*{Uniformity}
We take the same model construction as for total reflexivity, that is the model $\canonicalex{U}$. 
Thus, we only need that in presence of axioms $(U_1)$ and $ (U_2) $ $ \canonicalex{U} $ satisfies the condition of local uniformity, that is, for any $ (X,A), (Y,B) \in \worldsCU $: 
\begin{quotation}
	If $ (Y,B) \in \bigcup \neighU{X,A} $, then $ \bigcup \neighU{X,A} = \bigcup \neighU{Y,B}  $.
\end{quotation}
To this aim, first observe that
$$ \bigcup \neighU{X,A} = \univ{X,A} $$
Suppose now $ (Y,B) \in \bigcup \neighU{X,A}= \univ{X,A} $. We show that $ G>\bot \in X $ if and only if $ G>\bot \in Y $.
Let  $  G>\bot \in X $. Then by axiom (U$ _1 $) it follows that $ \lnot (G>\bot )>\bot \in X $. Since $ (Y,B) \in\univ{X,A}$   we  have $ \lnot \lnot (G>\bot ) \in Y $, that is  $ G>\bot \in Y $. Conversely, suppose that $  G>\bot \notin X $, i.e., $ \lnot (G>\bot ) \in X $. By axiom (U$ _2 $) we have that $ (G>\bot) >\bot \in X $, and since $ (Y,B) \in\univ{X,A}$, we get $ \lnot (G>\bot) \in Y $, whence $ G>\bot \notin Y $. \\
From the fact that $ G>\bot \in X $ if and only if $ G>\bot \in Y $ we obtain that for all $ (Z,C) \in \worldsCU $, $ (Z,C ) \in \univ{X,A} $ if and only if $ (Z,C) \in \univ{Y,B} $, which means 
$ \bigcup \neighU{X,A} = \bigcup \neighU{Y,B}  $.

\subsubsection*{Centering}
We modify the canonical model construction as follows. 
\begin{definition}
	For $ (X,A), (Y,B) \in \worldsCW $, let:
	$$
	\spheresCW{(X,A)}{(Y,B)} = \spheres{X}{(Y,B)} \cup \{ (X,A)\}.
	$$
	Observe that here the formula $A$ in $(X,A)$ is relevant. 
	Then, for any $ (X,A) \in \worldsC$, $ \neighW{X,A} = \{ \spheresCW{(X,A)}{(Y,B)} \mid \cons{X}{B} \subseteq Y \} $. The set of worlds $ \worldsC $ and the evaluation function $ \valC $ do not change, and the canonical model is $ \canonicalex{C} = \langle \worldsC, \neighFunctW, \valC \rangle$.
\end{definition}

We now show that in presence of axioms $(W)$ and $ (C) $, the canonical model $ \canonicalex{U} $ satisfies the condition of centering: 
\begin{itemize}
	\item[$a$)] For every world $ (X,A)$ and every $\alpha \in  \neighW{X,A}$, $ (X,A)\in   \alpha$;
	\item [$b) $] $\{(X,A)\}\in  \neighW{X,A}$.
\end{itemize}
Condition $a)$ holds by definition. As for $b)$, first observe that for any $(X,A)$ it holds by (W) that $\cons{X}{A} \subseteq X$, so that $\spheresCW{(X,A)}{(X,A)}\in  \neighW{X,A}$. We now show that $\spheresCW{(X,A)}{(X,A)} = \{(X,A)\} $. To this aim, we prove that there is no world $ (Y,B)  \in   \spheresCW{(X,A)}{(X,A)}$ such that $ (Y,B) \neq (X,A) $. For the sake of contradiction, suppose such a world exists. It follows that $A\not\in Y$ and  $ B \clesser{X} A $,  which means that $ (A\lor B) > A \in X $. Thus, by axiom (W), $ (A \lor B) \rightarrow B \in X $. Since by definition $ A \in X $, we have $ B \in X $. By axiom (C) it follows that also $B>A \in X  $. Thus, $ A \in \cons{X}{B} $; and since $ \cons{X}{B} \subseteq Y $ we have $ A \in Y $, which contradicts with the assumption $  A \notin Y $. 

\

Since we have modified the canonical model, we have to verify that the Truth Lemma continues to hold.
For the direction $\mathbf{[1\Rightarrow 2]}$, suppose that $ G>H \in X $ and that for $ \alpha \in \neighW{X,A} $ it holds that $ \alpha \fe G $. We can proceed as in the proof of Lemma \ref{lemma:truth_lemma}, finding a suitable $\beta\in \neighW{X,A} $. The fact that $(X,A)$ belongs to every neighbourhood in $\neighW{X,A}$, and also to $\beta$, does not compromise the assertion that $ \beta \fu G\rightarrow H $, since from the hypothesis $G>H\in X$ follows by (W) that $G\to H\in X$. 

For the direction $\mathbf{[2\Rightarrow 1]}$, assume 2. We distinguish two cases: 
\begin{itemize}[noitemsep]
	\item[$ i. $] $G\not\in X$;
	\item[$ ii. $] $G\in X$.
\end{itemize}
In case $i$, we proceed as in the proof of Lemma \ref{lemma:truth_lemma}, by proving that for all $ Z \in \Maxc $, if $ \cons{X}{G} \subseteq Z $, then $ H \in Z $. To this aim, let us consider $\alpha = \spheresCW{(X,A)}{(Z,G)} = \spheres{X}{(Z,G)} \cup \{ (X,A)\}\in \neighW{X,A}$. By hypothesis, there exists a neighbourhood $ \beta \in \neighW{X,A}$ such that $ \beta \subseteq \alpha $, $ \beta \fe G $ and $ \beta \fu G\rightarrow H $. Since $G\not\in X$, it must be that $\beta = \spheresCW{(X,A)}{(Z,G)}$, whence $(Z,G)\in \beta$ follows.

In case $ii$, let us consider  $\alpha = \spheresCW{(X,A)}{(X,A)} \in \neighW{X,A}$. By hypothesis, there exists a neighbourhood $ \beta \in \neighW{X,A}$ such that $ \beta \subseteq \spheresCW{(X,A)}{(X,A)} $, $ \beta \fe G $ and $ \beta \fu G\rightarrow H $. However, since $\spheresCW{(X,A)}{(X,A)} = \{(X,A)\}$, it must be $\beta = \{(X,A)\}$. Thus, since $G \to H\in X$ and $G\in X$, we obtain $H\in X$. By axiom (C), we finally obtain $G> H\in X$.


\begin{theorem}[Completeness for extensions]
	\label{theorem:completeness_ext}
	Let $ \mathsf{ext} $ denote one of the logics: $ \PP\NN$, $\PP\TT$, $\PP\CC$, $\PP\UU$ , $\PP\NN\UU$, $\PP\TT\UU$, $\PP\CC\UU$, $\PP\NN\AAA$, $\PP\TT\AAA$, $\PP\CC\AAA $.
	For $ F \in \fpcl$, if $ F $ is valid in a class of models for $ \mathsf{ext} $, then $ \derivable{ext} F $.
\end{theorem}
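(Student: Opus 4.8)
\noindent The plan is to replay, one logic at a time, the completeness argument already used for $\PCL$ in Theorem~\ref{theorem:completeness}. For each $\mathsf{ext}$ in the list I would exhibit a canonical model $\canonicalex{\mathsf{ext}}$ and check that $(i)$ $\canonicalex{\mathsf{ext}}$ lies in the class $\modelex{\mathsf{ext}}$ of models for $\mathsf{ext}$, and $(ii)$ the Truth Lemma holds of $\canonicalex{\mathsf{ext}}$. Granted $(i)$ and $(ii)$, the argument closes exactly as in the base case: if $F$ is valid in $\modelex{\mathsf{ext}}$ then $F$ is forced at every world of $\canonicalex{\mathsf{ext}}$, so by $(ii)$ we get $F\in X$ for every $X\in\Maxc$, and by Lemma~\ref{lindem} this yields $\derivable{ext} F$. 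Which model to take is dictated by the conditions present in $\mathsf{ext}$: the basic model $\canonicalex{N}$ of Definition~\ref{def:canonical_model} for $\PP\NN$ and $\PP\NN\AAA$; the universe-augmented model $\canonicalex{U}$ of Definition~\ref{def:universe} for $\PP\TT$, $\PP\UU$, $\PP\NN\UU$, $\PP\TT\UU$ and $\PP\TT\AAA$; the centering-augmented model $\canonicalex{C}$ for $\PP\CC$ and $\PP\CC\AAA$; and, for $\PP\CC\UU$, the model obtained from $\canonicalex{N}$ by performing both the universe and the centering augmentation.

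For step $(i)$, each of the five semantic conditions has already been verified in the corresponding subsection from exactly the axioms after which it is named (normality from (N), total reflexivity from (T), local uniformity from the uniformity axioms, centering from (W) and (C), local absoluteness from the absoluteness axioms). Since none of those verifications uses more than the axioms it invokes, a logic featuring several conditions has all the corresponding properties on $\canonicalex{\mathsf{ext}}$ once one re-runs the relevant verifications against the chosen, possibly doubly augmented, neighbourhood function. Two points deserve an explicit check: that the local absoluteness verification, carried out for $\canonicalex{N}$ in its subsection, can be adapted to the universe-augmented model (for $\PP\TT\AAA$) and to the centering-augmented model (for $\PP\CC\AAA$); and that the identity $\bigcup\neighU{X,A}=\univ{X,A}$, on which the uniformity argument rests, survives the centering augmentation — here one uses that in presence of (W) one has $(X,A)\in\univ{X,A}$, so adjoining $(X,A)$ to every sphere does not enlarge the union.

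For step $(ii)$, the Truth Lemma (Lemma~\ref{lemma:truth_lemma}) is literally unchanged when the chosen model is $\canonicalex{N}$. When the universe is present, the direction $\mathbf{[1\Rightarrow 2]}$ needs the additional sub-case $\alpha=\univ{X,A}$ already treated in the Total Reflexivity subsection; when the centering element is present, the direction $\mathbf{[2\Rightarrow 1]}$ needs the case split on whether $G\in X$ from the Centering subsection, together with the observation that the extra world $(X,A)$ in every neighbourhood is harmless because $G>H\in X$ yields $G\rightarrow H\in X$ by (W). For $\PP\CC\UU$ both adjustments are needed simultaneously; since they bear on disjoint portions of the argument — the neighbourhood $\univ{X,A}$ on one side and the shape of the spheres $\spheresCW{(X,A)}{(Y,B)}$ on the other — they can be merged without interference.

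The main obstacle is thus concentrated in the logics carrying two non-trivial conditions, $\PP\CC\UU$ foremost: one has to ensure that composing two canonical-model modifications breaks neither the semantic-condition verifications of step $(i)$ nor the Truth Lemma of step $(ii)$. The only genuinely delicate points are the computations describing $\bigcup\neigh{X,A}$ and $\bigcup\neighU{X,A}$, which control the verifications of uniformity and absoluteness; once these are settled, everything else is a routine recombination of arguments already present in the preceding subsections.
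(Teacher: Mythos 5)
Your proposal follows essentially the same route as the paper: the theorem there is proved by the preceding subsections, which modify the canonical model per condition (base model for normality and absoluteness, the universe-augmented model for total reflexivity and uniformity, the centering-augmented model for centering), verify each semantic condition from its axioms, re-check the Truth Lemma where the model changed, and then note that logics with several conditions are handled by combining these strategies. Your explicit flagging of the combination points (the behaviour of $\bigcup\neighU{X,A}$ under the centering augmentation, and adapting the absoluteness verification to the modified models) matches, and slightly sharpens, what the paper leaves implicit.
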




\section{A family of labelled sequent calculi}

In this section we introduce labelled calculi for $ \PCL $ and its extensions. 
We call $ \lab{CL} $ the calculus for $ \PCL $. Calculi for extensions are denoted by $ \lab{CL} $ to which we add the name of the frame conditions of the corresponding logics: thus,  $ \lab{CL^N} $ is a proof system for $ \PCL\NN $, $ \lab{CL^{TU}} $ is a proof system for $ \PCL\TT\UU $. Let $ \lab{CL^*} $ denote the whole family of calculi.

The definition of the sequent calculi  $ \lab{CL^*} $ follows the well-established methodology of enriching the language of the calculus by means of labels, thus importing the semantic information of neighbourhood models into the syntactic proof system\footnote{Refer to \cite{negri2005proof} for the general methodology in Kripke models and to \cite{negri2016non-normal} for the general methodology in neighbourhood semantics.}. For this reason, it is useful to recall the the truth condition for the conditional operator in neighbourhood models:

\begin{center}
	$(*) \  x \Vdash A > B $ \, \textit{iff} \, \textit{ for all $ \alpha \in N(x)$, if 
		$  \alpha \fe A $ then there exists $ \beta\in N(x)$ such that $\beta \subseteq \alpha $,  $ \beta \fe A $, and $ \beta \fu A\rightarrow B$.}
\end{center}

\noindent  We enrich the language $ \fpcl$ as follows.  

\begin{definition} 
	\label{labelled_formulas}
	Let $ x, y,z, \dots $ be variables for worlds in a neighbourhood model, and $ a, b, c, \dots $ variables for neighbourhoods. \textit{Relational atoms} are the following expressions:
	\begin{itemize}[noitemsep]
		\item $ a \in N(x) $,  ``neighbourhood $ a $ belongs to the family of neighbourhoods associated to $ x $'';
		\item $ x \in a $, ``world $ x $ belongs to neighbourhood $ a$'';
		\item $ a \subseteq b $, ``neighbourhood $ a $ is included into neighbourhood $ b $''. 
	\end{itemize}
	
	\noindent \textit{Labelled formulas} are defined as follows. Relational atoms are labelled formulas and, for $ A \in \fpcl$, the following are labelled formulas:
	\begin{itemize}[noitemsep]
		\item $ x:A $, ``formula $ A $ is true at world $ x $'';
		\item  $ a \fe A $, ``$ A $ is true at some world of neighbourhood $ a $'';
		\item $ a \fu A $,  ``$ A $ is true at all worlds of neighbourhood $ a $'';
		\item $ x \barra{a} A|B $, ``there exists $ \beta\in N(x)$ such that $\beta \subseteq \alpha $, $ \beta \fe A $, and $ \beta \fu A\rightarrow B$''.
	\end{itemize}
	We use $ \{x\} $ to denote a neighbourhood consisting of exactly one element. 
\end{definition}

\noindent Relational atoms and labelled formulas are defined in correspondence with semantic notions. Relational atoms describe the structure of the neighbourhood model, whereas labelled formulas are defined in correspondence with the forcing relations at a world ($ x \Vdash A$) and at a neighbourhood ($a \fe A$, $ a \fu A $). 
Formula  $  x \barra{a} A|B $ introduces a semantic condition corresponding to the consequent of the right-hand side of $ (*) $. The reason for the introduction of this formula is that $ (*) $ is too rich to be expressed by a single rule. 
Thus we need to break $ (*) $ into two smaller conditions, one (the antecedent) covered by rules for formulas $ x: A>B $ and the other (the consequent) covered by $ x \barra{a} A|B $.

\begin{definition}
	Sequents of $ \lab{CL^*} $ are expressions $ \Gamma \Rightarrow \Delta $
	where $ \Gamma $ and $ \Delta $ are multisets of relational atoms and labelled formulas, and relational atoms may occur only in $ \Gamma $.
\end{definition}

\begin{figure}[h!]
	\begin{adjustbox}{max width= \textwidth}
		\begin{tabular}{|c c|}
			\hline 
			& \\
			\multicolumn{2}{|l|}{\textbf{Initial sequents}} \\ [1ex]
			$\infer[\init]{ x:p, \Gamma \Rightarrow \Delta, x:p}{} $ & $\infer[\Lbot]{ x:\bot, \Gamma \Rightarrow \Delta}{}  $ \\ [1ex]
			\multicolumn{2}{|l|}{\textbf{Rules for local forcing}} \\ [1ex]
			$ \infer[\Lfu]{ x \in a, a \Vdash^{\forall} A, \Gamma \Rightarrow \Delta }{ x:A,  x \in a, a \Vdash^{\forall} A, \Gamma \Rightarrow \Delta} $ & 
			$ \infer[\Rfu \  \mathsf{(x!)}]{\Gamma \Rightarrow \Delta, a \fu A }{ x \in a, \Gamma \Rightarrow \Delta, x:A } $  \\[1ex]
			$ \infer[\Lfe  \ \mathsf{(x !)}]{ a \Vdash^{\exists}A, \Gamma \Rightarrow \Delta }{ x \in a, x:A, \Gamma \Rightarrow \Delta} $ & 
			$ \infer[\Rfe]{  x \in a, \Gamma \Rightarrow \Delta, a \Vdash^{\exists}A }{  x \in a, \Gamma \Rightarrow \Delta, x:A, a \Vdash^{\exists}A } $  \\[1ex]
			\multicolumn{2}{|l|}{\textbf{Propositional rules}}  \\[1ex]
			$ \infer[\Land]{x:A\wedge B , \Gamma \Rightarrow \Delta}{x:A, x:B, \Gamma \Rightarrow \Delta} $ & $ \infer[\Rand]{\Gamma \Rightarrow \Delta,x:A\wedge B}{ \Gamma \Rightarrow \Delta,x:A & \Gamma \Rightarrow \Delta, x:B} $\\[1ex]
			$ \infer[\Lor]{x: A\lor B, \Gamma \Rightarrow \Delta}{x:A, \Gamma \Rightarrow \Delta & x:B, \Gamma \Rightarrow \Delta} $ &
			$ \infer[\Ror]{\Gamma \Rightarrow \Delta,x:A\lor B}{\Gamma \Rightarrow \Delta,x:A, x:B} $\\[1ex]
			$ \infer[\mathsf{L}\rightarrow]{ x: A\rightarrow B, \Gamma \Rightarrow \Delta }{\Gamma \Rightarrow \Delta, x:A & x:B, \Gamma \Rightarrow \Delta } $ & $ \infer[\mathsf{R}\rightarrow]{ \Gamma \Rightarrow \Delta, x:A\rightarrow B }{ x:A, \Gamma \Rightarrow \Delta, x:B} $\\[1ex]
			\multicolumn{2}{|l|}{\textbf{Rules for the conditional}} \\ [1ex]
			\multicolumn{2}{|c|}{ $ \infer[\Rcon \ \mathsf{(a!)}]{ \Gamma \Rightarrow \Delta , x: A>B }{ a \in N(x), a \Vdash ^{\exists}A, \Gamma \Rightarrow \Delta, x \Vdash_{a} A|B}   $  } \\[1ex]
			\multicolumn{2}{|c|}{ $ \infer[\Lcon]{a \in N(x), x: A>B, \Gamma \Rightarrow \Delta }{ a \in N(x), x: A>B, \Gamma \Rightarrow \Delta, a \Vdash^{\exists} A & x \Vdash_{a} A|B, a \in N(x), x: A>B, \Gamma \Rightarrow \Delta } $  } \\[1ex]
			\multicolumn{2}{|c|}{$ \infer[\Rbar]{ c  \in N(x), c \subseteq a, \Gamma \Rightarrow \Delta, x \Vdash_{a} A|B }{ c  \in N(x), c \subseteq a, \Gamma \Rightarrow \Delta, x \Vdash_{a} A|B, c \Vdash^{\exists}A & c  \in N(x), c \subseteq a, \Gamma \Rightarrow \Delta, x \Vdash_{a} A|B, c \Vdash^{\forall}A\rightarrow B  } $}\\[1ex]
			\multicolumn{2}{|c|}{$ \infer[\Lbar \ \mathsf{(c!)}]{ x\Vdash_a A|B, \Gamma \Rightarrow \Delta }{ c  \in N(x), c \subseteq a, c \Vdash^{\exists}A, c \Vdash^{\forall} A\rightarrow B, \Gamma \Rightarrow \Delta } $}\\ [1ex]
			\multicolumn{2}{|l|}{\textbf{Rules for inclusion}} \\ [1ex]
			\multicolumn{2}{|c|}{
				$\infer[\Ref]{\Gamma \Rightarrow \Delta}{ a \subseteq a, \Gamma \Rightarrow \Delta } $ \quad $ \infer[\Trans]{ c \subseteq  b, b \subseteq a, \Gamma \Rightarrow \Delta }{ c \subseteq a, c \subseteq b, b \subseteq a, \Gamma \Rightarrow \Delta } $ \quad
				$ \infer[\Lc]{ x \in a, a \subseteq b, \Gamma \Rightarrow \Delta }{ x\in a, a \subseteq b, x \in b, \Gamma \Rightarrow \Delta } $}\\[1ex]
			\hline
		\end{tabular}
	\end{adjustbox}
	\caption{Sequent calculus $ \lab{CL}  $} 
	\label{rules_CL}
	
\end{figure}

\begin{figure}[h!]
	\begin{adjustbox}{max width= \textwidth}
		\begin{tabular}{|c c|}
			\hline 
			& \\
			\multicolumn{2}{|l|}{\textbf{Rules for extensions}} \\[1ex]
			$ \infer[\N \ \mathsf{(a !)}]{\Gamma \Rightarrow \Delta}{a \in N(x),\Gamma \Rightarrow \Delta}$ &  $ \infer[\emp \ \mathsf{(y!) (\star)}]{a \in N(x), \Gamma \Rightarrow \Delta}{y \in a, a \in N(x), \Gamma \Rightarrow \Delta} $\\[1ex]
			$\infer[\T \ \mathsf{(a!)}]{\Gamma \Rightarrow \Delta}{ x \in a, a \in N(x),\Gamma \Rightarrow \Delta}$ &
			$ \infer[\W]{a \in N(x),\Gamma  \Rightarrow \Delta}{x \in a, a \in N(x),\Gamma  \Rightarrow \Delta} $ \\[1ex] 
			$ \infer[\Single]{ \{x\} \in N(x),\Gamma \Rightarrow \Delta}{ x \in \{x\}, \{x\} \in N(x) ,\Gamma \Rightarrow \Delta } $
			&
			$ \infer[\C]{ a \in N(x),\Gamma \Rightarrow \Delta}{ \{ x\} \in N(x), \{x\} \subseteq a, a \in N(x),\Gamma \Rightarrow \Delta }
			$\\[1ex] 
			$ \infer[\Repl{1} \ (*)]{ y \in \{x\}, At(x), \Gamma \Rightarrow \Delta }{ y \in \{x\}, At(x), At(y), \Gamma \Rightarrow \Delta  } $& 
			$ \infer[\Repl{2}\ (*)]{ y \in \{x\}, At(y), \Gamma \Rightarrow \Delta }{ y \in \{x\}, At(x), At(y), \Gamma \Rightarrow \Delta  } $\\[1ex]
			\multicolumn{2}{|c|}{$ \infer[\Unif{1} \ \mathsf{(c!)}]{a \in N(x), y \in a, b \in N(y), z \in b, \Gamma \Rightarrow \Delta}{z \in c,c \in N(x),a \in N(x), y \in a, b \in N(y), z \in b ,\Gamma \Rightarrow \Delta  } $}\\[1ex] 
			\multicolumn{2}{|c|}{$ \infer[\Unif{2}\ \mathsf{(c!)}]{a \in N(x), y \in a, b \in N(x), z \in b, \Gamma \Rightarrow \Delta}{z \in c,c \in N(y),a \in N(x), y \in a, b \in N(y), z \in b ,\Gamma \Rightarrow \Delta} $}\\[2ex]
			$ \infer[\Abs{1}]{a \in N(x), y \in a, b \in N(x), \Gamma \Rightarrow \Delta}{b \in N(y), a \in N(x), y \in a, b \in N(x), \Gamma \Rightarrow \Delta} $
			&
			$ \infer[\Abs{2}]{ a \in N(x), y \in a, b \in N(y), \Gamma \Rightarrow \Delta}{ b \in N(x), a \in N(x), y \in a, b \in N(y), \Gamma \Rightarrow \Delta }
			$ 
			\\[1ex]
			\multicolumn{2}{|l|}{\textbf{Rules obtained by closure conditions}}\\[1ex] 
			$ \infer[\Unif{1}^* \ \mathsf{(c!)}]{a \in N(x), x \in a, z \in a, \Gamma \Rightarrow \Delta}{z \in c, c \in N(x), a \in N(x), x \in a, z \in a, \Gamma \Rightarrow \Delta} $
			&
			$ \infer[\Abs{1}^*]{a \in N(x), y \in a, \Gamma \Rightarrow \Delta}{ a \in N(y), a \in N(x), y \in a, \Gamma \Rightarrow \Delta} $\\[1ex]
			\multicolumn{2}{|c|}{$ \infer[\Unif{1}^{**} \ \mathsf{(c!)}]{a \in N(x), y \in a, a \in N(y), \Gamma \Rightarrow \Delta}{y \in c, c \in N(x), a \in N(x), y \in a, a \in N(y), \Gamma \Rightarrow \Delta} $}\\[1ex]
			\multicolumn{2}{|c|}{$ \infer[\Unif{2}^{**}]{a \in N(x), y \in a, \Gamma \Rightarrow \Delta}{y \in c, c \in N(y), a \in N(x), y \in a, a \in N(y), \Gamma \Rightarrow \Delta} $}\\[1ex]
			\multicolumn{2}{|l|}{$(*) \  At(x) := x:P, x \in a, a \in N(x), x \in \{z\}$, for $ P $ atomic formula.}\\ [1ex]
			\hline
		\end{tabular}
	\end{adjustbox}
	
	\bigskip
	
	\begin{adjustbox}{max width= \textwidth}
		\begin{tabular}{c}
			$ \lab{CL^N} = \lab{CL} + \N + \emp ; \, \lab{CL^T} = \lab{CL^N} + \T ; \,  \lab{CL^W} = \lab{CL^T} + \W ; $\\[0.5ex]
			$  \lab{CL^C} = \lab{CL^W} + \C + \Single + \Repl{1} + \Repl{2} ; $\\[0.5ex]
			$ \lab{CL^U} = \lab{CL} + \Unif{1} + \Unif{2} ; 
			\,   \lab{CL^{NU/TU/WU/CU}} =  \lab{CL^{N/T/W/C}}; 
			$\\[0.5ex]
			$ \lab{CL^A} = \lab{CL} + \Abs{1} + \Abs{2} 
			; \,  \lab{CL^{NA/TA/WA/CA}} =  \lab{CL^{N/T/W/C}} + \Abs{1} + \Abs{2}. 
			$\\[0.5ex]
		\end{tabular}
	\end{adjustbox}
	\caption{Sequent calculi for extensions of $ \lab{CL}  $} 
	\label{rules_extensions}
\end{figure}

\noindent Figure \ref{rules_CL} contains the rules for  $ \PCL $, whereas Figure \ref{rules_extensions} shows the rules for extensions of $ \PCL $. We write $  \mathsf{(a!)} $ as a side condition expressing the requirement that label $ a $ should not occur in the conclusion of a rule. Propositional rules are standard. Rules for local forcing make explicit the meaning of the forcing relations $ \fu  $ and $ \fe $. Rules for the conditional are defined on the basis of the truth condition for $ > $ in neighbourhood models. 

Each rule of Figure \ref{rules_extensions} is defined in correspondence with the frame conditions on extensions of $ \PCL $. For total reflexivity and weak centering, the frame condition can be formalized by means of a single rule. 
Rule  $ \emp $ stands for the requirement of non-emptiness in the model, and it is added to capture the condition of normality, along with rule $ \N $\footnote{The rule needs not to be added to the calculus $ \lab{CL} $: the rules of this calculus always introduce non-empty neighbourhoods, and  the system can be shown to be complete with respect to the axioms of $ \PCL $ (Theorem \ref{theorme:completeness_synth}). However, the rule is needed to express the condition of normality: the new neighbourhood introduced by rule $ \N $ could be empty. }. 

Centering  requires four rules:  
Rule  $ \C $ ensure the Centering condition by introducing formulas with neighbourhood label $ \{x\} $ (the singleton). Rule $ \Single $ ensures that the singleton contains at least one element, and rules $ \Repl{1} $ and $ \Repl{2} $ that it contains at most one element: if there is another element $ y \in \{x\}$, then the properties holding for $ x $ hold also for $ y $ (i.e. $ x $ and $ y $ are the same element).  

Similarly, extensions with  uniformity and absoluteness are defined by adding multiple rules.
Rules $ \Unif{1} $ and $ \Unif{2} $ encode the semantic condition of uniformity. In order to avoid the symbol $ \bigcup $ in the sequent language, the rules translate the following two conditions which, taken together, are equivalent to uniformity.
\begin{quote}
	$ \Unif{1} $:  If there exist $ \alpha \in N(x) $ such that $ y \in \alpha $  and $ \beta \in N(y) $ such that $ z \in \beta $, then there exists $ \gamma \in N(x) $ such that $ z \in \gamma $;
	
	$ \Unif{2} $: If there exist $ \alpha \in N(x) $ such that $ y \in \alpha $  and  $ \beta \in N
	(x)
	$ such that $ z \in \beta $, then there exists $ \gamma \in N(y) $ such that $ z \in \gamma $. 
\end{quote}
As for absoluteness, rules $ \Abs{1} $ and $ \Abs{2} $ encode the information that for any $ x \in W $, given $ a \in N(x) $ and $ y \in a $, if $ \beta \in N(x) $ then $ \beta \in N(y) $ (rule $ \Abs{1} $), and  if $ \beta \in N(y) $, then $ \beta \in N(x) $ (rule $ \Abs{2} $). Thus, $ N(x)= N(y) $.

\

	The sequent calculi $ \lab{CL^*} $ can be modularly extended to cover Lewis' logics (refer to the end of Section 2). To obtain a calculus for $ \VV $, it suffices to add to $ \lab{CL} $ a structural rule corresponding to the semantic condition of nesting:
	$$\infer[\Nes]{a \in N(x), b \in N(x), \Gamma \Rightarrow \Delta}{ a \subseteq b, a \in N(x), b \in N(x), \Gamma \Rightarrow \Delta & b \subseteq a, a \in N(x), b \in N(x), \Gamma \Rightarrow \Delta}
	$$
	The rule can be added to calculi for extensions of $ \PCL $ to obtain calculi for the corresponding logics extending $ \VV $\footnote{Refer to \cite{girlando2018counterfactuals} for a simpler labelled calculus for $ \VV $, which makes use of the connective of \textit{comparative plausibility} instead of the conditional operator.}.

\

It might happen that some instances of rules of $ \lab{CL^*} $ present a duplication of the atomic formula in the conclusion: for example, an instance of $ \Unif{1} $ with $ a=b $ displays two formulas $ a \in N(x) $ in the conclusion. Since we want contraction to be height-preserving admissible, we deal with these cases by adding to the sequent calculus a new rule, in which the duplicated formulas are contracted into one. Such an operation is called applying a \textit{closure condition} to the rules (cf. \cite{negri2005proof}). Thus, rule $ \Unif{1}^* $ is the rule obtained applying the closure condition to $ \Unif{1} $ in case $ a=b $ and $ x =y $; rules $ \Unif{1}^{**} $ and $ \Unif{2}^{**} $ are obtained from $ \Unif{1} $ and $ \Unif{2} $, in case $ a=b $ and $ y=z $; and finally, $ \Abs{1}^* $ is obtained from $ \Abs{1} $ in the case $ a=b $. There is no need to define  additional rules which can be generated by the closure condition, since such rules either collapse or are subsumed by other rules of the calculus. For instance, the rule obtained applying the closure condition to $ \Unif{2} $, case $ a=b $ and $ x=y $, is the following:
$$
\infer[\Unif{2}^*]{a \in N(x), x \in a, \Gamma \Rightarrow \Delta}{z \in c, c \in N(x), a \in N(x), x \in a, \Gamma \Rightarrow \Delta}
$$
and this is the same instance we obtain applying the closure condition to $ \Unif{1}^* $. However, the  rules added by closure condition are not needed to prove completeness of the calculi; for this reason, we have not included them in the following sections (e.g. in the termination proof). 

\

To prove soundness of the rules with respect to the corresponding system of logics, we need to interpret relational atoms and labelled formulas in neighbourhood models. 
The notion of realization interprets the labels in neighbourhood frames, thus connecting the syntactic elements of the calculus with the semantic elements of the model.

\begin{definition} \label{realization}
	Let $ \mathcal{M}= \langle W, N, \llbracket \ \rrbracket \rangle $ be a neighbourhood model for $ \PCL $ or its extensions, $ \mathcal{S} $ a set of world labels and $ \mathcal{N} $ a set of neighbourhood labels. An $ \mathcal{SN} $\textit{-realization} over $ \mathcal{M} $ consists of a pair of functions $( \rho , \sigma) $ such that:
	\begin{itemize}[noitemsep]
		\item $ \rho:  \mathcal{S} \rightarrow W $ is  the function assigning to each $ x\in \mathcal{S} $ an element $ \rho (x) \in W $;
		\item $ \sigma: \mathcal{N} \rightarrow \mathcal{P}(W) $ is the function assigning to each $ a \in \mathcal{N} $ a neighbourhood $ \sigma(a) \in N(w) $, for $ w \in W $.
	\end{itemize}
	We introduce the notion of satisfiability of a formula $ \mathcal{F} $ under an $ \mathcal{SN} $-realization by cases on the form of $ \mathcal{F} $:
	\begin{itemize}[noitemsep]
		\item $ \mathcal{M} \vDash_{\rho, \sigma} a \in N(x) $ if $ \sigma(a) \in N(\rho(x) ) $;
		\item $ \mathcal{M} \vDash_{\rho, \sigma} a\subseteq b $ if $ \sigma (a) \subseteq \sigma(b) $;
		\item $ \mathcal{M} \vDash_{\rho, \sigma} y \in \{x\} $ if $ \rho(y) \in \sigma ( \{ x\}) $;
		\item $ \mathcal{M} \vDash_{\rho, \sigma} x:P $ if $ \rho (x) \Vdash P $ \footnote{This definition is extended in the standard way to formulas obtained by the classical propositional connectives.};
		\item $ \mathcal{M} \vDash_{\rho, \sigma} a \fu A $ if $ \sigma(a) \fu A  $;
		\item $ \mathcal{M} \vDash_{\rho, \sigma} a\fe A $ if $ \sigma(a) \fe A $;
		\item $ \mathcal{M} \vDash_{\rho, \sigma} x \Vdash_a A|B$ if $ \sigma(a) \in N(\rho(x)) $ and for some $ \beta \subseteq \sigma(a)$ it holds that $ \beta \Vdash^\exists A $ and $ \beta \Vdash^\forall A\rightarrow B $;
		\item $ \mathcal{M} \vDash_{\rho, \sigma} x : A>B$ if for all $ \sigma(a) \in N(\rho(x)) $, if  $ \mathcal{M} \vDash_{\rho, \sigma} a\fe A $ then $ \mathcal{M} \vDash_{\rho, \sigma} x \Vdash_a A|B$. 
	\end{itemize}

	\noindent Given a sequent $ \Gamma \Rightarrow \Delta $, let $ \mathcal{S} $, $ \mathcal{N} $ be the sets of world and neighbourhood labels occurring in $ \Gamma \cup \Delta$, and let $ (\rho, \sigma) $ be an $ \mathcal{SN} $-realization. Define $ \mathcal{M} \vDash _{\rho, \sigma} \Gamma \Rightarrow \Delta $ if either $ \mathcal{M} \nvDash_{\rho, \sigma} F $ for some $ F \in \Gamma $ or $ \mathcal{M}\vDash_{\rho, \sigma} G  $ for some $ G\in \Delta $. Define validity under all realizations by $ \mathcal{M} \vDash \Gamma \Rightarrow \Delta  $ if $ \mathcal{M}\vDash_{\rho, \sigma} \Gamma \Rightarrow \Delta $ for all $ (\rho, \sigma) $ and say that a sequent is valid in all neighbourhood models if $ \mathcal{M}\vDash_{\rho, \sigma} \Gamma \Rightarrow \Delta $ for all models $ \mathcal{M} $.
	
\end{definition}

\begin{theorem}[Soundness]
	If a sequent $ \Gamma \Rightarrow \Delta $ is derivable in $ \lab{CL^*}$, then it is valid in the corresponding class of neighbourhood models. 
\end{theorem}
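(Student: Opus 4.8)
The plan is to prove soundness by induction on the height $n$ of a derivation of $\Gamma\Rightarrow\Delta$ in $\lab{CL^*}$, where at the inductive step I argue contrapositively and \emph{locally} for each rule. Concretely, for a rule $R$ with conclusion $\Gamma\Rightarrow\Delta$, premises $\Gamma_i\Rightarrow\Delta_i$, and the class $\mathcal{K}$ of models corresponding to the fixed calculus (the class of all neighbourhood models for $\lab{CL}$, and the appropriate subclass from Definition \ref{def:extensions_neigh} once the associated relational rules are present), I would show: if $\mathcal{M}\in\mathcal{K}$ and $(\rho,\sigma)$ is an $\mathcal{SN}$-realization with $\mathcal{M}\nvDash_{\rho,\sigma}\Gamma\Rightarrow\Delta$ — i.e.\ every formula of $\Gamma$ is satisfied and every formula of $\Delta$ falsified under $(\rho,\sigma)$ — then some premise $\Gamma_i\Rightarrow\Delta_i$ can be falsified over the same $\mathcal{M}$ by a realization obtained from $(\rho,\sigma)$, contradicting the induction hypothesis. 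The base case is immediate from Definition \ref{realization}: in $\init$ the labelled formula $x\colon p$ occurs on both sides, hence is either falsified on the left or satisfied on the right; and $x\colon\bot$ is never satisfied, so $\Lbot$ is valid.

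The propositional rules and the local-forcing rules $\Lfu,\Rfu,\Lfe,\Rfe$ are then the routine unwinding of the satisfaction clauses. The one recurring point requiring care is the eigenvariable side conditions ($(x!)$, $(a!)$, $(c!)$, $(y!)$), occurring in $\Rfu$, $\Lfe$, $\Rcon$, $\Lbar$, and, in the extended calculi, in $\N,\emp,\T,\Unif{1},\Unif{2}$ and their closure variants: since the fresh label does not occur in the conclusion, one redefines $\rho$ (resp.\ $\sigma$) on it without altering the truth values of any formula of $\Gamma,\Delta$, interpreting the fresh label as the witness produced by falsity of the conclusion — e.g.\ if $\Gamma\Rightarrow\Delta,a\fu A$ is falsified, then $\sigma(a)\not\fu A$, so some $w\in\sigma(a)$ has $w\nVdash A$, and setting $\rho(x):=w$ falsifies the premise $x\in a,\Gamma\Rightarrow\Delta,x\colon A$ of $\Rfu$. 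The conditional rules $\Rcon,\Lcon,\Rbar,\Lbar$ are handled by unfolding the clause for $x\barra{a}A|B$ jointly with that for $x\colon A>B$: these four rules are precisely the proof-theoretic realization of the split of $(*)$ into an "antecedent'' part (testing $x\colon A>B$ against a neighbourhood) and a "consequent'' part ($x\barra{a}A|B$), so each premise corresponds to one direction of a biconditional read off from $(*)$ and from the definition of $\barra{a}$.

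For the inclusion rules $\Ref,\Trans,\Lc$, validity is immediate since set inclusion is reflexive and transitive and $\rho(x)\in\sigma(a)\subseteq\sigma(b)$ yields $\rho(x)\in\sigma(b)$; these are geometric rules and fit the standard scheme of \cite{negri2005proof}. The rules of Figure \ref{rules_extensions} are treated in the same way, now additionally invoking that $\mathcal{M}$ lies in the extended class, each rule being the literal transcription of a clause of Definition \ref{def:extensions_neigh}: $\N$ and $\emp$ use, respectively, $N(\rho(x))\neq\emptyset$ and non-emptiness of neighbourhoods; $\T$ uses total reflexivity, $\W$ weak centering; $\C,\Single,\Repl{1},\Repl{2}$ use centering, under the intended reading that $\sigma$ sends the singleton label $\{x\}$ to $\{\rho(x)\}$, so that $\rho(y)\in\sigma(\{x\})$ forces $\rho(y)=\rho(x)$ and the replacement rules transfer each component of $At(x)$ to $y$; $\Unif{1},\Unif{2}$ use the two halves of (local) uniformity and $\Abs{1},\Abs{2}$ the two halves of (local) absoluteness. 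The rules added by closure conditions ($\Unif{1}^{*}$, $\Unif{1}^{**}$, $\Unif{2}^{**}$, $\Abs{1}^{*}$) need no separate argument: every one of their instances is an instance of the rule they are obtained from with two labels identified, hence already sound; the same scheme covers $\Nes$ over nested models.

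I expect the main obstacle to be the conditional rules, especially $\Rbar$ and $\Lbar$: one has to check that the branching of $\Rbar$ and the eigenvariable condition on $\Lbar$ line up exactly with the $\exists/\forall$ alternation packaged inside $x\barra{a}A|B$, and that the witnessing sub-neighbourhood extracted from the satisfaction clause may legitimately be taken as the value of the fresh neighbourhood label. A secondary, more clerical, difficulty is the singleton bookkeeping in the centering fragment $\lab{CL^C}$, where soundness of $\Repl{1},\Repl{2}$ (and of $\C,\Single$) rests on the convention fixing the interpretation of the label $\{x\}$.
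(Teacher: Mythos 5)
Your proposal is correct and follows essentially the same route as the paper: a rule-by-rule induction on derivation height using the realization semantics of Definition \ref{realization}, with the conditional rules handled by unfolding the satisfaction clauses for $x:A>B$ and $x\barra{a}A|B$ jointly. You in fact spell out more than the paper does (notably the reinterpretation of fresh labels for the eigenvariable rules and the extension rules), but the underlying argument is the same.
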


\begin{proof}
	The proof is by straightforward induction on the height of the derivation, employing the notion of realization defined above. By means of example, we show soundness of the left and right rule for the conditional operator. 
	
	$ [\Lcon] $ 
	From a neighbourhood model and a realization which validates the premisses we construct a neighbourhood model which validates the conclusion. Let $ \mathcal{M} \vDash_{\rho, \sigma} a \in N(x), x:A>B, \Gamma \Rightarrow \Delta, a \fe A $ and $ \mathcal{M} \vDash_{\rho, \sigma} x \barra{a} A|B, a \in N(x), x:A>B, \Gamma \Rightarrow \Delta $. The only relevant case is the one in which $ \mathcal{M} \vDash_{\rho, \sigma} a \fe A $ and $ \mathcal{M} \nvDash_{\rho, \sigma}  x \barra{a} A|B $. From the former we have that $ \sigma(a) \fe A $; from the latter that for $ \sigma(a) \in \rho(x) $ and for all $ \beta \in \sigma(\alpha) $ it holds that either $ \beta \nVdash^\exists A $ or $  \beta \nVdash^\forall A \rightarrow B $. By definition, this means that 
	$ \mathcal{M} \nvDash_{\rho, \sigma} x: A >B$, for  $ \sigma(a) \in \rho(x) $; and thus, $ \mathcal{M} \vDash_{\rho, \sigma} a \in N(x), x:A>B, \Gamma \Rightarrow \Delta $. 
	
	$ [\Rcon] $ Suppose $ \mathcal{M} \vDash_{\rho, \sigma} a \in N(x), a \fe A, \Gamma \Rightarrow \Delta, x \barra{a} A|B $. We show that the conclusion is valid in the same model, under the same realization. There are two relevant cases: either the one in which $ \mathcal{M} \nvDash_{\rho, \sigma} a \fe A $ or the one in  which $ \mathcal{M} \vDash_{\rho, \sigma}  x \barra{a} A|B $. In the former case we have that $ \sigma(a) \nVdash^\exists A $, for $ \sigma(a) \in \rho (x) $. In the latter case, we have that for $ \sigma(a) \in \rho (x) $, there exists $ \beta \in \sigma(\alpha) $ such that $ \beta \Vdash^\exists A $ and $  \beta \Vdash^\forall A \rightarrow B $. In both cases it holds by definition that $ \mathcal{M} \vDash_{\rho, \sigma} x:A>B $; thus, $ \mathcal{M} \vDash_{\rho, \sigma} \Gamma \Rightarrow \Delta, x:A>B $.
	
\end{proof}

\section{Structural properties and syntactic completeness}

In this section we prove the main structural properties of  calculi $ \lab{CL^*} $.  
We start with some preliminary definitions and lemmas. By \textit{height} of a derivation we mean the number of nodes occurring in the longest derivation branch, minus one. 
We write $ \vdash^n \Gamma \Rightarrow \Delta $ meaning that there is a derivation of $ \Gamma \Rightarrow \Delta $ in $ \lab{CL^*} $ with height bounded by $ n $.

\begin{definition} 
	The weight of relational atoms is 0. As for the other labelled formulas, the label of formulas of the form $ x:A $ and $ x\Vdash_a A|B $ is $ x $; the label of formulas $ a\fu A $ and $ a\fe A $ is $ a $. We denote by $ l(\mathcal{F}) $ the label of a formula $ \mathcal{F} $, and by $ p(\mathcal{F}) $ the pure part of the formula, i.e., the part of the formula without the label and without the forcing relation. The \textit{weight} of a labelled formula is defined as a lexicographically ordered pair 
	$$ \langle w(p(\mathcal{F})), w(l(\mathcal{F}) ) \rangle $$ 
	where
	\begin{itemize}[noitemsep]
		\item for all world labels $ x $, $ w(x)=0 $;
		\item for all neighbourhood labels $ a $, $ w(a)=1 $;
		\item $w(p) = w (\bot) = 1$;
		\item $ w(A\circ B) = w(A) + w(B) + 1 $ for $ \circ $ conjunction, disjunction or implication;
		\item $ w(A|B)= w(A) +w(B)+2 $;
		\item $ w(A>B) = w(A)+w(B)+3 $.
	\end{itemize}
\end{definition}


\noindent The definition of substitution of labels given in \cite{negri2005proof} can be extended in an obvious way to the relational atoms and labelled formulas of $ \lab{CL^*} $. According to this definition we have, for example,  $(a\fe A)[b/a] \equiv b\fe A$, and $(x \barra{a} B|A)[y/x]\equiv y \barra{a} B|A$.
The calculus is routinely shown to enjoy the property of height preserving substitution both of world and neighbourhood labels. The proof is a straightforward extension of the same proof in \cite{negri2005proof}.
\begin{proposition}\label{subst}
	\
	\begin{enumerate}[label=(\roman*), noitemsep]
		\item If $\vdash^n \Gamma \Rightarrow \Delta$, then $\vdash^n \Gamma{[y/x]}\Rightarrow\Delta{[y/x]}$;
		\item If $\vdash^n \Gamma \Rightarrow \Delta$, then $\vdash^n \Gamma{[b/a]}\Rightarrow\Delta{[b/a]}$.
	\end{enumerate} 
\end{proposition}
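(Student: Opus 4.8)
The plan is to establish both items simultaneously by induction on the height $n$ of the derivation of $\Gamma \Rightarrow \Delta$, since the argument for the world-substitution $[y/x]$ and the one for the neighbourhood-substitution $[b/a]$ have exactly the same shape; I write $[t/s]$ for either one. In the base case the sequent has height $0$, i.e. it is an instance of $\init$ or of $\Lbot$: applying $[t/s]$ to $x{:}p,\Gamma \Rightarrow \Delta, x{:}p$ leaves the (possibly renamed) atom on both sides, so the result is again an instance of $\init$, while $(x{:}\bot)[t/s]$ still has the form $z{:}\bot$, so $\Lbot$ still applies. In both cases the substituted sequent is derivable with height $0$.

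For the inductive step I would case on the last rule $R$, whose premises are derivable at height $\leq n-1$. For every rule \emph{without} an eigenvariable condition — the propositional rules, $\Lfu$, $\Rfe$, $\Lcon$, $\Rbar$, $\Ref$, $\Trans$, $\Lc$, $\W$, $\Single$, $\C$, $\Repl{1}$, $\Repl{2}$, $\Abs{1}$, $\Abs{2}$, and the closure variant $\Abs{1}^*$ — substitution commutes with the rule schema: by the induction hypothesis the premises become derivations, of the same height, of the substituted premises, and one application of $R$ to these yields precisely the substituted conclusion. The only point to note is that $[t/s]$ may identify two distinct labels of a rule instance (turning $a\subseteq b$ into $a\subseteq a$, or duplicating a relational atom in a premise of $\Unif{1}$ when $a$ becomes $b$); this is harmless, since the substituted premise already contains the required, possibly duplicated, formulas, so the substituted instance is still a legitimate instance of $R$. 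The conditional rules are no exception, because $(x\barra{a}A|B)[t/s]$ is again a labelled formula of the very same syntactic shape.

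The delicate case is when $R$ carries an eigenvariable condition: a world eigenvariable in $\Rfu$, $\Lfe$, $\emp$, and a neighbourhood eigenvariable in $\Rcon$, $\Lbar$, $\N$, $\T$, $\Unif{1}$, $\Unif{2}$ (together with their starred closure variants). Let $v$ be that eigenvariable. Since $v$ is fresh in the conclusion, it does not occur in $\Gamma \Rightarrow \Delta$, so the variable $s$ being replaced is never $v$, and the only way $[t/s]$ can threaten the side condition is when $v$ and $t$ are of the same sort and $v=t$. If that does not happen — in particular whenever $v$ has the other sort — I simply apply the induction hypothesis to the premise(s) and reapply $R$. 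If $v=t$, I first choose a label $v'$ of the appropriate sort not occurring anywhere in the derivation, use the induction hypothesis with the renaming $[v'/v]$ to get a height-preserving derivation of the premise with $v'$ in place of $v$, then apply the induction hypothesis again with $[t/s]$, and finally close with $R$ using $v'$ as eigenvariable, whose freshness in the conclusion holds by construction. Each invocation of the induction hypothesis is at height $\leq n-1$ and preserves height, so the substituted conclusion is derived with height $\leq n$.

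The main obstacle is just this bookkeeping for the eigenvariable rules: one must keep the two sorts of labels apart, check that each condition $(v!)$ forbids $v$ only from the \emph{conclusion} (so that collapsing other labels under substitution never violates it), and insert the preliminary renaming exactly when the target of the substitution coincides with an eigenvariable. As the calculi $\lab{CL^*}$ add no structurally new kind of rule over the labelled calculi of \cite{negri2005proof}, this is a routine adaptation of the corresponding argument there.
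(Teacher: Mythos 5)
Your proof is correct and follows exactly the route the paper intends: the paper itself gives no details, merely asserting that the property "is routinely shown" as "a straightforward extension of the same proof in \cite{negri2005proof}," i.e.\ the standard induction on derivation height with renaming of eigenvariables when they clash with the substituting label. Your write-up is a faithful and adequately careful instantiation of that argument for the rules of $\lab{CL^*}$.
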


The following Lemma, adapted from \cite{negri2005proof}, ensures derivability of generalized initial sequent. The proof proceeds by mutual induction on the weight of labelled formulas.

\begin{lemma}\label{generalized_initial_sequents}
	The following sequents are derivable in $ \lab{CL^*} $. 
	\begin{enumerate}[noitemsep]
		\item $ a \fe A, \Gamma \Rightarrow \Delta, a\fe A $
		\item $ a \fu A, \Gamma \Rightarrow \Delta, a \fu A $
		\item $ x\Vdash_a A|B, \Gamma \Rightarrow \Delta, x\Vdash_a A|B $
		\item $ x:A, \Gamma \Rightarrow \Delta, x:A $
	\end{enumerate}
\end{lemma}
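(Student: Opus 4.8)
The plan is to prove the four statements simultaneously by induction on the weight of the labelled formula occurring on both sides of the sequent, using the lexicographic order on the pairs $\langle w(p(\mathcal{F})), w(l(\mathcal{F}))\rangle$. The induction hypothesis is that all four statements hold for every formula of strictly smaller weight. The base case is $\mathcal{F}\equiv x:p$ for an atom $p$, where $x:p,\Gamma\Rightarrow\Delta,x:p$ is an instance of $\init$; the degenerate case $x:\bot$ is closed by $\Lbot$, which makes $x:\bot,\Gamma\Rightarrow\Delta$ derivable for any $\Delta$, in particular for $\Delta,x:\bot$. Since all rules involved belong to $\lab{CL}$, the derivations obtained work uniformly in every calculus of the family $\lab{CL^*}$.

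For items 1 and 2 the idea is to pass from the neighbourhood label $a$ to a fresh world label, which leaves the pure part $w(A)$ unchanged but strictly decreases the second component of the weight (from $w(a)=1$ to $w(x)=0$). For item 1, apply $\Lfe$ to the antecedent $a\fe A$ with a fresh $x$; this places $x\in a$ in the context, so $\Rfe$ may be applied to the succedent $a\fe A$, yielding $x\in a, x:A,\Gamma\Rightarrow\Delta,x:A,a\fe A$, which is derivable by the induction hypothesis for item 4 applied to $x:A$. Item 2 is the mirror image: apply $\Rfu$ (fresh $x$) to the succedent $a\fu A$, then $\Lfu$ to the antecedent, reaching $x:A, x\in a, a\fu A,\Gamma\Rightarrow\Delta,x:A$, again closed by the induction hypothesis for item 4. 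For the propositional subcases of item 4, with $A\equiv B\circ C$ for $\circ\in\{\wedge,\vee,\rightarrow\}$, apply the right rule $\Rand$/$\Ror$/$\Rright$ to the succedent, then the matching left rule $\Land$/$\Lor$/$\Lright$ to the antecedent; each leaf has the form $x:D,\Gamma'\Rightarrow\Delta',x:D$ with $D\in\{B,C\}$ and $w(D)<w(B\circ C)$, hence is derivable by the induction hypothesis.

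For item 3, apply $\Lbar$ to the antecedent $x\Vdash_a A|B$ with a fresh $c$, which supplies $c\in N(x)$ and $c\subseteq a$ in the context; then apply $\Rbar$ to the succedent occurrence. The two resulting leaves contain, respectively, $c\fe A$ and $c\fu A\rightarrow B$ on both sides, and are closed by the induction hypothesis for items 1 and 2, since $w(A)$ and $w(A\rightarrow B)=w(A)+w(B)+1$ are both strictly below $w(A|B)=w(A)+w(B)+2$. The remaining and only delicate case is item 4 with $A\equiv B>C$: apply $\Rcon$ to the succedent $x:B>C$ with a fresh $a$, producing $a\in N(x), a\fe B, x:B>C,\Gamma\Rightarrow\Delta,x\Vdash_a B|C$; since $a\in N(x)$ is now available, apply $\Lcon$ to the antecedent $x:B>C$ using this $a$. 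The left leaf has $a\fe B$ on both sides and is closed by the induction hypothesis for item 1; the right leaf has $x\Vdash_a B|C$ on both sides and is closed by the induction hypothesis for item 3. Both appeals are legitimate because $w(B)<w(B>C)$ and $w(B|C)=w(B)+w(C)+2<w(B)+w(C)+3=w(B>C)$ — which is exactly the reason the weight assignment was chosen so that $A|B$ outranks nothing above $A>B$.

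The main obstacle is the conditional case of item 4, where one must get the bookkeeping of rule applications right: $\Rcon$ must be applied before $\Lcon$ so that the neighbourhood atom $a\in N(x)$ needed by $\Lcon$ is present, and one must verify that the two auxiliary formulas $a\fe B$ and $x\Vdash_a B|C$ genuinely have smaller weight than $x:B>C$. The freshness side conditions $\mathsf{(a!)}$, $\mathsf{(x!)}$, $\mathsf{(c!)}$ never cause trouble, since the newly introduced label can always be chosen outside the current sequent. As every reduction strictly decreases the well-founded weight, the mutual induction closes.
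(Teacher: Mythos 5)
Your proof is correct and follows exactly the route the paper indicates: a mutual induction on the lexicographic weight $\langle w(p(\mathcal{F})), w(l(\mathcal{F}))\rangle$, with the decisive checks that $w(A|B) < w(A>B)$ and that passing from a neighbourhood label to a world label decreases the second component. The paper states only that the proof "proceeds by mutual induction on the weight of labelled formulas" without giving details, and your argument supplies precisely those details, including the correct ordering of $\Rcon$ before $\Lcon$ in the conditional case.
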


%


To prove admissibility of the cut rule, we need admissibility of the structural rules and invertibility of all the rules. The reader can find a detailed proof of these properties in \cite{girlandothesis}. Both lemmas are proved by induction on the height of the derivation.



\begin{lemma} 
	Let $ \mathcal{F} $ be a relational atom or a labelled formula. The rules of weakening and contraction are height-preserving admissible in $ \lab{CL^*} $: 
	$$
	\infer[\Wkl]{\mathcal{F}, \Gamma \Rightarrow \Delta}{\Gamma \Rightarrow \Delta} \quad 
	\infer[\Wkr]{\Gamma \Rightarrow \Delta,\mathcal{F}}{\Gamma \Rightarrow \Delta} \quad 
	\infer[\Ctrl]{\mathcal{F},\Gamma \Rightarrow \Delta}{\mathcal{F},\mathcal{F}, \Gamma \Rightarrow \Delta} \quad 
	\infer[\Ctrr]{\Gamma \Rightarrow \Delta,\mathcal{F}}{\Gamma \Rightarrow \Delta,\mathcal{F},\mathcal{F}}	
	$$
\end{lemma}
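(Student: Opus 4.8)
The two statements will be proved in the order listed — first weakening, then contraction — each by induction on the height $n$ of the given derivation, with a case distinction on the last rule applied.

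For \textbf{weakening}, the base case is immediate: an initial sequent ($\init$ or $\Lbot$) stays an initial sequent after an arbitrary $\mathcal{F}$ is added to the antecedent or succedent. For the inductive step, let the derivation of $\Gamma\Rightarrow\Delta$ end with a rule $r$ applied to premises of height $<n$; apply the induction hypothesis to each premise, weakening it with $\mathcal{F}$, and reapply $r$, obtaining $\mathcal{F},\Gamma\Rightarrow\Delta$ (resp.\ $\Gamma\Rightarrow\Delta,\mathcal{F}$) at the same height. The only point needing care is when $r$ carries an eigenvariable condition — the rules marked $(a!)$, $(x!)$, $(y!)$, $(c!)$, that is $\Rfu$, $\Lfe$, $\Rcon$, $\Lbar$, $\N$, $\emp$, $\T$, $\Unif{1}$, $\Unif{2}$ and their starred variants — and the new formula $\mathcal{F}$ happens to contain the eigenvariable of $r$: in that case one first applies the height-preserving substitution lemma (Proposition \ref{subst}) to rename the eigenvariable to a genuinely fresh one, and then proceeds as above.

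For \textbf{contraction}, the induction is again on height, with a case analysis on the last rule $r$ of the derivation of $\mathcal{F},\mathcal{F},\Gamma\Rightarrow\Delta$ (resp.\ $\Gamma\Rightarrow\Delta,\mathcal{F},\mathcal{F}$). If neither occurrence of $\mathcal{F}$ is principal in $r$, or if $r$ is one of the rules that copies its principal formula into every premise — such as $\Lfu$, $\Rfe$, $\Lcon$, $\Rbar$, $\Lc$, $\Trans$, and all the relational rules of Figure \ref{rules_extensions} — then both copies of $\mathcal{F}$ still occur in the premises; apply the induction hypothesis to the premises and reapply $r$. If instead one occurrence of $\mathcal{F}$ is principal in a rule that consumes its principal formula — the propositional rules, $\Rfu$, $\Lfe$, $\Rcon$, $\Lbar$ — one invokes height-preserving invertibility of $r$ (established separately, see \cite{girlandothesis}) on the copy of $\mathcal{F}$ surviving in the premise: inverting it produces the immediate components already introduced by $r$, now duplicated, so the induction hypothesis applies to them, after which one reapplies $r$. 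Eigenvariable conditions are again dealt with via Proposition \ref{subst}, and an initial sequent remains an initial sequent after contracting a duplicated atom.

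The steps I expect to be delicate are the following. First, the contraction of two copies of the auxiliary formula $x\Vdash_a A|B$, which requires the invertibility of both $\Rbar$ and $\Lbar$, the latter combined with renaming because of its eigenvariable $(c!)$. Second, and more importantly, the contraction of duplicated \emph{relational atoms} created by degenerate instances of the relational rules for uniformity, absoluteness and centering — for example an instance of $\Unif{1}$ with $a=b$, or of $\Abs{1}$ with $a=b$: it is precisely to keep contraction height-preserving in these situations that the calculus includes the rules obtained by the closure condition ($\Unif{1}^*$, $\Unif{1}^{**}$, $\Unif{2}^{**}$, $\Abs{1}^*$, and the analogous ones), and the argument here replaces the degenerate instance of the original rule by the corresponding closure-condition rule, whose conclusion is already the contracted sequent. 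Checking that every degenerate instance is covered by some closure-condition rule, or else collapses to or is subsumed by an already available rule, is the bookkeeping that makes the contraction case go through; the full details are in \cite{girlandothesis}.
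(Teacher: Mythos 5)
Your proposal is correct and follows exactly the route the paper takes: the paper merely states that both lemmas are proved by induction on the height of the derivation and defers the details to \cite{girlandothesis}, and your fleshed-out argument (height-preserving substitution for eigenvariables in weakening, height-preserving invertibility for principal-formula cases of contraction, and the closure-condition rules for degenerate instances of the relational rules) is precisely the standard machinery the authors rely on and explicitly motivate in Section 4. No gaps.
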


\begin{lemma} 
	All the rules of $ \lab{CL^*} $ are  height-preserving invertible: if the conclusion of a rule is derivable with derivation height $ n $, its premiss(es) are derivable with at most the same derivation height.
\end{lemma}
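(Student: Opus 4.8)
The plan is to split the rules of $\lab{CL^*}$ into two groups. The first group consists of those rules in which every premiss contains the whole conclusion as a sub-multiset, i.e.\ the premisses differ from the conclusion only by \emph{added} relational atoms and labelled formulas: these are $\Lfu$, $\Rfe$, $\Lcon$, $\Rbar$, the inclusion rules $\Ref$, $\Trans$, $\Lc$, and all the rules for extensions of Figure \ref{rules_extensions} together with the rules obtained by the closure conditions. For such a rule, height-preserving invertibility is immediate: if the conclusion is derivable with height $n$, then by the height-preserving admissibility of weakening proved just above, so is each premiss. (For instance in $\Lcon$ the principal formula $x:A>B$ and the atom $a\in N(x)$ survive in both premisses, which are obtained from the conclusion by adding $a\fe A$, resp.\ $x\barra{a}A|B$; in $\Rbar$ the two premisses add only $c\fe A$, resp.\ $c\fu A\rightarrow B$; in $\Repl{1},\Repl{2}$ one only adds a copy $At(y)$.)

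The second group consists of the logical rules whose principal formula does not reappear in the premiss(es), possibly with a freshly introduced label: $\Land,\Rand,\Lor,\Ror,\Lright,\Rright$, the forcing rules $\Rfu,\Lfe$, and the rules $\Rcon,\Lbar$ for the conditional and the auxiliary connective $\mid$. For these I would argue by induction on the height $n$ of the derivation of the conclusion, uniformly (equivalently, as a simultaneous induction over the whole group). As a preliminary I would note that for the four rules $\Rfu,\Lfe,\Rcon,\Lbar$ whose inverted premiss contains a fresh neighbourhood or world label, one may first apply Proposition \ref{subst} to rename, throughout the given derivation of the conclusion, every occurrence of the designated fresh label to a genuinely new one, so that the formulas added along the inverted branch never clash with an eigenvariable used in the derivation. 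In the base case, if the conclusion is an initial sequent or a conclusion of $\Lbot$ then, since the principal formula of a rule in this group is compound, it is neither the atom $x:p$ of $\init$ nor $x:\bot$ of $\Lbot$; hence $\Gamma\Rightarrow\Delta$ already exhibits the relevant pattern, and deleting the principal formula while adding its immediate subformulas (and at most one fresh label) yields again an initial sequent, resp.\ a conclusion of $\Lbot$.

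For the inductive step, let $R'$ be the last rule in the derivation of the conclusion. If the formula principal in the rule $R$ being inverted is not principal in $R'$, then it occurs in every premiss of $R'$ as a side formula; I would apply the induction hypothesis for $R$ to each premiss of $R'$ (for a two-premiss inverted rule such as $\Rand$ or $\Lright$, applying it once for each of the two resulting families of premisses), first renaming by Proposition \ref{subst}, if necessary, an eigenvariable of $R'$ so as to avoid the fresh label introduced by the inversion — no such clash can arise when $R'$ is a propositional rule, since those introduce no labels — and then reapply $R'$; the height is not increased. If the principal formula of $R$ is principal in $R'$, then $R'$ is the rule $R$ itself applied to that formula: for a one-premiss $R$ the single premiss of $R'$ is, up to renaming the eigenvariable of $R'$ to the designated fresh label via Proposition \ref{subst}, exactly the sequent to be derived, at height $n-1$; for $\Rand$ and $\Lright$ the two premisses of $R'$ are precisely the two sought sequents, at height $n-1$. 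In all cases one obtains derivations of height at most $n$.

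I expect the only genuinely delicate part to be the eigenvariable bookkeeping for the four label-introducing rules $\Rfu,\Lfe,\Rcon,\Lbar$, which is exactly what the preliminary renaming and the height-preserving substitution of Proposition \ref{subst} are there to absorb; and one should make sure that the auxiliary connective $\mid$ is no exception — but $\Rbar$ belongs to the first group (so it is trivial), while $\Lbar$ is inverted by the same argument as $\Lfe$, introducing a fresh neighbourhood $c$ with $c\in N(x)$, $c\subseteq a$, $c\fe A$, $c\fu A\rightarrow B$. Beyond that, the proof is the routine $\mathbf{G3}$ pattern, following \cite{negri2005proof}.
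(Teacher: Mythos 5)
Your proposal is correct and follows essentially the route the paper indicates: the paper only states that invertibility is proved by induction on the height of the derivation (deferring details to the cited thesis), and your argument is the standard detailed version of exactly that, with the usual refinement that rules repeating all their principal formulas in the premisses are invertible directly by height-preserving weakening, while the remaining logical rules are handled by height induction with eigenvariable renaming via Proposition \ref{subst}. No gaps.
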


\begin{theorem}[Cut-admissibility]
	\label{cut_3}
	The rule of cut is admissible in $ \lab{CL^*} $.
	$$
	\infer[\Cut]{\Gamma, \Gamma' \Rightarrow \Delta, \Delta'}{\Gamma \Rightarrow \Delta, \mathcal{F} \quad & \quad  \mathcal{F}, \Gamma' \Rightarrow \Delta'  }
	$$
\end{theorem}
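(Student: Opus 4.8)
The plan is to adapt the Gentzen--Negri cut-elimination argument for labelled calculi. It suffices to show that a single application of $\Cut$ whose two premisses are derivable can be eliminated; iterating on a topmost cut then yields the full statement. We argue by a principal induction on the weight of the cut formula $\mathcal{F}$ (the lexicographically ordered pair $\langle w(p(\mathcal{F})),w(l(\mathcal{F}))\rangle$ introduced above) and a secondary induction on the \emph{cut height}, the sum of the heights of the derivations of the two premisses. Observe that since relational atoms occur only in antecedents, $\mathcal{F}$ is always a genuine labelled formula, so its weight is defined; throughout we use freely the height-preserving admissibility of weakening and contraction and of label substitution (Proposition~\ref{subst}), and the height-preserving invertibility of all rules.

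The base cases, in which one premiss is an initial sequent or a conclusion of $\Lbot$, are immediate: the conclusion of $\Cut$ is then itself such a sequent, or follows from the other premiss by height-preserving weakening. If $\mathcal{F}$ is not principal in the last rule $R$ of the left premiss (symmetrically for the right premiss), we permute $\Cut$ upwards over $R$, cutting against its premiss(es) — where the secondary induction hypothesis applies, since the cut height strictly drops — and reapplying $R$. Two points require care: if $R$ carries an eigenvariable condition, which is the case for $\Rfu$, $\Lfe$, $\Rcon$, $\Lbar$ and for the relational rules $\N$, $\emp$, $\T$, $\Single$, $\Unif{1}$, $\Unif{2}$, we first rename its eigenvariable to a label fresh for the other premiss, using Proposition~\ref{subst}; and if $R$ keeps a copy of its principal formula in a premiss (as $\Lcon$, $\Rfe$, $\Lfu$ and $\Rbar$ do) or is a closure-condition rule that contracts duplicate atoms, we restore the correct conclusion by height-preserving contraction.

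The heart of the proof is the case where $\mathcal{F}$ is principal in the last rule of both premisses. The Boolean cases, and $\mathcal{F}\equiv x:p$ (which reduces to the base case), are routine. For the calculus-specific formulas: if $\mathcal{F}\equiv a\fe A$, introduced by $\Rfe$ on the right and $\Lfe$ on the left, we identify the eigenvariable of $\Lfe$ with the world label $x$ for which $x\in a$ occurs in the $\Rfe$-premiss (via Proposition~\ref{subst}) and, using invertibility, replace the cut by one cut on $a\fe A$ of smaller cut height together with one cut on the strictly lighter formula $x:A$; the case $\mathcal{F}\equiv a\fu A$ is symmetric, with $x:A\rightarrow B$ in place of $x:A$. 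If $\mathcal{F}\equiv x\barra{a}A|B$, introduced by $\Rbar$ on the right — which carries a witness $c$ with $c\in N(x)$, $c\subseteq a$ already in the antecedent — and by $\Lbar$, with a fresh eigenvariable $c'$, on the left, we substitute $c$ for $c'$ in the left subderivation and reduce the cut to cuts on $c\fe A$ and $c\fu A\rightarrow B$, both strictly lighter since $w(A|B)=w(A)+w(B)+2$ while the pure parts of these two formulas weigh $w(A)$ and $w(A)+w(B)+1$, plus cuts on $x\barra{a}A|B$ of smaller cut height. Finally, if $\mathcal{F}\equiv x:A>B$, introduced by $\Rcon$ on the right, with fresh eigenvariable $a$ and premiss $a\in N(x), a\fe A, \Gamma\Rightarrow\Delta, x\barra{a}A|B$, and by $\Lcon$ on the left, we rename $a$ suitably and reduce to a cut on $a\fe A$ and a cut on $x\barra{a}A|B$, both strictly lighter than $x:A>B$ since $w(A>B)=w(A)+w(B)+3$; these auxiliary cuts are handled by the principal induction hypothesis, the ones on $x\barra{a}A|B$ falling under the previous case.

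I expect the main obstacle to be this last group of cases, i.e.\ the interaction of the conditional $>$ with the auxiliary operator $\mid$. The delicate points are: checking that every reduction strictly decreases the weight, which rests on the unit gaps $w(A>B)-w(A|B)=1$ and $w(A|B)-w(A\rightarrow B)=1$; threading the eigenvariable conditions of $\Rcon$, $\Lbar$, $\Lfe$ and $\Rfu$ correctly through the reductions, without variable capture, by means of height-preserving substitution; and the bookkeeping of contexts, since $\Lcon$, $\Rbar$, $\Rfe$ and $\Lfu$ all retain copies of their principal formulas, so that several applications of height-preserving contraction are needed to reassemble the conclusion of $\Cut$.
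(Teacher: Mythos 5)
Your proposal is correct and follows essentially the same route as the paper: primary induction on the weight of the cut formula with secondary induction on the sum of the heights of the premisses, the same three-way case split, and the same reduction of a principal cut on $x:A>B$ to cross-cuts of smaller height against the premisses of $\Lcon$ followed by cuts on the strictly lighter formulas $a\fe A$ and $x\barra{a}A|B$, reassembled by height-preserving contraction. The only blemish is a slip in the $a\fu A$ case, where the residual cut is on $x:A$ for the fresh $x$ of $\Rfu$, not on $x:A\rightarrow B$.
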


\begin{proof}
	The proof is by primary induction on the weight of the cut formula and on secondary induction on the sum of the heights of the derivations of the premisses of $ \Cut $\footnote{Refer to \cite{structural} for the general methodology of proving cut-admissibility in labelled systems.}. We distinguish cases according to the rules applied to derive the premisses:
	\begin{itemize}[noitemsep]
		\item [$a)$] At least one of the premisses of $ \Cut $ is an initial sequent;
		\item [$b)$] The cut formula is not the principal formula in the derivation of at least one premiss;
		\item [$c)$] The cut formula is the principal formula of both derivations of the premisses. 
	\end{itemize}
	We only show the case of $c)$ in which the cut formula has the form $ A>B $. For the proof of propositional cases, refer to  \cite[Theorem 3.2.3]{structural}; for the proof of the other conditional cases refer to \cite{girlandothesis}.

	\
	
	\begin{adjustbox}{max width = \textwidth}
		$$
		\infer[\Cut]{ a \in N(x), \Gamma, \Gamma' \Rightarrow \Delta, \Delta' }{
			\infer[\Rcon]{\Gamma \Rightarrow \Delta, x:A>B }{ \deduce{ b \in N(x), b \fe A, \Gamma \Rightarrow \Delta, x \barra{a} A|B }{(1)} }
			&
			\infer[\Lcon]{ a \in N(x), x:A>B, \Gamma' \Rightarrow \Delta' }{ 
				\deduce{.. \Rightarrow \Delta', a \fe A }{(2)}
				&
				\deduce{x \barra{a} A|B, a \in N(x), x:A>B, \Gamma' \Rightarrow \Delta' }{(3)}
			}
		}
		$$
	\end{adjustbox}
	
	\
	
	\noindent We first apply $ \Cut $ on the premisses of $ \Lcon $. Both applications have a smaller sum of height of the premisses with respect to the original application of $ \Cut $:
	$$
	\mathcal{D}_1 = \quad \infer[\Cut]{a \in N(x), \Gamma, \Gamma' \Rightarrow \Delta, \Delta'}{ 
		\Gamma \Rightarrow \Delta, x:A>B
		&
		\deduce{ a \in N(x), x:A>B, \Gamma'  \Rightarrow \Delta', a \fe A }{(2)}
	}
	$$
	$$
	\mathcal{D}_2 = \quad \infer[\Cut]{a \in N(x), \Gamma, \Gamma' \Rightarrow \Delta, \Delta'}{ 
		\Gamma \Rightarrow \Delta, x:A>B
		&
		\deduce{x \barra{a} A|B, a \in N(x), x:A>B, \Gamma' \Rightarrow \Delta' }{(3)}
	}
	$$
	We combine the above with two occurrences of $ \Cut $, on formulas of lesser weight than the original cut formula.
	$$
	\infer[\Ctr]{a \in N(x), \Gamma, \Gamma' \Rightarrow \Delta, \Delta'}{
		\infer[\Cut]{a \in N(x)^3, \Gamma^3, \Gamma'^2 \Rightarrow \Delta^3, \Delta'^2}{
			\infer[\Cut]{a \in N(x)^2, \Gamma^2, \Gamma' \Rightarrow \Delta^2, \Delta', x \barra{a}A|B }{ \mathcal{D}_1\quad  & \deduce{
					b \in N(x), b \fe A, \Gamma \Rightarrow \Delta, x \barra{a} A|B }{(1)[b/a]}
			}
			&
			\quad  \mathcal{D}_2}
	}
	$$
\end{proof}

\noindent 
The axioms of each system of logic can be derived in the respective calculus. By admissibility of cut, the inference rules can be shown to be admissible, therefore obtaining a syntactic proof of completeness of the calculi. Details are given in the Appendix.

\begin{theorem}[Completeness via cut admissibility]
	\label{theorme:completeness_synth} 
	If a formula $ A $ is derivable in $\axiom{PCL} $ or in one of its extensions, then there is a derivation of $ \Rightarrow x: A $ in the calculus $ \lab{CL^*} $ for the corresponding logic.
\end{theorem}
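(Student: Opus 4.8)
The plan is to show that for each logic $\mathbb{K}$ in the preferential family, every axiom of $\axiom{K}$ has a derivation of its labelled version $\Rightarrow x:A$ in $\lab{CL^*}$, and that the inference rules (RCEA) and (RCK) are admissible in $\lab{CL^*}$; completeness then follows by a routine induction on the length of the Hilbert derivation. First I would handle the propositional axioms and modus ponens: since the propositional rules of $\lab{CL}$ are the standard $\mathbf{G3}$-style rules, every classical propositional tautology $\Rightarrow x:A$ is derivable, and modus ponens is simulated using cut-admissibility (Theorem \ref{cut_3}) together with the derivability of $x:A\rightarrow B, x:A\Rightarrow x:B$ (an instance of $\Lright$ on generalized initial sequents, Lemma \ref{generalized_initial_sequents}). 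So the crux is the conditional axioms.

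**Deriving the conditional axioms.** For each conditional axiom I would give an explicit derivation, reading the truth condition $(*)$ off the rules $\Rcon$, $\Lcon$, $\Rbar$, $\Lbar$. For instance, for (ID) $\Rightarrow x:A>A$ one applies $\Rcon$ (introducing a fresh $a$ with $a\in N(x), a\fe A$ in the antecedent), then $\Rbar$ on $x\barra{a}A|A$ with $c:=a$ (using $\Ref$ to get $a\subseteq a$): the left premiss $a\fe A$ is on both sides (generalized initial sequent), and the right premiss $a\fu A\rightarrow A$ is derived by $\Rfu$ then $\Rright$ then $\init$. Axioms (R-And), (CM), (OR) are derived similarly, mimicking the semantic soundness arguments of Theorem \ref{ax_soundness} step by step but now inside the calculus; (CM) and (OR) will each need two applications of $\Lbar$ (to extract witnessing subneighbourhoods) followed by $\Trans$ and $\Lc$ to propagate the inclusions. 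For the extension axioms (N), (T), (W), (C), (U$_1$), (U$_2$), (A$_1$), (A$_2$) I would use the matching structural rules $\N,\emp,\T,\W,\C,\Single,\Repl{i},\Unif{i},\Abs{i}$ added to the calculus for that logic; each such axiom is essentially the "syntactic shadow" of the frame condition its rule encodes, so the derivation is short.

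**Admissibility of (RCEA) and (RCK).** For (RCK), from a derivation of $\Rightarrow x:A\rightarrow B$ I must produce one of $\Rightarrow x:(C>A)\rightarrow(C>B)$. I would apply $\Rright$ and $\Lcon$ to reduce to deriving, from hypotheses $a\in N(x)$, $x:C>A$, the sequents (i) $\ldots\Rightarrow a\fe C$ and (ii) $x\barra{a}C|A,\ldots\Rightarrow x\barra{a}C|B$; (i) is handled by $\Lcon$ recursively / $\Rfe$, and (ii) by $\Lbar$ then $\Rbar$, where the passage from $c\fu C\rightarrow A$ to $c\fu C\rightarrow B$ uses the assumed derivability of $x:A\rightarrow B$ together with weakening, $\Lfu$/$\Rfu$ and propositional reasoning inside a neighbourhood. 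For (RCEA), from $\Rightarrow x:A\leftrightarrow B$ one shows both $a\fe A\Leftrightarrow a\fe B$ and $x\barra{a}A|C\Leftrightarrow x\barra{a}B|C$ using cut on the assumed biconditional, then plugs these into the $\Lcon$/$\Rcon$ rules. Both admissibility proofs lean essentially on cut, height-preserving weakening and the invertibility lemmas.

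**Main obstacle.** The routine part is propositional reasoning and modus ponens; the real work is the conditional axioms and especially (RCK)/(RCEA), because there the interplay between the $\mid$-operator rules $\Lbar$/$\Rbar$ and the inclusion rules $\Ref$/$\Trans$/$\Lc$ must be orchestrated carefully — one must make sure the eigenvariable conditions $(c!)$, $(a!)$ are respected when witnessing subneighbourhoods are introduced, and that the propagation of a neighbourhood-level implication (from $A\rightarrow B$ derivable to $c\fu C\rightarrow A$ implying $c\fu C\rightarrow B$) is carried out without illegitimate label clashes. I expect (CM), (OR) and (RCK) to be the most delicate derivations; the extension axioms, by contrast, should be immediate from their dedicated structural rules. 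Since full derivations are long and mechanical, I would relegate them to the Appendix, as the statement already indicates.
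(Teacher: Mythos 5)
Your proposal follows essentially the same route as the paper's own proof: derive each axiom explicitly in the corresponding calculus (the paper's Appendix gives (CM), (N), (T), (U$_1$) as worked examples, each using the dedicated structural rules for the extensions exactly as you predict), and establish admissibility of (RCEA) and (RCK) by combining the substitution lemma, weakening, the local-forcing rules and cut on the assumed premiss $\Rightarrow x:A\rightarrow B$. Your identification of the delicate points --- the orchestration of $\Lbar$/$\Rbar$ with $\Ref$/$\Trans$/$\Lc$ and the propagation of the implication into $c\fu C\rightarrow B$ --- matches where the paper's derivations do the real work, so the proposal is correct and not meaningfully different from the paper's argument.
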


\noindent We conclude the section by proving admissibility of rules $ \Repl{1} $ and $ \Repl{2}$ in their generalized form. This lemma will be used in Section \ref{sec:semantic_compl}, to prove completeness of the calculi featuring centering with respect to neighbourhood models.

\begin{lemma}
	Rules $ \Repl{1} $ and $ \Repl{2}$ generalized to all formulas of the language are admissible in $ \lab{CL^*} $.
\end{lemma}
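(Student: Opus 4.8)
I would prove the generalized replacement rules admissible by induction on the height of the derivation of the premiss, following the standard pattern for admissibility of substitution-type rules in labelled calculi (cf.\ \cite{negri2005proof}). The key observation is that the basic rules $\Repl{1}$ and $\Repl{2}$ already allow us to copy, from $y\in\{x\}$, any \emph{atomic} fact $At(x)$ (namely $x:P$, $x\in a$, $a\in N(x)$, $x\in\{z\}$) to the corresponding fact about $y$, and vice versa. What the generalized lemma asks is: if $\vdash^n y\in\{x\}, \mathcal{F}(x), \Gamma \Rightarrow \Delta$ for an \emph{arbitrary} labelled formula $\mathcal{F}$ mentioning $x$, then $\vdash y\in\{x\}, \mathcal{F}(x), \mathcal{F}(y), \Gamma \Rightarrow \Delta$ (and symmetrically). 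So the work is to lift the atomic case to all formulas by induction on the structure of $\mathcal{F}$, interleaved with the height induction.

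\textbf{Key steps.} First I would fix the precise statement: for every labelled formula $\mathcal{F}$ and every world label $y$ with $y\in\{x\}$ present in the antecedent, both
$$
\infer{y\in\{x\}, \mathcal{F}(x), \Gamma \Rightarrow \Delta}{y\in\{x\}, \mathcal{F}(x), \mathcal{F}(y), \Gamma \Rightarrow \Delta}
\qquad
\infer{y\in\{x\}, \mathcal{F}(y), \Gamma \Rightarrow \Delta}{y\in\{x\}, \mathcal{F}(x), \mathcal{F}(y), \Gamma \Rightarrow \Delta}
$$
are height-preserving admissible, where $\mathcal{F}(x)\mapsto\mathcal{F}(y)$ is the result of replacing the displayed occurrence of the label $x$ by $y$. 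Second, base case: $\mathcal{F}$ atomic of one of the four shapes in $At(x)$ — this is exactly $\Repl{1}/\Repl{2}$ as primitive rules, so nothing to do; for a relational atom not of that shape (e.g.\ $a\subseteq b$, which carries no world label) the statement is vacuous. Third, the inductive step on $\mathcal{F}$: for Boolean combinations $\mathcal{F}(x)=x{:}(C\circ D)$ one pushes replacement through the corresponding propositional rule using height-preserving invertibility (already available) to decompose, applies the induction hypothesis on the immediate subformulas at smaller weight, then reassembles; for $\mathcal{F}=a\fu A$, $a\fe A$, $x\Vdash_a A|B$ one argues similarly, invoking invertibility of $\Rfu$, $\Lfe$, $\Lbar$, $\Rbar$ and $\Lcon$, $\Rcon$ as needed, so that the label $x$ to be replaced ends up in an atomic position or as the world-label of a subformula of strictly smaller weight. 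Fourth, the height induction on the derivation of the premiss, to commute the new replacement step past the last rule applied: the usual case analysis, using height-preserving substitution (Proposition~\ref{subst}) when the last rule has a fresh-variable side condition, so that the eigenvariable can be renamed away from $x,y$ before applying the induction hypothesis.

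\textbf{Main obstacle.} The delicate point is the interaction with rules that have eigenvariable conditions — $\Rfu$, $\Lfe$, $\Lbar$, $\Rcon$ and the extension rules marked $(a!)$ or $(x!)$ — together with the fact that replacing $x$ by $y$ can create a duplicated principal relational atom (e.g.\ turning $a\in N(x)$ and $a\in N(y)$ into two copies of $a\in N(y)$), which is why height-preserving admissibility of contraction is needed to mop up. Concretely, when the last rule introduces a fresh label, I first use Proposition~\ref{subst} to rename that label so it clashes neither with $x$ nor with $y$, then apply the induction hypothesis, then reapply the rule; when the replacement produces a duplication I appeal to height-preserving admissibility of $\Ctrl$. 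A second subtlety is that for compound $\mathcal{F}$ one cannot simply ``apply $\Repl{}$ to the subformulas'' — the correct move is to decompose $\mathcal{F}(x)$ by invertibility so that all occurrences of $x$ sit either in atomic formulas or as labels of strictly smaller formulas, invoke the induction hypothesis there, and then rebuild; making the termination measure (lexicographic on weight of $\mathcal{F}$ then height of the derivation) precise is the part that needs care but no new idea. Given that invertibility, height-preserving substitution and admissibility of contraction/weakening are all already in hand from the preceding lemmas, the argument is routine though somewhat lengthy, which is presumably why the paper defers the details to \cite{girlandothesis}.
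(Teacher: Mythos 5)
Your proposal is workable, but it takes a genuinely different route from the paper. You run the standard permutation argument: a lexicographic induction on (weight of $\mathcal{F}$, height of the given derivation), commuting the generalized replacement past the last rule, handling the case where $\mathcal{F}(y)$ is principal by invertibility plus the induction hypothesis on smaller subformulas, and cleaning up eigenvariable clashes and duplications with Proposition~\ref{subst} and admissible contraction. The paper instead makes a single reduction at the outset: since cut and contraction are already admissible, it suffices to \emph{derive} the transfer sequent $y\in\{x\},\, \mathcal{F}(x)\Rightarrow \mathcal{F}(y)$; one cut against the premiss of $\Repl{1}$ followed by contraction then yields the conclusion. The derivability of the transfer sequent is proved by induction on the weight of $\mathcal{F}$ alone (the two rules being treated simultaneously, since e.g.\ the implication case for $\Repl{1}$ needs $\Repl{2}$ on subformulas), with explicit derivations for the cases $x:B\to C$, $x\Vdash_a B|C$ and $x:B>C$ that use the primitive replacement rules only on atoms such as $c\in N(x)$ and close the remaining premisses by Lemma~\ref{generalized_initial_sequents}. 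The paper's route buys brevity — there is no case analysis on the last rule of the given derivation at all — at the price of invoking cut admissibility, so the result is plain (not height-preserving) admissibility; your route avoids cut and would in principle be more informative, but note that inserting applications of the \emph{primitive} rules $\Repl{1}$, $\Repl{2}$ (which are genuine rules of $\lab{CL^*}$, not admissible steps) adds inferences, so your claim of height preservation does not literally go through — only plain admissibility, which is all the lemma asserts. Both arguments rest on the same previously established toolkit (invertibility, substitution, weakening, contraction), so either is acceptable; yours is longer but self-contained with respect to cut.
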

\begin{proof} 
	Admissibility of the two rules is proven simultaneously, by induction on the weight of formulas. We only show the proof admissibility for $\Repl{1} $ (the other rule is symmetric). 
	Since contraction and cut are admissible in $ \lab{CL^*}  $, it is sufficient to show that sequent $ y \in \{x\}, A(x) \Rightarrow A(y) $ is derivable. 
	From this sequent and the premiss of $\Repl{1}  $, the conclusion of  $\Repl{1}  $ can be derived applying cut and contraction. 	
	We proceed by induction on the weight of formula $ A(x) $; there are several cases to consider. 
	
	\noindent \textbf{1.} $ A(x) \equiv x: \mathcal{F} $, $ A(y) \equiv y: \mathcal{F} $, where $ \mathcal{F} $ is a propositional formula. We consider the case $ A(x) \equiv x:B\rightarrow C $, $ A(y) \equiv y:B\rightarrow C $.
	
	\
	
		\begin{adjustbox}{max width = \textwidth}
		$$
			\infer[\Rright]{y \in \{x\}, x:B\rightarrow C  \Rightarrow  y:B\rightarrow C  }{ 
				\infer[\Lright]{ y \in \{x\}, x:B\rightarrow C, y:B  \Rightarrow  y: C }{
					\infer[\Repl{2}]{y \in \{x\}, y:B  \Rightarrow  y: C, x:B}{
						\deduce{y \in \{x\},x :B, y:B  \Rightarrow  y: C, x:B}{}
					}
					&
					\infer[\Repl{1}]{y \in \{x\}, y:B, x:C  \Rightarrow  y: C}{
						\deduce{y \in \{x\}, y:B ,x:C, y:C \Rightarrow  y: C, x:B}{}
					}
				}
			}
		$$
		\end{adjustbox}
	
	\
	
		\noindent In this case we need $ \Repl{2} $, applied to formulas of smaller weight, and the two premisses are derivable by \mbox{Lemma \ref{generalized_initial_sequents}}.

		\noindent \textbf{2.} $ A(x) \equiv  x \Vdash_{a} B|C$, $ A(y) \equiv  y \Vdash_{a} B|C$. 
		$$
		\infer[\Lbar]{y \in \{x\}, x \Vdash_{a} B|C \Rightarrow y \Vdash_{a} B|C}{
			\infer[\Repl{1}]{c \in N(x), c\subseteq a, c \Vdash^{\exists} B, c \Vdash^{\forall}B\rightarrow C, y \in \{x\} \Rightarrow   y \Vdash_{a} B|C }{
				\infer[\Rbar]{c \in N(y), c \in N(x), c\subseteq a, c \Vdash^{\exists} B, c \Vdash^{\forall}B\rightarrow C, y \in \{x\} \Rightarrow   y \Vdash_{a} B|C}{
					(1) \qquad
					&
					\qquad (2)
		}  } } 
		$$
		Where $ (1)$ is sequent $ c \in N(y), c \in N(x), c\subseteq a, c \Vdash^{\exists} B, c \Vdash^{\forall}B\rightarrow C, y \in \{x\} \Rightarrow   y \Vdash_{a} B|C, c \Vdash^{\exists} A $, and $ (2)$ is sequent $ c \in N(y), c \in N(x), c\subseteq a, c \Vdash^{\exists} B, c \Vdash^{\forall} B\rightarrow C, y \in \{x\} \Rightarrow   y \Vdash_{a} B|C, c \Vdash^{\forall} B\rightarrow C$.
		Rule $\Repl{1}  $ is applied to the atomic formula $ c \in N(x) $, which has smaller weight than  $ A(x) $. The lower premiss is derivable by \mbox{Lemma \ref{generalized_initial_sequents}}, the upper one by steps of $\Lfe $, $\Lfu$, $\Lc$, and \mbox{Lemma \ref{generalized_initial_sequents}}.

		
		\noindent \textbf{3.} $ A(x) \equiv x: B>C $, $ A(y) \equiv y:B> C $.
		
		$$
		\infer[\Rcon]{y \in \{x\}, x: B>C \Rightarrow y:B> C}{
			\infer[\Repl{2}]{ a \in N(y), a \Vdash^{\exists} B, y \in \{x\}, x: B>C \Rightarrow y \Vdash_{a} B|C}{
				\infer[\Lcon]{ a \in N(x), a \in N(y), a \Vdash^{\exists} B, y \in \{x\}, x: B>C \Rightarrow y \Vdash_{a} B|C}{  
					&
					x \Vdash_{a} B|C,  a \in N(x), a \in N(y), a \Vdash^{\exists} B, y \in \{x\}, x: B>C \Rightarrow y \Vdash_{a} B|C  } } }
		$$
		Rule $ \Repl{2} $ is applied to formula $ a \in N(y) $, of smaller weight. The leftmost premiss is the sequent $ a \in N(x), a \in N(y), a \Vdash^{\exists} B, y \in \{x\}, x: B>C \Rightarrow y \Vdash_{a} B|C, a \Vdash^{\exists} A 
		$, derivable by Case 1. 	
\end{proof}


\section{Decision procedure}

As they are, the calculi $ \lab{CL^*} $ are not terminating. Simple cases of loops are due to the repetition of the principal formula in the premiss of a rule; more complex cases of loop are generated by the interplay of world and neighbourhood labels. Our aim in this section is to provide a termination strategy for the calculi, thus defining a decision procedure for the logic.  

Here follows some examples of loops which might occur in root-first proof search.

\begin{example}\label{ex:loop_immediate}
	Loop generated by repeated applications of rule $ \Lfu $ to  $ a\fu C $.
	$$
		\infer[\Lfe]{a \fe A, a\fu C, \Gamma \Rightarrow \Delta}{
			\infer[\Lfu]{x\in a, x: A, a \fu C, \Gamma \Rightarrow \Delta}{
				\infer[\Lfu]{x\in a, x: A, x:C,a\fu C, \Gamma \Rightarrow \Delta}{
					\deduce{x\in a, x: A, x:C, x:C,a\fu C,\Gamma \Rightarrow \Delta}{\vdots}
				}
			}
		}
	$$
\end{example}

\begin{example}\label{ex:simple_loop}
	Loop generated by repeated applications of  $\Lcon $ and $ \Lbar $, with one conditional formula in the antecedent (only the left premiss of $ \Lcon $ is shown). 
	
	\
	
	\begin{adjustbox}{max width = \textwidth}
	$$
	\infer[\Lcon]{a \in N(x), x: A> B \Rightarrow \Delta}{
				\infer[\Lbar]{x \barra{a} A|B, a\in N(x),  x: A> B \Rightarrow \Delta}{
					\infer[\Lcon]{b\in N(x), b\subseteq a, a\in N(x),b\fe A, b\fu A\rightarrow B,  x: A> B \Rightarrow \Delta }{
						\infer[\Lbar]{x \barra{b} A|B,b\in N(x), b\subseteq a,  a\in N(x), b\fe A, b\fu A\rightarrow B,  x: A> B \Rightarrow \Delta}{
							\deduce{ c \in N(x), c \subseteq b, b\in N(x), b\subseteq a, a\in N(x),  c \fe A, c \fu A\rightarrow B,  \dots,  x: A> B \Rightarrow \Delta }{\vdots }
						}
		}}}
	$$
	\end{adjustbox}
\end{example}

\begin{example}\label{ex:complex_loop}
	Loop generated by repeated applications of rules $ \Lcon $ and $ \Lbar $, with two conditional formulas in the antecedent. Let $ \Omega = x: A> B, x: C> D  $. We write only the leftmost premiss of $ \Lcon $; next to $ \Lcon $ is written the number of applications of the rule. 
	
	\begin{adjustbox}{max width= \textwidth}
	$$
	\infer[\Lcon ~ \mbox{(2)}]{a\in N(x), \Omega \Rightarrow \Delta }{
		\infer[\Lbar]{x \barra{a} A|B, x\barra{a} C|D,\Omega \Rightarrow \Delta}{
			\infer[\Lbar]{b \subseteq a, b\in N(x),  b\fe A, b\fu A \rightarrow B,x\barra{a} C|D,  \Omega \Rightarrow \Delta}{
					\infer[\Lcon~ \mbox{(4)}]{c \in N(x), c \subseteq a, c \fe C, c \fu C\rightarrow D, \dots, \Omega \Rightarrow \Delta}{
						\infer[\Lbar] { x\barra{b} A|B, x\barra{b} C|D, x\barra{c} A|B, x\barra{c} C|D,\dots, \Omega\Rightarrow \Delta}{
							\infer[\Lbar]{ d \in N(x), d\subseteq b, d \fe A, d \fu A\rightarrow B, x\barra{b} C|D, x\barra{c} A|B, x\barra{c} C|D,\dots, \Omega\Rightarrow \Delta}{
								\infer[\Lbar]{e \in N(x), e\subseteq b, e \fe C, e \fu C\rightarrow D, x\barra{c} A|B, x\barra{c} C|D,\dots, \Omega\Rightarrow \Delta}{
									\infer[\Lbar]{f \in N(x), f\subseteq c, f \fe A, f \fu A\rightarrow B, x\barra{c} C|D,\dots, \Omega\Rightarrow \Delta}{
										\infer[\Lcon ~ \mbox{(4)}]{g \in N(x), g\subseteq c, g \fe C, g \fu C\rightarrow D, \dots, \Omega\Rightarrow \Delta}{ 
											\deduce{ x\barra{d} A|B, x\barra{d} C|D,x\barra{e} A|B, x\barra{e} C|D, x\barra{f} A|B, x\barra{f} C|D,x\barra{g} A|B, x\barra{g} C|D, \cdots, \Omega \Rightarrow \Delta }{\vdots}
										}
									}	
								}
							}
						} 
					}
				}
			}
		}				
		$$
	\end{adjustbox}				
					
\end{example}

We start by proving termination for $ \lab{CL} $, and then extend the proof strategy to sequent calculi for the extensions of $\PCL$. 
We recall that all logics of the $\PCL$ family are decidable and their complexity is known.

\begin{remark}
	\label{remark:complexity}
The complexity of the family of preferential conditional logics is studied in \cite{halpern1994complexity}, where it is shown that: for systems \emph{without} uniformity and absoluteness, the decision procedure is $ \mbox{PSPACE} $-complete. For logics with uniformity, the decision problem is $ \mbox{EXPTIME} $-complete. Finally, for systems with absoluteness, the decision problem is $ \mbox{NP} $-complete. 
\end{remark}

\subsection{Decidability for $ \lab{CL} $}

In this section we define a proof search strategy which blocks rules applications leading to non-terminating branches.
We first want to prevent applications of a rule  R  to a sequent that already contains the formulas introduced by R. This is done by defining saturation conditions for each rule. 

\begin{definition}\label{def:saturation_conditions_PCL}
		Let $ \mathcal{D} $ be a derivation in $ \lab{CL} $, and  $ \mathcal{B} = S_0, S_1, \dots$ a derivation branch, with $ S_i $ sequent $\Gamma_i \Rightarrow \Delta_i $, for $ i = 1, 2, \dots $ and $ S_0 $ sequent $ \, \Rightarrow x:A_0 $. Let $ \downarrow \Gamma_{k} / \downarrow \Delta_{k}$ denote the union of the antecedents/succedents occurring in the branch from $ S_0 $ up to $ S_k  $. 
		
		We say that a sequent $\Gamma \Rightarrow \Delta$ \emph{satisfies the saturation condition w.r.t. a rule} R if, whenever $\Gamma \Rightarrow \Delta$ contains the principal formulas  in the conclusion of R, then it  also contains the formulas introduced by \emph{one} of the premisses of R. The saturation conditions are listed in Figure \ref{fig:saturation_conditions_PCL}.
		
		We say that $\Gamma \Rightarrow \Delta$ is \emph{saturated} if there is no formula $ x:p $ occurring in $ \Gamma \cap \Delta $, there is no formula $ x: \bot $ occurring in $ \Gamma $, $\Gamma \Rightarrow \Delta$ satisfies \emph{all} saturation conditions listed in the upper part of Figure \ref{fig:saturation_conditions_PCL}.
\end{definition}

\begin{figure}[t]
	\begin{adjustbox}{max width = \textwidth}
	\begin{tabular}{l l}
	\hline
	& \\
	$ \Land $ & If $ x:A \wedge B $ is in $\downarrow \Gamma $, then $ x: A $ and $ x:B $ are in $\downarrow \Gamma $\\
	$ \Rand $ & If $ x:A\wedge B $ is in $\downarrow \Delta $, then $ x:A $ or $ x:B $ is in $\downarrow \Delta $\\
	$ \Lor$ & If $ x:A \lor B $ is in $\downarrow \Gamma $, then $ x: A $ or $ x:B $ is in $\downarrow \Gamma $\\
	$ \Ror $ & If $ x:A\lor B $ is in $\downarrow \Delta $, then $ x:A $ and $ x:B $ are in $\downarrow \Delta $\\
	$ \Lright $ &  If $ x:A \rightarrow B $ is in $\downarrow \Gamma $, then $ x: B $ is in $\downarrow \Gamma $ or $ x:A $ is in $\downarrow \Delta $\\
	$ \Rright $ & If $ x:A\rightarrow B $ is in $ \Delta $, then $ x:A $ is in $\downarrow \Gamma $ and $ x:B $ is in $\downarrow \Delta $\\[1ex]
	
	$ \Ref $ & If $ a $ is in $ \Gamma\cup \Delta $, $ \Delta $ then $ a \subseteq a $ is in $ \Gamma $\\
	$ \Trans $ & If $ a \subseteq b $ and $ b\subseteq c $ are in $ \Gamma $, then $ a \subseteq c $ is in $ \Gamma $\\
	$ \Lc $ & If $ x \in a$ and $ a \subseteq b $ are in $ \Gamma $, then $ x \in b $ is in $ \Gamma $\\[1ex]
	
	$ \Lfu $ & If $ x \in a $ and $ a \Vdash^{\forall} A $ are in $  \Gamma $, then $ x:A $ is in $\downarrow \Gamma $\\
	$ \Rfu $ & If $ a \Vdash^{\forall} A $ is in $ \downarrow \Delta $ then, for some $x$, $ x \in a $ is in $ \Gamma $ and $ x:A $ in $ \downarrow \Delta $\\
	$ \Lfe $ & If $ a \Vdash^{\exists}A $ is in $ \downarrow \Gamma $ then, for some $ x $, $ x \in a $ is in $ \Gamma $ and $ x:A $ is in $ \downarrow \Gamma $\\
	$ \Rfe $ & If $ x \in a $ is in $ \Gamma $ and $ a \Vdash^{\exists}A $ is in $ \Delta $, then $ x:A $ is in $ \downarrow \Delta $\\[1ex]
	
	$\Rcon$ & If $ x:A>B $ is in $ \downarrow \Delta $ then, for some $ a $, $ a \in N(x) $ is in $ \Gamma $, $ a \Vdash ^{\exists} A $  is in $ \downarrow \Gamma $\\
	& and $ x \Vdash_a A|B $ is in $  \Delta $\\
	$\Lcon$ & If $ a \in N(x) $ and $ x: A>B$ are in $ \Gamma $, then $ a \Vdash ^{\exists} A $ is in $ \downarrow \Delta $ or $x \Vdash_{a} B|A $ is\\
	& in $ \downarrow \Gamma $\\
	$ \Rbar $ & If $ c \in N(x) $ and $ c\subseteq a $ are in $ \Gamma $ and $ x \Vdash_{a} B|A $ is in $ \Delta $, then $ c \Vdash ^{\exists} A $ is in $ \Delta $ \\
	& or $ c \Vdash^{\forall} A\rightarrow B $ is in $ \downarrow \Delta $\\
	$ \Lbar  $ & If  $ x \Vdash_{a} B|A $ is in $ \downarrow \Gamma $ then, for some $ c $, $ c \in N(x) $ and $ c \subseteq a $ are in $ \Gamma $, $ c \Vdash ^{\exists} A $ \\
	& is in $ \downarrow \Gamma $ and $ c \Vdash^{\forall} A\rightarrow B $ is in $ \Gamma $\\[1ex]
	\hline
	& \\
		$\LconT$ & If $ a \in N(x) $ and $ x: A>B$ occur in $ \Gamma $, then $ a \Vdash ^{\exists} A $ is in $ \downarrow \Delta $ or $ a\fe A $ and   \\
	& $x \Vdash_{a} B|A $ are  in $ \downarrow \Gamma $\\
	$ \Mon$  & If $b\subseteq  a $ and $ a \Vdash^{\forall} A $ are in $  \Gamma $, then $ b \Vdash^{\forall} A $ is in $\Gamma $\\[1ex]
	\hline 	
%
\end{tabular}
\end{adjustbox}
\caption{Saturation conditions associated to $ \lab{CL} $ rules}
\label{fig:saturation_conditions_PCL}
\end{figure}

In Example \ref{ex:loop_immediate}, the second bottom-up application of $ \Lfu $ is blocked by the saturation condition associated to $ \Lfu $, since formula $ x:A $ already occurs in some antecedent of the derivation branch. 
In order to block the other cases of loop, we need to modify the rules of $ \lab{CL} $, and define a proof search strategy which governs the application of rules in root-first proof search.

\begin{definition}
	We modify the rule $ \Lcon $ as rule $\LconT $, and introduce rule $ \Mon $ in $ \lab{CL} $.
	
	\
	
	\begin{adjustbox}{max width = \textwidth}
			\begin{tabular}{c}
				$ \infer[\LconT]{a \in N(x),x:A>B, \Gamma \Rightarrow \Delta }{ a \in N(x),x:A>B, \Gamma \Rightarrow \Delta, a \fe A & a \fe A, x \Vdash_a A|B, a \in N(x),x:A>B, \Gamma \Rightarrow \Delta} $\\[1ex]
				$ \infer[\Mon]{b \subseteq a, a \fu A, \Gamma \Rightarrow \Delta}{b \subseteq a, b \fu A, a \fu A, \Gamma \Rightarrow \Delta}$
			\end{tabular} 
	\end{adjustbox}		
\end{definition}


\begin{lemma}
	In $ \lab{CL} $ it holds that:
	\begin{enumerate}
		\item Rule $ \Mon $ is admissible; 
		\item Rules $ \Lcon $ and $ \LconT $ are equivalent.
	\end{enumerate}
\end{lemma}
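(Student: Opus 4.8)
The plan is to reduce both claims to the admissibility of cut (Theorem~\ref{cut_3}) and of the structural rules, all of which are already established for $\lab{CL}$. For the admissibility of $\Mon$, I would first show that the sequent $b\subseteq a, a\fu A, \Gamma \Rightarrow \Delta, b\fu A$ is derivable: reading bottom-up, apply $\Rfu$ with a fresh world label $z$ (so that the side condition $\mathsf{(z!)}$ is met), reducing the goal to $z\in b, b\subseteq a, a\fu A, \Gamma \Rightarrow \Delta, z:A$; then apply $\Lc$ to $z\in b$ and $b\subseteq a$ to add $z\in a$; then apply $\Lfu$ to $z\in a$ and $a\fu A$ to add $z:A$ on the left; the sequent reached, $z:A, z\in a, z\in b, b\subseteq a, a\fu A, \Gamma \Rightarrow \Delta, z:A$, is a generalized initial sequent (Lemma~\ref{generalized_initial_sequents}). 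Given this derivation together with the premiss $b\subseteq a, b\fu A, a\fu A, \Gamma \Rightarrow \Delta$ of $\Mon$, a single application of $\Cut$ on the cut formula $b\fu A$, followed by height-preserving contractions removing the duplicated copies of $b\subseteq a$, $a\fu A$ and of the context, produces exactly the conclusion $b\subseteq a, a\fu A, \Gamma \Rightarrow \Delta$ of $\Mon$; hence $\Mon$ is admissible.

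For the equivalence of $\Lcon$ and $\LconT$ I would prove interderivability in both directions, which shows that replacing $\Lcon$ by $\LconT$ leaves the set of derivable sequents unchanged. To derive an arbitrary instance of $\LconT$ using $\Lcon$: the premisses of $\LconT$ are $a\in N(x), x:A>B, \Gamma \Rightarrow \Delta, a\fe A$ and $a\fe A, x\barra{a}A|B, a\in N(x), x:A>B, \Gamma \Rightarrow \Delta$; applying $\Cut$ to these two on the cut formula $a\fe A$ and contracting yields $x\barra{a}A|B, a\in N(x), x:A>B, \Gamma \Rightarrow \Delta$, which together with the first premiss is precisely the pair of premisses of an instance of $\Lcon$ whose conclusion is $a\in N(x), x:A>B, \Gamma \Rightarrow \Delta$. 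Conversely, to derive an instance of $\Lcon$ using $\LconT$: the premisses of $\Lcon$ are $a\in N(x), x:A>B, \Gamma \Rightarrow \Delta, a\fe A$ and $x\barra{a}A|B, a\in N(x), x:A>B, \Gamma \Rightarrow \Delta$; weakening the second one on the left with $a\fe A$ turns the pair into the premisses of an instance of $\LconT$, whose conclusion coincides with that of $\Lcon$. Note that neither direction uses $\Mon$, and the admissibility of $\Mon$ does not use $\Lcon$, so the three modifications are independent.

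There is no substantial obstacle here; the closest thing to a subtlety is the routine bookkeeping around the structural steps — choosing the eigenvariable $z$ in the $\Rfu$ step so that it does not occur in $b\subseteq a, a\fu A, \Gamma, \Delta$, and keeping track of which context formulas get duplicated by each cut so that the follow-up contractions are applied to the correct multiset. All the genuine work, namely cut-admissibility and admissibility of weakening and contraction, is already in place.
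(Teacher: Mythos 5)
Your proposal is correct. For part 2 it coincides with the paper's argument: one direction is weakening the right premiss of $\Lcon$ with $a \fe A$ and applying $\LconT$; the other cuts the two premisses of $\LconT$ on $a \fe A$ (modulo an initial weakening and the follow-up contractions forced by the context-splitting form of $\Cut$) to recover the right premiss of $\Lcon$. For part 1 you take a genuinely different route. The paper disposes of $\Mon$ with a one-line appeal to induction on the height of the derivation, i.e.\ the standard permutation argument showing that an application of $\Mon$ can be pushed upward past every other rule; this yields height-preserving admissibility and does not lean on the cut theorem. You instead exhibit an explicit derivation of $b \subseteq a, a \fu A, \Gamma \Rightarrow \Delta, b \fu A$ via $\Rfu$, $\Lc$, $\Lfu$ and the generalized initial sequents of Lemma~\ref{generalized_initial_sequents}, and then eliminate $b \fu A$ against the premiss of $\Mon$ by $\Cut$ and contraction. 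This is shorter and more self-contained as a piece of bookkeeping, at the price of invoking Theorem~\ref{cut_3} and of losing height preservation --- which is harmless here, since the lemma only claims plain admissibility. Your derivation checks out: the eigenvariable condition on $\Rfu$ is respected, the sequent reached after $\Lfu$ is an instance of item~4 of Lemma~\ref{generalized_initial_sequents}, and the cut formula $b \fu A$ sits on the correct sides of the two premisses.
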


\begin{proof}
	The proof of 1 is immediate, by induction on the height of the derivation.
	To prove that $ \LconT $ is admissible if we have $ \Lcon $, apply weakening to the right premiss of $ \Lcon $ and then apply $ \LconT $ to obtain the conclusion of $ \Lcon $. To prove that $ \Lcon $ is admissible if we have $ \LconT $, we need admissibility of $ \Cut $. Let (1) and (2) denote the left and right premiss of $ \LconT $. The conclusion of $ \LconT $ is derived as follows:
	$$
	\infer[\Lcon]{a \in N(x), x:A>B, \Gamma \Rightarrow \Delta}{
		(1) &
		\infer[\Cut]{ x\Vdash_a A|B, a \in N(x), x:A>B, \Gamma \Rightarrow \Delta }{  
			\infer[\LWk]{ x\Vdash_a A|B, a \in N(x), x:A>B, \Gamma \Rightarrow \Delta, a \fe A }{ (1)}
			&
			(2)}
	}
\vspace{-0.4cm}
	$$	
\end{proof}

\begin{definition}
	The saturation conditions for $ \LconT $ and $ \Mon $ are defined in the lower part of Figure \ref{fig:saturation_conditions_PCL}. The list of saturation conditions needed for the termination proof is given by the conditions listed in the upper part of Figure \ref{fig:saturation_conditions_PCL}, in which the condition $ \Lcon $ is replaced by $ \LconT $, and the saturation condition for $ \Mon $ is added.
\end{definition}

We shall provide a decision procedure for sequent calculus $ \lab{CL} $ modified with rules $ \Mon $ and $ \LconT $.
We now define the proof search strategy.

\begin{definition} 
	\label{def:Strategy}
	When constructing root-first a derivation tree for a sequent $ \Rightarrow x_{0}:A $, apply the following \textit{proof search strategy}:
	\begin{enumerate}
		\item Apply rules which introduce a new label (\emph{dynamic} rules) only if rules which do not introduce a new label (\emph{static} rules) are not applicable; as an exception, apply $ \Rcon $ before $ \LconT $.
		\item If a sequent satisfies a saturation condition R, do not apply to that sequent the rule  R corresponding to the saturation condition.
	\end{enumerate}
	
\end{definition}

Observe that if the strategy is applied, world labels in root-first proof search are processed one after the other, according to the order in which they are generated.

\begin{example}\label{ex:solution}
	In Example \ref{ex:simple_loop}, the loop is stopped because of the proof search strategy and the saturation condition for $ \Lbar $, which blocks the uppermost application of the rule to the formula $ x \barra{b} A|B $. 
	The proof strategy requires static rules to be introduced \emph{before} dynamic rules. Thus, the static rule $ \Ref $ is applied \emph{before} the uppermost occurrence of the dynamic rule $ \Lbar $, introducing in the derivation formula $ b\subseteq b $\footnote{By a similar argument, also $ a\subseteq a $ and a number of other formulas should occur in the derivation before the uppermost application of $ \Lbar $; but they are not relevant here.}.
	The saturation condition for $ \Lbar $ applied to $ x \barra{b} A|B $ is met if there is some label $ d $ such that formulas $ d \subseteq b $, $ d \in N(x) $, $ d \fe A $ and $ d \fu A\rightarrow B $ already occur in the antecedent of a sequent occurring lower in the branch. Thus, if we take $ d $ to be $ b $ itself, the saturation condition is met and the uppermost occurrence of $ \Lbar $ cannot be applied.
	
	To see how the loop in Example \ref{ex:complex_loop} is stopped, we re-write the derivation according to the proof search strategy, highlighting the formulas to which rule $ \Lbar $ cannot be applied.
	Observe that here rule $ \LconT $ and $ \Mon $ become relevant.
	The same conventions as in Example \ref{ex:complex_loop} apply.

	\begin{adjustbox}{max width= \textwidth}
	$$
	\infer[\LconT\mbox{(2)}]{a\in N(x), \Omega \Rightarrow \Delta }{
		\infer[\Lbar]{a \fe A, a \fe C, x \barra{a} A|B, x\barra{a} C|D,\Omega \Rightarrow \Delta}{
			\infer[\Lbar]{b \subseteq a, b\in N(x),  b\fe A, b\fu A \rightarrow B,x\barra{a} C|D,  \Omega \Rightarrow \Delta}{
				\infer[\Ref ~\mbox{(2)}]{c \in N(x), c \subseteq a, c \fe C, c \fu C\rightarrow D, \dots, \Omega \Rightarrow \Delta}{
					\infer[\LconT \mbox{(4)}]{b \subseteq b, c \subseteq c,c \in N(x), c \subseteq a, c \fe C, c \fu C\rightarrow D, \dots, \Omega \Rightarrow \Delta}{
						\infer[\Lbar] {b \fe A, b \fe C, c \fe A, c \fe C, \mathbf{x\barra{b} A|B}, x\barra{b} C|D, x\barra{c} A|B, \mathbf{x\barra{c} C|D},\dots, \Omega\Rightarrow \Delta}{
							\infer[\Mon]{ d \in N(x), d\subseteq b, d \fe C, d \fu C\rightarrow D, x\barra{c} A|B,\dots, \Omega\Rightarrow \Delta}{
								\infer[\Lbar]{d \in N(x), d\subseteq b, d \fe C, d \fu C\rightarrow D, d \fu A\rightarrow B, x\barra{c} A|B,\dots, \Omega\Rightarrow \Delta}{
									\infer[\Mon]{e \in N(x), e\subseteq c, e \fe A, e \fu A\rightarrow B,\dots, \Omega\Rightarrow \Delta}{
										\infer[\Ref~ \mbox{(2)}]{e \in N(x), e\subseteq c, e \fe A, e \fu A\rightarrow B,  e \fu C\rightarrow D,\dots, \Omega\Rightarrow \Delta}{
									\infer[\LconT \mbox{(4)}]{d \subseteq d, e \subseteq e, e \in N(x), e\subseteq c, e \fe A, e \fu A\rightarrow B, e \fu C\rightarrow D,\dots, \Omega\Rightarrow \Delta}{
											\deduce{d\fe A, d \fe C, e \fe A, e \fe C, \mathbf{x\barra{d} A|B}, \mathbf{x\barra{d} C|D}, \mathbf{x\barra{e} A|B}, \mathbf{x\barra{e} C|D}, \cdots, \Omega \Rightarrow \Delta }{\vdots}
											}
										}
									}
								}
							}
						}
					}
				}
			}
		}
	}
	$$
	\end{adjustbox}				

\

\noindent Application of $ \Lbar $ to formula $ x \barra{b} A|B $ is blocked by the saturation condition, since $ b\subseteq b $, $ b\in N(x) $, $ b \fe A $ and $ b \fu A\rightarrow B $ all occur in the branch. Application of the rule to $ x\barra{c} C|D $ is blocked in a similar way. 
Application of $ \Lbar $ to $ x \barra{d} A|B $ is blocked, since all the formulas relevant for the saturation condition occur in the branch: $ d \subseteq d $ (introduced by $ \Ref $), $ d \in N(x) $, $ d \fe A $ (introduced by $ \LconT $) and $ d \fu A \rightarrow B $ (introduced by $ \Mon $). Application of $ \Lbar $ to the other formulas in the top sequent is blocked, and the loop is stopped.
\end{example}

Before tackling the termination proof, we define an ordering of the world labels according to their generation in the branch. The resulting tree of labels is needed to ensure that the number formulas introduced in root-first proof search is \emph{finite}.

\begin{definition} \label{def:graph_labels}
	Given a sequent $ \Gamma_k \Rightarrow \Delta_k $, let $ a $, $ b $ be neighbourhood labels and $ x $, $ y $ world labels occurring in $ \downarrow \Gamma_k \cup \downarrow\Delta_k $. We define:
	\begin{itemize}
		\item $ k(x)= min \{t \mid x \mathit{\  occurs \ in  \ } \Gamma_t\} $;
		\item $ k(a)= min \{t \mid a \mathit{\ occurs \ in \ } \Gamma_t \}$;
		\item $ x \g a $, \textquotedblleft $ x $ generates $ a $\textquotedblright ~ if for some 
		$ t\leqslant k $  and $ k(a)=t $, $ a \in N(x) $ occurs in $ \Gamma_t $;
		\item $ b \g y $, \textquotedblleft $ b $ generates $ y $\textquotedblright ~ if for some $ t\leqslant k $ and $ k(y)=t $, $ y\in b $ occurs in $ \Gamma_t $; 
		\item $ x \ww y$ if for some $ a $, $ x\g a $ and $ a \g  y$ and $ x \neq y $.
	\end{itemize}
\end{definition}

\noindent Intuitively, the relation $ x\g a $ holds between $ x $ and $ a $ if $ a\in N(x) $ is introduced at some stage in the derivation (thus, with an application of $ \Rcon $ or $ \Lbar $); similarly, the relation $ b \g y  $ holds between $ b $ and $ y $ if $ y \in b $ is introduced in the derivation (thus, applying either $ \Rfu $ or $ \Lfe $).

\begin{lemma} \label{lemma:graph_labels}
	Given a derivation branch, the following hold: 
	\begin{itemize}[noitemsep]
		\item[(a)] The relation $ \ww $ is acyclic and forms a tree  with  the world label $ x_0 $ at the root;
		\item[(b)] All labels occurring in a derivation branch also occur in the associated tree; that is, letting $ x \wclos y$ be the transitive closure of $ \ww $, if $ u $ occurs in $ \downarrow \Gamma_k $, then $ x_0 \wclos u $.
	\end{itemize}
\end{lemma}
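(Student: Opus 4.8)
The plan is to reduce both parts to a single structural observation about how labels enter a branch. \textbf{Step 1 (cataloguing introductions).} I would first inspect, rule by rule, which rules of $\lab{CL}$ together with $\Mon$ and $\LconT$ can have in a premiss a label not occurring in their conclusion. Reading bottom-up, the only rules introducing a new \emph{world} label are $\Rfu$ and $\Lfe$, and each inserts an atom $y\in a$ with the neighbourhood label $a$ already present in the conclusion; the only rules introducing a new \emph{neighbourhood} label are $\Rcon$ and $\Lbar$, and each inserts an atom $a\in N(x)$ with the world label $x$ already present in the conclusion; every other rule ($\Ref$, $\Trans$, $\Lc$, $\Lfu$, $\Rfe$, $\LconT$, $\Mon$ and the propositional rules) introduces no new label. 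Using the eigenvariable side conditions and the standard convention that eigenvariables are chosen globally fresh along the branch, every label occurs for the first time at a well-defined stage, and since a fresh neighbourhood label already shows up in an antecedent relational atom while a fresh world label already shows up in an antecedent atom $y\in a$, for every neighbourhood label and every world label other than $x_0$ the quantity $k(\cdot)$ of Definition~\ref{def:graph_labels} coincides with that first stage. Finally, $x_0$ is never the left component of an atom $x_0\in a$: such an atom could only be created by $\Rfu$ or $\Lfe$ (impossible, since $x_0$ is not fresh) or by $\Lc$ from an earlier atom $x_0\in a'$, so by induction on the stages none is ever created; hence no neighbourhood label $\g$-generates $x_0$.

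\textbf{Step 2 (unique parents).} From Step 1 I would extract that every neighbourhood label $a$ has exactly one world label $p(a)$ with $p(a)\g a$, that every world label $y\neq x_0$ has exactly one neighbourhood label $p(y)$ with $p(y)\g y$, and that $x_0$ has no $\g$-predecessor. Uniqueness holds because at the stage $k(a)$ (resp.\ $k(y)$) at which $a$ (resp.\ $y$) first occurs, the introducing rule inserts the \emph{only} atom of the form $a\in N(\cdot)$ (resp.\ $y\in\cdot$) present in that sequent, as $a$ (resp.\ $y$) does not occur at any earlier stage. Moreover $x\g a$ gives $k(x)\le k(a)$ and $a\g y$ gives $k(a)<k(y)$, so whenever $x\ww y$ one has $k(x)<k(y)$; in particular the side condition $x\neq y$ in the definition of $\ww$ is automatic. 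Composing the two parent maps, every $y\neq x_0$ has the unique $\ww$-parent $\pi(y):=p(p(y))$, with $k(\pi(y))<k(y)$, while $x_0$ has no $\ww$-parent.

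\textbf{Step 3 (conclusion).} Acyclicity of $\ww$ is immediate from $x\ww y\Rightarrow k(x)<k(y)$. For part (b), a world label $u$ occurring in $\downarrow\Gamma_k$ is a node of the tree: if $u\neq x_0$, iterating $\pi$ from $u$ produces a strictly $k$-decreasing, hence finite, chain $u,\pi(u),\pi^2(u),\dots$ which can terminate only at the unique parentless node $x_0$, so $x_0\wclos u$; and for $u=x_0$ it is the root. For part (a), every world label of the branch other than $x_0$ occurs in some antecedent, hence has the single parent $\pi$ and is connected to $x_0$, so the graph of $\ww$ is a tree rooted at $x_0$; moreover every neighbourhood label $a$ hangs off the tree node $p(a)$ via $\g$, so all labels of the branch occur in the associated tree. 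I expect the only genuinely delicate part to be Step 1: the exhaustive check that nothing but $\Rfu$, $\Lfe$, $\Rcon$, $\Lbar$ ever creates a fresh label and that each of these records a well-defined generator, together with a little care around the root label $x_0$, which resides only in the succedent of the root sequent.
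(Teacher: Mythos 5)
Your proposal is correct and follows essentially the same route as the paper's proof: uniqueness of the generating label is extracted from the minimality built into $k(\cdot)$ (each label is generated only at the stage where it first appears), and connectivity to $x_0$ is obtained by well-founded descent along $k$, which is the paper's induction on $k(u)$ read in reverse. Your Step 1 catalogue of the label-introducing rules and the explicit treatment of the root label $x_0$ merely spell out what the paper leaves implicit, without adding a new idea.
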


\begin{proof}
	(a) follows from the definition of relation $ \ww $ and from the sequent calculus rules. 
	Observe that the relation of generation $ \g $ between world and neighbourhood labels is \emph{unique}: it is defined by taking into account the value $ k(a) $ or $ k(y) $, which keeps track of the derivation step at which the new label is introduced. At each derivation step, dynamic rules introduce a new label which is generated by at most one world or neighbourhood label. Take a sequent $ \Gamma_k \Rightarrow \Delta_k $, and suppose $ x \g a $: by definition, there is a $ t \leqslant k $ such that $ k(a) = t $ and $ a \in N(x) $ occurs in $ \Gamma_t $. Now suppose that $ a \in N(y) $ occurs in some $ \Gamma_{s} $, with $ t < s \leqslant k $. Since $ k(a) =t $, relation $ y \g a $ does not hold in the tree of labels. A similar reasoning holds for $ y \g a $. Thus, except for the label at the root, each label in a derivation branch has \emph{exactly} one parent according to the relation $ \g $ and, by definition, also according to $ \ww $.

	As for (b), it is easily proved by induction on $ k(u) \leqslant k $. If $ k(u)=0 $, then $ u=x_0 $ and (b) trivially holds. If $ k(u)=t>0 $, $ u $ does not occur in $ \Gamma_{t-1} $ and $ u $ occurs in $ \Gamma_t $. This means that there exist a $ v $ and a $ b $ such that $ b\in N(v)$ occurs in $ \Gamma_{t-1} $ and $ u\in b $ occurs in $ \Gamma_t $; thus, $ k(v) < k(u) $. By inductive hypothesis, $ x_0\wclos  v  $; since $ v \ww u $, also $ x_0 \wclos u $ holds. 
\end{proof}

\begin{definition} \label{def:conditional_degree}
	The \emph{size} of a formula $ A $, denoted by $ \size{A}$, is the number of symbols occurring in $ A $.
	
	The \emph{conditional degree} of a formula $ A $ corresponds to the level of nesting of the conditional operator in $ A $ and is defined as follows: 
	\begin{itemize}[noitemsep]
		\item $ d(p)= d(\bot) =0 $ for $ p $ atomic; 
		\item $ d(C \circ D)= max(d(C), d(D)) $ for $ \circ \in \{ \wedge, \lor, \rightarrow \}$; 
		\item $ d(C>D)=  max(d(C) , d(D)) +1 $. 
	\end{itemize}
	Given a sequent $ \Gamma \Rightarrow \Delta $ occurring in a derivation branch $ \mathcal{B} $, the conditional degree of a world label is the highest conditional degree among the formulas it labels:
	$$ 
	d(x) =  \textit{max}\{ d(C) \mid x:C \in \downarrow \Gamma \cup \downarrow \Delta \}.
	$$
 
\end{definition}

\noindent We now prove that the proof search strategy ensures termination.

\begin{theorem}[Termination]\label{theorem:termination}
	Root-first proof search for a $ \lab{CL} $ derivation for a sequent $ \Rightarrow x_0 :A_0$ built in accordance with the strategy terminates in a finite number of steps, with either an initial sequent or a saturated sequent.  
\end{theorem}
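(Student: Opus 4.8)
The plan is to prove termination by bounding, uniformly in $\size{A_0}$, the number of labels and of labelled formulas that can occur along any branch built with the strategy, and then observing that each rule application must enlarge the branch's stock of formulas; König's lemma then closes the argument. First I would note that, by inspection of the rules of $\lab{CL}$ together with $\LconT$ and $\Mon$, the ``pure part'' of every labelled formula produced in proof search lies in the finite set $\mathsf{Cl}(A_0)$ consisting of the subformulas of $A_0$, the implications $A\rightarrow B$ with $A>B$ a subformula of $A_0$, and the expressions $A|B$ with $A>B$ a subformula of $A_0$. So it suffices to bound the world and neighbourhood labels occurring on a branch: the labelled formulas and relational atoms are then determined, up to finitely many choices, by tuples of labels.

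To bound the labels I would use the tree of world labels from Lemma \ref{lemma:graph_labels}, rooted at $x_0$ under $\ww$, and argue separately about its depth and its branching. For the depth, the key claim is that conditional degree strictly decreases along tree edges, i.e.\ $u\ww v$ implies $d(v)<d(u)$. This holds because any neighbourhood $a$ with $u\g a$ is introduced only while processing a conditional $C>D$ occurring with label $u$ — via $\Rcon$ on $u:C>D$, or via $\Lbar$ on a formula $u\barra{\cdot}C|D$ which itself stems from $\LconT$ on $u:C>D$ — and therefore any formula $v:E$ created for a world $v$ with $a\g v$ is obtained by the rules $\Lfe,\Lfu,\Rfu,\Rfe$ from $\fe$/$\fu$-formulas $\nu\fe C$, $\nu\fu C\rightarrow D$ attached to members of $N(u)$, with $C>D$ ranging over conditionals labelled at $u$, followed by propositional or conditional decomposition, whence $d(E)<d(u)$. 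Since $d(x_0)=d(A_0)$ and conditional degrees are nonnegative, the tree has depth at most $d(A_0)$.

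For the branching I would fix a world $x$ and bound $N(x)$ and the worlds generated from its neighbourhoods; this is where the modified rules do their job. By its saturation condition, $\Rcon$ adds at most one neighbourhood of $x$ per conditional occurring in a succedent. The chain of neighbourhoods produced by $\Lbar$ is the delicate point: I would attach to each $\nu\in N(x)$ the set $\mathrm{sat}(\nu)$ of conditionals $D_l>E_l$ (labelled at $x$) such that $\nu\fu D_l\rightarrow E_l$ occurs on the branch, and show that $\mathrm{sat}$ strictly grows along $\subseteq$-chains of $\Lbar$-generated neighbourhoods. Indeed, since the strategy fires the static rules $\Ref$, $\Trans$, $\Mon$, $\LconT$ before the dynamic rule $\Lbar$, on the relevant branch one has: if $\Lbar$ is actually applied to $x\barra{c}D_l|E_l$ then $l\notin\mathrm{sat}(c)$ — otherwise $c$ itself, together with $c\subseteq c$ from $\Ref$ and $c\fe D_l$ from $\LconT$, would already witness the saturation condition for $\Lbar$ and the rule would be blocked — while the created $c'\subseteq c$ has $\mathrm{sat}(c')\supseteq\mathrm{sat}(c)\cup\{D_l>E_l\}$, getting $c'\fu D_l\rightarrow E_l$ outright and $c'\fu D_k\rightarrow E_k$ for $k\in\mathrm{sat}(c)$ via $\Mon$ along the chain (kept transitively closed by $\Trans$). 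As $\mathrm{sat}(\nu)$ is a subset of the finitely many conditionals labelled at $x$, these chains have bounded length and bounded branching, so $N(x)$ is finite; and by the saturation conditions for $\Lfe$ and $\Rfu$, each neighbourhood spawns at most one world per pair consisting of itself and a formula of $\mathsf{Cl}(A_0)$. Hence $x$ has boundedly many children, and a tree of depth $\le d(A_0)$ with this property is finite.

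Finally, with the labels, hence the labelled formulas and relational atoms, bounded on every branch, I would observe that the strategy never applies a rule $\mathsf{R}$ to a sequent already satisfying $\mathsf{R}$'s saturation condition, so every step of a branch adds at least one new formula or relational atom to its accumulated context $\downarrow\Gamma\cup\downarrow\Delta$; thus branches are finite, the at most binary proof-search tree is finitely branching, and by König's lemma it is finite. A leaf admits no rule under the strategy, which forces it to be an initial sequent or to satisfy all saturation conditions, i.e.\ to be saturated. The hard part, and the reason $\LconT$ and $\Mon$ were introduced, is exactly the branching bound: checking that $\Lbar$ stops producing new neighbourhoods once $\mathrm{sat}$ of the current neighbourhood contains the conditional being decomposed, which rests on the fact that the static rules $\Ref$, $\Trans$, $\Mon$, $\LconT$ pre-populate the sequent before $\Lbar$ fires — the phenomenon witnessed by passing from Example \ref{ex:complex_loop} to Example \ref{ex:solution}.
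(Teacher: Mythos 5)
Your proof is correct and follows essentially the same route as the paper's: reduce termination to finiteness of the label tree of Lemma \ref{lemma:graph_labels}, bound its depth by the strict decrease of conditional degree along $\ww$, and bound its branching by showing that the static rules $\Ref$, $\Trans$, $\Mon$, $\LconT$ pre-populate the sequent so that the saturation condition for $\Lbar$ eventually blocks every new neighbourhood. The only difference is bookkeeping in the hard step: where the paper counts explicitly how many $\Lbar$ applications are blocked at the $i$-th generation stage (yielding the bound $\sum_{k}n!/(n-k)!$), you package the same mechanism as a strictly increasing invariant $\mathrm{sat}(\nu)$ along $\subseteq$-chains, which gives finiteness (all the theorem asserts) somewhat more cleanly, at the cost of not extracting the paper's explicit $O(n^2\cdot n!)$ bound used later for the complexity estimate.
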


\begin{proof}
	To prove that root-first proof search terminates, we have to show that all the branches of a derivation starting with $  \Rightarrow x_0 :A_0 $ and built in accordance with the proof search strategy  are finite. We take an arbitrary derivation branch $ \mathcal{B} $.
	Since $ \lab{CL} $ rules do not increase the complexity of formulas when going from the conclusion to the premiss(es), the only source of non-termination in the branch is the presence of an infinite number of labels. We need to show that the tree of labels associated to $ \mathcal{B} $ is finite. Let us call $ \grafo $ the graph associated to $ \mathcal{B} $ according to Definition \ref{def:graph_labels}. This amounts to prove that: 
\begin{enumerate}
	\item Each branch of $ \grafo $ has a finite length;
	\item Each node of $ \grafo $ has a finite number of immediate successors.
\end{enumerate}
	Claim 1 is proved by induction on the conditional degree of a label $ y $ occurring in the branch. If $ d(y) = 0 $, $ y $ labels either an atomic formula or a propositional formula. In any case, no new world labels are generated from $ y $, and the branch is finite. If $ d(y)>0 $, it means that $ y $ labels some conditional formula. In this case, $ y $ generates at least one world label $ z $, meaning that
	for some neighbourhood label $ a $, $ y \g a $ and $ a \g z $. By definition, $ y \g a $ if rule $ \Rcon $ or $ \Lbar $ are applied in the derivation branch, introducing formula $ a \in N(y) $. Similarly, $ a \g z $ if formula $ z \in a $ has been introduced in the branch by application of $ \Lfe  $ or $ \Rfu $. Thus, a new world label $ z $ can be generated from a world label $ y $ by a combination of the above rules, possibly with the addition of static rules. In any case, it holds that the conditional degree of the formulas labelled with $ z $ is strictly smaller than the conditional degree of the formulas labelled with $ y $. To see this, suppose that $ y:A>B $ occurs in the consequent of some sequent in the branch. Application of $ \Rcon $ introduces a relational atom $ a \in N(y) $, and generates a formula $ y \barra{a} A|B $ in the consequent. Application of rule $ \Rbar $ introduces in the consequent either formula $ a \fe A $, to which no dynamic rules can be applied, or formula $ a \fu A\rightarrow B $. In this case, rule $ \Rfu $ can be applied, and a new world label $ z \in a $ is generated, along with formula $ z:A\rightarrow B $ in the consequent. It holds that $ d(z) < d(y)   $, and similar considerations apply for the other rules combinations. 
	It holds that $ d(A_0) $ is bounded by the size of the formula $ A_0 $ at the root. Thus, for $ n = \size{A_0} $, the maximal length of each branch of $ \grafo $ is bounded by $ O(n) $.

	\
	
	Proving claim 2 requires some care. By definition, a world label $ z $ is generated by a world label $ y $ if there is some neighbourhood label $  a$ such that $ y \g a $ and $ a \g z $, for $ k(y) = s $, $ k(a) = t $ and $ k(z)= u $ with $ s<t<u $. To prove that the number of world labels generated by some $ y $ is finite, we need to prove that:
	\begin{itemize}
		\item[$a)$] A world label $ y $ generates a finite number of neighbourhood labels;
		\item[$b)$] A neighbourhood label $ a $ generates a finite number of new world labels.
	\end{itemize} 
	As for $ a) $, observe that a new neighbourhood label can be generated by application of $ \Rcon $ or $ \Lbar $. In the former case, the rule is applied to some formula $ y:A>B $ occurring in $ \Delta_{t-1} $. Since the formula disappears from $ \Delta_t $, rule $ \Rcon $ can be applied only once. Thus, the number of new neighbourhood labels linearly depends on the size of the formula $ A_0 $ at the root of the sequent. \\
	The case in which the new neighbourhood is generated by $ \Lbar $ is more complex, since the rule may interact with rule $ \LconT $, as shown in Examples \ref{ex:simple_loop} and \ref{ex:complex_loop}. 
	To see how the loop is stopped in the general case, suppose that one neighbourhood label $ a \in N(y) $ occurs in the antecedent of some sequent in $ \mathcal{B} $, along with $ n $ conditional formulas $ y:A_1>B_1, \dots , y:A_n>B_n $. 
	After $ n $ applications of $ \LconT $, $ n $ formulas $ y \barra{a}A_1|B_1, \dots , y \barra{a} A_n|B_n $ occur in the antecedent. By $ n $ applications of $ \Lbar $, $ n $ new neighbourhood label are generated, along with the following formulas in the antecedent: 

	\

	\begin{tabular}{c }
	$b_1 \subseteq a, b_1 \in N(y), b_1 \fe A_1, b_1 \fu A_1 \rightarrow B_1$ \\
	\multicolumn{1}{c}{\vdots}\\
	$b_n \subseteq a, b_n \in N(y), b_n \fe A_n, b_n \fu A_n \rightarrow B_n$\\
	\end{tabular}
	
	\
	
	\noindent Now, rule $ \LconT $ can be applied to all the conditional formulas and all the neighbourhood just introduced. Thus, $ n\cdot n $ formulas are generated in the antecedent, along with formulas $ b_1 \subseteq b_1 , \dots, b_n \subseteq b_n $ introduced by $ \Ref $. 
	
	\
	
	\begin{tabular}{c c c}
	$b_1 \subseteq b_1, y\barra{b_1} A_1|B_1,$ & \dots &$ y\barra{b_1} A_n|B_n$\\
	\vdots &  & \vdots \\
	$b_n \subseteq b_n, y\barra{b_n} A_1|B_1,$ & \dots &$ y\barra{b_n} A_n|B_n$\\
	\end{tabular}
	
	\
	
	\noindent In principle, application of $\Lbar  $ yields $ n\cdot n $ new neighbourhood labels; however, $ n $ applications of the rule are blocked by the saturation condition associated to the rule. More precisely, $ \Lbar $ cannot be applied to formula $  y\barra{b_1} A_1|B_1 $, because formulas $ b_1 \subseteq b_1$, $b_1 \in N(x)$, $b_1 \fe A_1 $ and $ b_1 \fu A\rightarrow B $ occur lower in the branch. Similarly, the saturation condition for $ \Lbar $ blocks applications of the rule to formulas $ y\barra{b_2} A_2|B_2 $, $ y \barra{b_3} A_3|B_3 $, and so on. Thus, only $ n (n-1) $ new neighbourhood labels are generated. Let $ k = n-1 $.
	
	\
	
	\noindent
	\begin{tabular}{c c c }
		$c^1_2 \subseteq b_1,  c^1_2 \fe A_2, c^1_2 \fu A_2 \rightarrow B_2$ & \dots & $c^1_n \subseteq b_1, c^1_n \fe A_n, c^1_n \fu A_n \rightarrow B_n$ \\
		\vdots &  & \vdots \\
		$c^n_1 \subseteq b_n, c^n_1 \fe A_1, c^n_1 \fu A_1 \rightarrow B_1$ & \dots & $c^n_k \subseteq b_n, b^n_k \fe A_{k}, b^n_{k} \fu A_{k} \rightarrow B_{k}$ \\
	\end{tabular}

	\
	
	\noindent
	Before applying $ \LconT $, we exhaustively apply the static rules of $ \Ref $ and $ \Mon $, obtaining the following formulas (recall that $ k = n-1 $):
	
	\
	
	\begin{tabular}{c c c}
		$ c^1_2 \subseteq c^1_2, c^1_2 \fu A_1 \rightarrow B_1 $ & \dots & $ c^1_n \subseteq c^1_n, c^1_n \fu A_1 \rightarrow B_1 $\\
		\vdots &  & \vdots \\
		$ c^n_1 \subseteq c^n_1, c^n_1 \fu A_n \rightarrow B_n $ & \dots & $ c^n_k \subseteq c^n_k, c^n_k \fu A_n \rightarrow B_n $\\
	\end{tabular}

	\
	
	We now apply $ \LconT $, and introduce $ n \cdot n (n-1) $ formulas to which $ \Lbar $ can be applied. Let us consider the $ n $ formulas generated from application of the rule to label $ c^1_2 $. Recall that $ \LconT $ also introduces local forcing formulas.
	$$
	c^1_2 \fe A_1, \dots, c^1_2 \fe A_n, y \barra{c^{1}_{2}} A_1 |B_1, y \barra{c^{1}_{2}} A_2 |B_2, y \barra{c^{1}_{2}} A_3 |B_3, \dots , y \barra{c^{1}_{2}} A_n |B_n   
	$$
	The application of $ \Lbar $ to formula $ y \barra{c^{1}_{2}} A_2 |B_2 $ is blocked by the saturation condition: formulas $ c^1_2 \subseteq c^1_2 $, $ c^1_2 \fe A_2 $ and $ c^1_2 \fu A_2 \rightarrow B_2 $ occur in the branch. Application of the rule to $ y \barra{c^{1}_{2}} A_1 |B_1 $ is also blocked: formulas $ c^1_2 \fe A_1 $ and $ c^1_2 \fu A_1 \rightarrow B_1 $ have been introduced in the branch by $ \LconT $ and $ \Mon $ respectively. Thus, rule  $ \Lbar $ can be applied only $ n(n-1)(n-2) $ times, generating  the same number of new neighbourhood labels.
	The process continues: after the next applications of $ \LconT $ and $ \Lbar $, $ n(n-1)(n-2)(n-3) $ new labels are introduced, and so on. The number of $ \Lbar $ rule applications blocked by the saturation condition strictly increases, until all the generated neighbourhood labels are blocked.
	
	To be more precise, count as one step in the generation process all applications of $ \LconT  $, $ \Mon $, $ \Ref $ and $ \Lbar $ to a sequent. During the $ i $-th step, rule $ \LconT $ generates a number $ n $ of formulas $ x \barra{e} G|H $ for each neighbourhood label occurring in the sequent. Then, rule $ \Lbar $ can be applied, introducing a new neighbourhood label for each application. However, out of every $ n $ formulas $ x \barra{e} G|H$, $ i-1 $ applications of $ \Lbar $ are blocked.
	$$ 
	\# \textit{ of new neighbourhood labels at the } i^{th} \textit{ step } = ~ \frac{n!}{(n-i)!} 
	$$ 
	It follows that after $ n+1 $ steps, all the generated neighbourhood labels are blocked and, as a consequence, all applications of $ \LconT $ and $ \Lbar $ are blocked. 
	In general, for each neighbourhood label $ a \in N(y) $ and $ n $ conditional formulas labelled with $ y $, we generate at most 
	$$
	\sum_{k=1}^{n-1} \frac{n!}{(n-k)!}
	$$
	new neighbourhood labels. The number of neighbourhood labels generated by $ \Lbar $ and $ \Lcon $ is bounded by $ O((n-2) \cdot n!) $, since at most $ n-2 $ terms appear in the sum of labels, and the biggest term in the sum is $ n! $. This can be approximated to $ O(n^2 \cdot n!) $.
	
	\
	
	To prove $ b) $, recall that a new world label $ z $ is generated from a neighbourhood label $ a $ if rule $ \Lfe $, $ \Rfu $, or $ \Lbar $ are applied in the derivation. Since in all these rules the principal formula disappears from the premiss, each rule can be applied at most once to each suitable formula, generating one world label for each application. Thus, the number of world labels generated linearly depends on the size of the formula $ A_0 $ at the root of the derivation.
	
	\
	
	Since $ \grafo $ has a finite number of nodes, the world and neighbourhood labels in the derivation are finite. Since the pure formulas are in a finite number (all subformulas of $ A_0 $), in a finite number of steps proof proof search terminates, yielding either a saturated sequent or an initial sequent. 
\end{proof}

Take $ n =\size{A_0} $. The number of labels generated from a node of $ \grafo $ is counted as follows. The number of neighbourhood labels generated by $ \Rcon $ is $ O(n) $. Since the number of conditional formulas in the derivation is bounded by $ \size{A_0} $, the number of neighbourhood labels generated by $ \Lbar $ and $ \Lcon $ is bounded by $ O(n^2\cdot n!) $. Each neighbourhood label generates one new world labels; thus, the maximal number of world labels generated from a world label is bounded by $ O(n^2 \cdot n!) $. 
To conclude, since the maximal length of each branch is bounded by $ O(n) $, the maximal number of world labels introduced in a derivation branch is bounded by $ O(n^3\cdot n!) $. 
To obtain a complexity bound for the decision procedure associated to $ \lab{CL} $, the maximal number of labels has to be combined with the number of formulas generated at each step. The exponential bound on labels, however, already shows that the complexity of the decision procedure is $\mbox{NEXPTIME}$, far from the $\mbox{PSPACE} $ bound known for $ \PCL $ (see Remark \ref{remark:complexity}).

\subsection{Decidability for extensions}

Theorem \ref{theorem:termination} can be extended to the calculi for most extensions of $\PCL$. 

We show how sequent calculi for logics with normality, total reflexivity, weak centering, centering and uniformity terminate. We do not treat extensions of $ \lab{CL} $ with the rules for absoluteness. In these logics  all $N(x)$  are the same, and there is no need to keep track of the system of neighbourhood $ N(x) $ to which a certain neighbourhood $ \alpha $ belongs. This simplification is not reflected by sequent calculus $ \lab{CLA} $, which is defined as a \emph{modular} extension of $ \lab{CL} $. Thus, proving termination of $ \lab{CLA} $ is not worth, since the simplest extension of $ \PCL $ would have the most complex decision procedure\footnote{Refer to \cite{girlandothesis} for terminating a labelled sequent calculus more suitable to treat the condition of absoluteness. The resulting decision procedure, however, is still not optimal.}.

In order to treat the extensions of $\PCL$  we define  saturation conditions for the additional rules  and prove that the tree of labels corresponding to a derivation branch is finite. 

The rules we are concerned with are $ \N $, $ \emp $, $ \T $, $ \W $, $ \C $, $ \Single $, $ \Repl{1} $, $ \Repl{2} $, $ \Unif{1} $ and $ \Unif{2} $. Proof of termination for sequent calculi displaying a combination of these rules can be obtained by combining the proof strategies exposed in this section.
We start by adding to the conditions in Figure \ref{fig:saturation_conditions_PCL} the saturation conditions for these new rules, listed in Figure \ref{fig:saturation_cond_ext}.

\begin{figure}
	\begin{adjustbox}{max width = \textwidth}
	\begin{tabular}{l l}
		\hline 
		& \\
		$ \emp $ & If  $ a \in N(x) $ is in $ \Gamma $ then $ y \in a $ is in $ \Gamma $ for some $y$\\
		$ \N $ & If $ x $ is in $ \downarrow \Gamma \cup \downarrow \Delta $ then for some $ a $, $ a \in N(x) $ is in $ \Gamma $\\
		$ \T $ & If $ x $ is in $ \downarrow \Gamma \cup \downarrow \Delta $, there is an $ a $ such that $ a \in N(x) $ and $ x \in a $ are in $ \Gamma $\\
		$ \W $ & If $ a \in N(x) $ is in $ \Gamma $ then $ x \in a  $ is in $ \Gamma $\\
		$ \C $ & If $ a \in N(x) $ is in $ \Gamma $, both $\{ x\} \in N(x) $ and  $ \{ x\} \subseteq a $ are in $ \Gamma $\\
		$ \Single $ & If $\{ x\} \in N(x) $ is in $ \Gamma $, then $ x \in \{x\} $ is in $ \Gamma $\\
		$ \Repl{1} $ & If $ y \in \{ x\} $ is in $ \Gamma $, and if some formula $ At(x) $ is in $  \Gamma  $, then $ At(y) $ is in  $\Gamma$\\
		$ \Repl{2} $ &  If $ y \in \{ x\} $ is in $ \Gamma $, and if some formula $ At(y) $ is in $  \Gamma  $, then $ At(x) $ is in  $\Gamma$\\
		$ \Unif{1} $ & If $ a \in N(x) $, $ y \in a $, $ b \in N(y) $ and $ z \in b $ are in $ \Gamma $, then for some $ c $, $ c\in N(x) $ \\
		& and $ z \in c $ are in $ \Gamma $\\
		$ \Unif{2} $ & If $ a \in N(x) $, $ y \in a $, $ b \in N(x) $ and $ z \in b $ are in $ \Gamma $, then for some $ c $, $ c\in N(y) $ \\
		& and $ z \in c $ are in $ \Gamma $\\[1ex]
		\hline
	\end{tabular}
	\end{adjustbox}
\caption{Saturation conditions for extensions}
\label{fig:saturation_cond_ext}
\end{figure}

\begin{definition}
	The proof search strategy from Definition \ref{def:Strategy} is supplemented with the following clause:
	\begin{itemize}
		\item[3.] Rule $ \emp $ can be applied to a sequent and a formula $a\in N(x)$ only if some formula $ a \fe A $ occurs in the consequent, or some formula $ a \fu A $ occurs in the antecedent.
	\end{itemize}
\end{definition}
Let us see how the proof search strategy stops the two new cases of loop generated by the rules for extensions.
Interaction of $ \emp $ and $ \N $ generate the following loop.
$$
\infer[\N]{ x: A, \Gamma \Rightarrow \Delta}{
\infer[\emp]{ a \in N(x), x: A, \Gamma \Rightarrow \Delta }{
\infer[\N]{ y \in a, a \in N(x), x: A, \Gamma \Rightarrow \Delta}{
\infer[\emp]{b \in N(y), y \in a, a \in N(x), x: A, \Gamma \Rightarrow \Delta}{
\deduce{z \in b, b \in N(y), y \in a, a \in N(x), x: A, \Gamma \Rightarrow \Delta}{
\vdots
}
 }
}
}
}
$$ 
If no formulas $ a \fe A $ occur in $ \Delta $ and no formulas  $ a \fu A $ occur in $ \Gamma $, the first application of rule $ \emp $ is blocked. Suppose $ a \fe A $ occurs in $ \Delta $. Then $ \emp $ is applied, but if restriction 3 is not met by neighbourhood $ b $, the second uppermost application of $ \emp $ is stopped. The number of formulas $ a \fe A $ in the consequent and $ a \fu A $ in the antecedent is bounded by the conditional degree of formulas at the root; thus, the loop is stopped. Intuitively, rule $ \emp $ needs to be applied only to the neighbourhood label introduced by $ \N $, to ensure that it is not empty\footnote{Refer to the derivation of axiom (N) in the Appendix.}. The neighbourhoods introduced by $ \T $, $ \Unif{1} $ or $ \Unif{2} $ already contains an element, so no other loops with $ \emp $ arise.

Applications of $ \Unif{1} $ and $ \Unif{2} $ generate the following loop, where we take $ \Omega = a \in N(x), y \in a, b \in N(y), z \in b $.
$$
\infer[\Unif{1}]{\Omega, \Gamma \Rightarrow \Delta }{
	\infer[\Unif{2}]{c \in N(x), z \in c, \Omega, \Gamma \Rightarrow \Delta}{
		\infer[\Unif{1}]{d \in N(y), z \in d, c \in N(y), z \in c, \Omega, \Gamma \Rightarrow \Delta}{
			\infer[\Unif{2}]{ e \in N(x), z \in e, d \in N(y), z \in d, c \in N(y), z \in c, \Omega, \Gamma \Rightarrow \Delta}{
				\deduce{f \in N(y), z \in f, e \in N(x), z \in e, d \in N(y), z \in d, c \in N(y), z \in c, \Omega, \Gamma \Rightarrow \Delta}{\vdots}
			}
		}
	}
}
$$
The saturation condition for $ \Unif{2} $ blocks the first bottom-up application of the rule: there is a neighbourhood label $ b $ such that $ b \in N(y) $ and $ z \in d $ are in $ \Gamma $. Similarly, a loop generated by $ \Repl{1} $ and $ \Repl{2} $ is blocked by their saturation conditions.

\

We now prove termination for the sequent calculi extending $ \lab{PCL} $, adapting the proof of termination for $ \lab{PCL} $ (Theorem \ref{theorem:termination}). Observe that Lemma \ref{lemma:graph_labels} holds for all the extensions considered: thus, the world labels occurring in a derivation branch form a tree according to the relation $ \ww $.

\begin{theorem}[Termination]\label{theorem:termination_ext}
	Root-first proof search for a sequent $ \Rightarrow x_0 :A_0$ in the sequent calculi $ \lab{CL^N} $, $ \lab{CL^T} $, $ \lab{CL^W} $, $ \lab{CL^C} $, $ \lab{CL^U} $, $ \lab{CL^{NU}} $, $ \lab{CL^{TU}} $, $ \lab{CL^{WU}} $ and $ \lab{CL^{CU}} $, built in accordance with the strategy, terminates in a finite number of steps, with either an initial sequent or a saturated sequent.  
\end{theorem}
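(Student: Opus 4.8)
The plan is to adapt the termination argument for $\lab{CL}$ (Theorem~\ref{theorem:termination}) by showing that, for each of the listed calculi, root-first proof search again produces only finitely many labels and hence only finitely many labelled formulas. Since Lemma~\ref{lemma:graph_labels} holds uniformly for all these extensions, the world labels occurring in a branch $\mathcal{B}$ still form a tree under $\ww$, rooted at $x_0$; as before it suffices to prove that (1) each branch of the label tree $\grafo$ has finite length, and (2) each node has finitely many immediate successors. The rules do not increase formula complexity, so once label-finiteness is established, the saturation conditions force each branch to reach an initial or saturated sequent in finitely many steps.

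First I would treat the new rules that introduce labels, namely $\N$, $\emp$, $\T$, $\W$, $\C$, $\Single$, $\Unif{1}$, $\Unif{2}$ (the $\Repl{i}$ rules introduce no new labels). For claim (2): a world label $y$ generates new neighbourhood labels only via $\Rcon$, $\Lbar$ (handled exactly as in Theorem~\ref{theorem:termination}), plus $\N$ (applied once per world, by its saturation condition), $\T$ (once per world), $\C$ (once per already-present neighbourhood of $y$), and $\Single$; none of these creates an unbounded cascade because each is governed by a saturation condition keyed to already-present formulas. For the uniformity rules, the key point is that $\Unif{1}$ and $\Unif{2}$ only act on configurations $a\in N(x),\,y\in a,\,b\in N(y),\,z\in b$ (resp.\ $b\in N(x)$) that are \emph{already present}: they introduce a neighbourhood $c$ of an already-existing world containing an already-existing world $z$; by their saturation conditions such a $c$ is not created if a witnessing one already occurs, so along a branch only finitely many are produced (bounded by the number of world-pairs already generated). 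The $\emp$ rule is the delicate one: restriction~3 of the strategy confines its application to neighbourhoods that carry an $\fe$-formula in the succedent or an $\fu$-formula in the antecedent, and the number of such formulas is bounded by the conditional degree of $A_0$; moreover $\emp$ needs to fire only on the neighbourhood freshly introduced by $\N$, since those introduced by $\T,\Unif{1},\Unif{2}$ already contain a world. This caps the $\N$/$\emp$ interaction.

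For claim (1), I would again argue by induction on the conditional degree $d(y)$ of a label, showing $d(z)<d(y)$ whenever $y$ generates $z$. The mechanisms producing a genuinely new world label from $y$ are $\Rfu$, $\Lfe$ applied to neighbourhoods of $y$ obtained via $\Rcon$/$\Lbar$ (as in Theorem~\ref{theorem:termination}); the singleton $\{x\}$ introduced by $\C$/$\Single$ only yields $x$ itself via $\Repl{1},\Repl{2}$, contributing no descent in degree but also no new label; and the neighbourhoods produced by $\T$, $\W$, $\Unif{1}$, $\Unif{2}$ contain \emph{already-existing} worlds ($x$ itself for $\T$, $\W$; $z$ for the uniformity rules), so they introduce no fresh world at all and cannot lengthen a branch of $\grafo$. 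Hence every branch of $\grafo$ still has length $O(\size{A_0})$.

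The main obstacle I expect is the $\emp$/$\N$ loop together with its interaction with the uniformity rules: one must verify carefully that the neighbourhoods created by $\T$, $\Unif{1}$, $\Unif{2}$ genuinely already contain an element (so that restriction~3 never re-triggers $\emp$ on them), and that the bookkeeping for how many $\fe$-/$\fu$-formulas can appear — which bounds the number of permitted $\emp$ applications — is sound in the presence of the extra rules. Once the case analysis confirms that every newly introduced neighbourhood is either (a) the unique $\N$-neighbourhood of a given world, handled once by $\emp$, or (b) already populated, the argument of Theorem~\ref{theorem:termination} goes through with the added bookkeeping, and the combination of proof strategies for calculi with several of these rules is immediate since the relevant bounds are independent.
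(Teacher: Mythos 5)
Your proposal is correct and follows essentially the same route as the paper's proof: it reuses the tree-of-labels argument from Theorem~\ref{theorem:termination} via Lemma~\ref{lemma:graph_labels}, and then checks case by case that each extension rule either introduces no new labels, introduces at most boundedly many labels keyed to already-present formulas (controlled by the saturation conditions, in particular for $\Unif{1}$, $\Unif{2}$ and the replacement rules), or — in the one delicate case of $\emp$ — is tamed by clause~3 of the strategy together with the bound on the number of local forcing formulas. The only difference is that the paper additionally carries out the explicit quantitative bookkeeping (e.g.\ $O(n+3)$ and $O(2n)$ neighbourhood labels per world, yielding $O(2n\cdot n\cdot n!)$ successors for calculi with centering or uniformity), which your argument asserts qualitatively but does not compute; this does not affect the correctness of the termination claim itself.
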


\begin{proof}
	As in the proof of Theorem \ref{theorem:termination}, we need to check that the tree of labels $ \grafo $ associated to an arbitrary derivation branch is finite:
	\begin{enumerate}
		\item Each branch of $ \grafo $ has a finite length;
		\item Each node of $ \grafo $ has a finite number of immediate successors.
	\end{enumerate}
	As for 1, the proof remains the same as in  Theorem \ref{theorem:termination}. 
	Rule $ \emp $ introduces a new world label but, as we have seen, applications of this rule are restricted: the rule can be applied only if afterwards some rule of local forcing can be applied to the new world label. Since the number of local forcing formulas occurring in a derivation branch is bounded by the size of formula $ A_0 $, the  length of a branch in $ \grafo $ starting from a world label $ y $ is still bounded by $ O(n) $, for $ n= \size{A_0} $. 
	Rules $ \T $, $ \W $ and $ \Single $ introduce in derivation branch a world label $ x $ generated by $ x $ itself. By definition we required that $ x \neq y $ in order to have $ x \ww y $; and thus, the rules do not introduce a new node in the tree of labels. Rule $ \C $ does not introduce a new world label in the derivation.
	Replacement rules are applied only to atomic formulas, and operate exclusively on world labels which are already present in the derivation. Similarly, rules $ \Unif{1}$ and $ \Unif{2} $ do not introduce new world labels in the derivation; thus, they do not affect the length of a branch in $ \grafo $.

	\
	
	The proof of 2 remains basically the same as in Theorem \ref{theorem:termination}, meaning that the count of the number of new neighbourhood labels generated by one neighbourhood label and $ n $ formulas $ y:A_1>B_1\dots, y:A_n >B_n $ is still bounded by $ O(n^2 \cdot n!) $, for $ n = \size{A_0} $. 
	However, the number of neighbourhood labels generated from a world label increases, due to the presence of rules for extensions. Each rule $ \N $, $ \T $, $ \C $, if applicable, adds at most one neighbourhood label $ a \in N(y) $ or $ \{y\} \in N(y) $ to a world label $ y $. The number of neighbourhood labels generated by a world labels is bounded by the size of the formula $ A_0 $ at the root; in logics with normality, total reflexivity and centering this number is at most $ O(n+3) $.
	
	As for the rules of replacement, given a world label $ y $ and a formula $ z \in \{y\} $, these rules rules may introduce in the derivation formulas $ a \in N(z) $ or $ a \in N(y)  $. Thus, the number of world labels introduced from a world label $ y $ is bounded the size of formula $ A_0 $, and thus by $ O(n) $ (as before), to which we have to add the number of applications of replacement rules introducing relational atoms $ a \in N(y) $. Since replacement rules can be applied at most once to each $ z \in \{y\} $ and $ a \in N(z) $, the total number of neighbourhood labels is bounded by $ O(2n) $.  
	
	A similar reasoning holds for $ \Unif{1} $ and $ \Unif{2} $. The saturation conditions for uniformity prevent the application of both $ \Unif{1} $ and $ \Unif{2} $ to formulas $ c \in N(x) $ (or $ c \in N(y) $), and $ z \in c $ if the neighbourhood label has been generated by the rules of uniformity. Thus, only one rule of uniformity ($ \Unif{1} $ or $ \Unif{2} $) can be applied out of every 4 relational atoms $ a \in N(x)$, $y \in a$, $b\in N(y) $ (or $ b \in N(x) $) and $ z \in b $. Moreover, the rule can be applied at most once to these labels. We can thus estimate the maximal number of neighbourhood labels introduced by a world label to be bounded by $ O(2n) $.

	Following the same reasoning as for $ \lab{CL} $, we have that the maximal number of world labels generated from a world label for calculi without centering or uniformity is given by $ O(n^2 \cdot n! )$, while  for calculi with centering or uniformity the bound is $ O(2n \cdot n \cdot n! )$.
\end{proof}

To conclude, the maximal length of each branch in $ \grafo $ combined with the maximal number of nodes generated from a node yields the following maximal bounds for world labels introduced in a derivation branch:  
 $ O(n^3 \cdot n!) $ in case of calculi without centering or uniformity, and 
 $ O(2n \cdot n^2 \cdot n!) $ for calculi with centering or uniformity, always taking $ n = \size{A_0} $. In both cases, the decision procedure associated to the logic is $ \mbox{NEXPTIME} $. 

\section{Semantic completeness }
\label{sec:semantic_compl}

Completeness of a sequent calculus can be proved either with respect to the axiom system or with respect to the class of models for the logic. Theorem \ref{theorme:completeness_synth}, along with Theorem \ref{cut_3} of cut-admissibility ensures the completeness of all calculi with respect to the axiomatization of the corresponding logics. In this section we prove the semantic completeness of the calculi: we show that if a formula is valid in the class of neighbourhood models for a given logic, then it is derivable in the sequent corresponding calculus. As usual we prove the counterpositive statement: if a formula is not derivable in the sequent calculus, we can construct a countermodel (in the intended class). The model wile be extracted  from a saturated upper sequent.

Since the proof requires to build a countermodel from a saturated sequent, termination of the calculi is needed. For this reason, we prove semantic completeness of all the systems, except for those with the condition of absoluteness.

\subsection{Completeness for $  \lab{CL}  $}

\begin{theorem}\label{theorem:semantic_completeness}
	Let $ \Gamma \Rightarrow \Delta $ be a saturated upper sequent in a derivation in  $  \lab{CL} $. There exists a \textit{finite} countermodel $ \mathcal{M} $ that satisfies all formulas in $ \downarrow \Gamma $ and falsifies all formulas in $ \downarrow \Delta $.
\end{theorem}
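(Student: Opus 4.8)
The plan is to build the countermodel directly from the saturated sequent $\Gamma \Rightarrow \Delta$, using the world labels occurring in $\downarrow\Gamma \cup \downarrow\Delta$ as the worlds, the neighbourhood labels as neighbourhoods, and reading off the neighbourhood function and valuation from the relational atoms in $\downarrow\Gamma$. Concretely, I would set $W = \{x \mid x \text{ occurs in } \downarrow\Gamma \cup \downarrow\Delta\}$, and for each neighbourhood label $a$ put $\sigma(a) = \{x \in W \mid x \in a \text{ occurs in } \downarrow\Gamma\}$. Then $N(x) = \{\sigma(a) \mid a \in N(x) \text{ occurs in } \downarrow\Gamma\}$, and $\llbracket p \rrbracket = \{x \mid x:p \in \downarrow\Gamma\}$. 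The first thing to check is that this is a legitimate neighbourhood model: non-emptiness of each $\sigma(a)$ must be argued. Here the saturation condition for $\Lfe$ (or for the dynamic rules that introduce neighbourhood labels) guarantees that whenever $a \in N(x)$ appears, some $y \in a$ appears too — this is exactly why the remark after Definition of the calculus insists the rules always introduce non-empty neighbourhoods. One subtlety: distinct labels $a, b$ may be mapped to the same set $\sigma(a) = \sigma(b)$; this is harmless for the model but needs a remark when transferring forcing facts.

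**Next I would prove the key "truth lemma"** by simultaneous induction on the weight of formulas (using the weight measure from Definition of weight), establishing for every labelled formula $\mathcal{F}$: if $\mathcal{F} \in \downarrow\Gamma$ then $\mathcal{M} \vDash \mathcal{F}$, and if $\mathcal{F} \in \downarrow\Delta$ then $\mathcal{M} \nvDash \mathcal{F}$. The atomic case $x:p$ is immediate from the definition of $\llbracket p \rrbracket$ together with the saturation condition forbidding $x:p$ in both $\downarrow\Gamma$ and $\downarrow\Delta$; the case $x:\bot$ uses that $x:\bot \notin \downarrow\Gamma$. Boolean cases are routine from the propositional saturation conditions. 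The cases for $a \fu A$, $a \fe A$ use the saturation conditions for $\Lfu, \Rfu, \Lfe, \Rfe$: e.g. if $a \fu A \in \downarrow\Gamma$ then for every $x$ with $x \in a \in \downarrow\Gamma$ we have $x:A \in \downarrow\Gamma$ (by the $\Lfu$-condition), so by IH $x \Vdash A$, giving $\sigma(a) \fu A$; conversely if $a \fu A \in \downarrow\Delta$ the $\Rfu$-condition provides a witness $x \in a$ with $x:A \in \downarrow\Delta$. The crucial cases are $x \barra{a} A|B$ and $x:A>B$: for $x \barra{a} A|B \in \downarrow\Gamma$, the $\Lbar$-condition yields a $c$ with $c \in N(x)$, $c \subseteq a$, $c \fe A$, $c \fu A\to B$ all in $\downarrow\Gamma$, and then $\sigma(c)$ is the required sub-neighbourhood; for $x \barra{a} A|B \in \downarrow\Delta$ the $\Rbar$-condition says every such $c$ fails one of the conjuncts. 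For $x:A>B \in \downarrow\Gamma$, one uses the $\LconT$ saturation condition (recall $\Lcon$ was replaced by $\LconT$): for each $a \in N(x) \in \downarrow\Gamma$, either $a \fe A \in \downarrow\Delta$ (so $\sigma(a) \nfe A$, vacuous case) or $x \barra{a} A|B \in \downarrow\Gamma$ (giving the witness by IH). For $x:A>B \in \downarrow\Delta$, the $\Rcon$-condition supplies an $a$ with $a \in N(x)$, $a \fe A \in \downarrow\Gamma$ and $x \barra{a} A|B \in \downarrow\Delta$, which by IH breaks the truth condition for $>$.

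**I also need to check the inclusion relational atoms are respected:** if $a \subseteq b \in \downarrow\Gamma$ then $\sigma(a) \subseteq \sigma(b)$, which follows from the $\Lc$ saturation condition ($x \in a$ and $a \subseteq b$ in $\downarrow\Gamma$ imply $x \in b$ in $\downarrow\Gamma$); reflexivity and transitivity of $\subseteq$ in the model are automatic once $\sigma$ is set-valued, but the $\Ref$ and $\Trans$ conditions ensure the bookkeeping is consistent. Finiteness of $\mathcal{M}$ is inherited from Theorem of termination: the saturated sequent sits at the top of a finite branch, so only finitely many world and neighbourhood labels occur. Finally I would conclude: a saturated upper sequent in an attempted derivation of $\Rightarrow x_0:A_0$ has $x_0:A_0 \in \downarrow\Delta$, so by the truth lemma $\mathcal{M}, x_0 \nVdash A_0$, i.e. $\mathcal{M}$ is a finite countermodel — which, combined with termination, yields the finite model property.

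**The main obstacle** I anticipate is the $x \barra{a} A|B$ / $x:A>B$ interplay, and in particular making sure the replacement of $\Lcon$ by $\LconT$ together with rule $\Mon$ does not break the truth lemma: one must verify that $\LconT$'s saturation condition (which mentions both $a \fe A$ in $\downarrow\Gamma$ and the case split into $\downarrow\Delta$) is strong enough to force the semantic truth condition, and that $\Mon$'s saturation condition is compatible with the chosen $\sigma$ (i.e. if $b \subseteq a$ and $a \fu A$ are in $\downarrow\Gamma$, then so is $b \fu A$, which is needed for coherence since $\sigma(b) \subseteq \sigma(a)$). A second delicate point is the potential collision $\sigma(a) = \sigma(b)$ for syntactically distinct labels: one must check that whenever the model-side membership $\sigma(a) \in N(x)$ holds, there is a syntactic witness $a' \in N(x) \in \downarrow\Gamma$ with $\sigma(a') = \sigma(a)$, so that the truth-lemma arguments transfer. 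I expect both to be handled by the saturation bookkeeping, but they require careful statement.
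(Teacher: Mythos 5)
Your proposal matches the paper's own proof essentially step for step: the same label-extracted model $W$, $\sigma(a)=\{x\mid x\in a\in\Gamma\}$, $N(x)$ and $\llbracket p\rrbracket$ read off the relational atoms, the same non-emptiness and $\subseteq$-coherence checks via the $\Lfe$/$\Rcon$/$\Lbar$ and $\Lc$ saturation conditions, and the same two claims proved by induction on formula weight with the identical case analysis for $\fu$, $\fe$, $\barra{a}$ and $>$. The extra cautions you raise (label collisions under $\sigma$, compatibility of $\LconT$ and $\Mon$) are reasonable refinements that the paper leaves implicit, but they do not change the argument.
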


\begin{proof}
	
	Since $ \Gamma \Rightarrow \Delta $ is saturated, it is the upper sequent of a branch $ \mathcal{B} $.  
	We construct a model $ \mathcal{M}_\mathcal{B} $ that satisfies all formulas in $ \downarrow \Gamma $ and falsifies all formulas in $ \downarrow \Delta $. The  countermodel contains the semantic informations encoded in the sequents of the derivation branch. Let 
	$$ 
	S_\mathcal{B} = \{ x \mid x \in (\downarrow \Gamma ~ \cup \downarrow \Delta) \} \qquad N_\mathcal{B} = \{ a \mid  a \in (\downarrow \Gamma~ \cup \downarrow \Delta) \} 
	$$
	Then, we associate to each $ a \in  N_\mathcal{B}  $ a neighbourhood as follows:
	$$
	\alpha_a = \{ y \in  S_\mathcal{B} \mid  y \in a \ \mathit{ belongs \ to } \ \Gamma \} 
	$$
	Thus, for each neighbourhood $ a $, $ \alpha_a \subseteq  S_\mathcal{B} $. We construct the neighbourhood model $ \mathcal{M}_\mathcal{B} = \langle W_\mathcal{B}, N_\mathcal{B}, \llbracket \ \rrbracket_\mathcal{B} \rangle$ as follows.
	\begin{itemize}[noitemsep]
		\item $ W_\mathcal{B}= S_\mathcal{B} $
		\item For any $ x \in W_\mathcal{B} $, $ N_\mathcal{B}(x)= \{ \alpha_a \ | \ a \in N(x) \ \mathit{belongs \ to} \ \downarrow \Gamma \} $
		\item For $ p $ atomic, $  \llbracket p\, \rrbracket_\mathcal{B} = \{ x \in W_\mathcal{B} \ | \ x:p \ \mathit{belongs \ to} \ \downarrow \Gamma  \}$ 
	\end{itemize}
	
	\noindent We now show that $\mathcal{M}_{\mathcal{B}} = \langle W_{\mathcal{B}}, N_{\mathcal{B}}, \llbracket \ \rrbracket_{\mathcal{B}} \rangle $ satisfies the property of non-emptiness for $ \mathbb{PCL} $ neighbourhood models:  we have to verify that every $ \alpha_a \in N(x)$ contains at least one element. 
	If $ a \in N(x) $ occurs in the sequent, it must have been introduced either by $ \Rcon $ or $ \Lbar $. If the neighbourhood label is not blocked, by the saturation conditions associated to both rules it holds that $ a\fe C $ occurs in $ \downarrow \Gamma $. Thus, by the saturation condition $ (\Rfe) $,  formula $ y\in a $ occurs in $ \Gamma $.
	

	\noindent Moreover, the model $ \mathcal{M}_\mathcal{B} $ satisfies the following property:
	\begin{quote}
		$(*)$	If $ a \subseteq b $ belongs to $ \Gamma $, then $ \alpha_{a} \subseteq \alpha_{b} $
	\end{quote}
	To verify $ (*) $, suppose $ y \in \alpha_{a} $. This means that $ y \in a $ belongs to $ \Gamma $; then, by the saturation condition $ \Lc$, also $ y \in b $ belongs to $ \Gamma $. By definition of the model we have $ y\in \alpha_{b}$, and thus that $ \alpha_{a}  \subseteq \alpha_{b} $.

	\
	
	\noindent Next, define a realization $ (\rho, \sigma) $ such that $ \rho(x) = x $ and $ \sigma(a)= \alpha_{a} $ and prove the following claims:
	\begin{itemize}[noitemsep]
		\item[] \textbf{[Claim 1]} If $ \mathcal{F} $ is in $ \downarrow \Gamma $, then $ \mathcal{M}_{\mathcal{B}} \vDash_{\rho, \sigma} \mathcal{F}$;
		\item[] \textbf{[Claim 2]} If $ \mathcal{F} $ is in $ \downarrow \Delta $, then $ \mathcal{M}_{\mathcal{B}} \nvDash_{\rho, \sigma} \mathcal{F}$;
	\end{itemize} 
	\noindent where $ \mathcal{F} $ denotes a labelled formula, i.e., $ \mathcal{F}$ is  $a \in N(x)$, $x\in a$, $a \subseteq b$, $x \Vdash^{\forall} A$, $x \Vdash^{\exists} A$, $x \Vdash_a A|B$, $x:A$,  $x: A>B   $. The two claims are proved by cases, by induction on the weight of the formula $ \mathcal{F} $.

	\textbf{[a]} If $ \mathcal{F} $ is a formula of the form $ a \in N(x) $, $ x \in a $ or $ a \subseteq b $, Claim 1 holds by definition of $ \mathcal{M}_{\mathcal{B}} $, and Claim 2 is empty. For the case of $ a \subseteq b $, employ the fact $(*)$ above.
	
	\textbf{[b]} If $ A $ is a labelled atomic formula $ x:p $, the claims hold by definition of the model; by the saturation condition associated to $ \init $ no inconsistencies arise. If $ A \equiv \bot $, the formula is not forced in any model and Claim 2  holds, while Claim 1 holds by the saturation clause associated to $ \Lbot $. If $ A $ is a conjunction, disjunction or implication, both claims hold for the corresponding saturation conditions and by inductive hypothesis on formulas on smaller weight.
	
	\textbf{[c]} If $ A \equiv a \Vdash ^{\exists} A $ is in $ \downarrow \Gamma $, then by the saturation clause associated to $ \Lfe $ for some $ x $ there are $ x \in a $, $ x:A $ are in $ \downarrow \Gamma $. By definition of the model $ \mathcal{M}_{\mathcal{B}} $, for some $ x $, $ x \in \alpha_{a} $. Then, since $ w(x:A) < w(a \Vdash^{\exists} A) $, apply the inductive hypothesis and obtain $\mathcal{M}_{\mathcal{B}} \vDash x:A $. Therefore, by definition of satisfiability, $ \mathcal{M}_{\mathcal{B}} \vDash  \alpha_{a} \Vdash ^{\exists} A $.
	
	\noindent If $  a \Vdash ^{\exists} A $ is in $ \downarrow \Delta $, then it is also in $ \Delta $. Consider an arbitrary world $ x $ in $ \alpha_{a} $. By definition of $\mathcal{M}_{\mathcal{B}}$ we have that $ x \in a $ is in $ \Gamma $; apply the saturation condition associated to $ \Rfu $ and obtain that $ x:A $ is in $ \downarrow \Delta $. By inductive hypothesis, $ \mathcal{M}_{\mathcal{B}} \nvDash x:A $; thus, since this line of reasoning holds for arbitrary $ x $, we can conclude by definition of satisfiability that $ \mathcal{M}_{\mathcal{B}} \nvDash \alpha_{a} \Vdash ^{\exists} A  $. The case in which $ A \equiv a \Vdash ^{\forall} A $ is similar.
	
	\textbf{[d]} If $ x \Vdash_{a} A|B $ is in $ \downarrow \Gamma $, then by the saturation condition associated to $ \Lbar $ for some $ c $ it holds that $ c \in N(x) $ and $ c \subseteq a $ are in $ \Gamma $, and $ a \Vdash ^{\exists} A $, $ a \Vdash^{\forall} A\rightarrow B$ are in $ \downarrow \Gamma $. By definition of the model, $ \alpha_c \subseteq \alpha_a $, and by inductive hypothesis $ \mathcal{M}_{\mathcal{B}} \vDash  \alpha_{c} \Vdash ^{\exists} A $ and $ \mathcal{M}_{\mathcal{B}} \vDash  \alpha_{c} \Vdash ^{\forall} A\rightarrow B  $. By definition, this yields $ \mathcal{M}_{\mathcal{B}} \vDash x \Vdash_{a}A|B $.

	\noindent If $ x \Vdash_{a} A|B $ is in $ \downarrow \Delta $, consider a neighbourhood $ \alpha_{c} \subseteq \alpha_a $ in $ N(x) $. Then by definition of $\mathcal{M}_{\mathcal{B}}$ we have that $ c \in N(x) $ and $ c \subseteq a $ are in $ \Gamma $; apply the saturation condition associated to $ \Rbar $ and obtain that either $ c \Vdash^{\exists} A $ or $ c \Vdash^{\forall} A \rightarrow B $ is in $ \downarrow \Delta $. By inductive hypothesis,  either $ \mathcal{M} \nvDash  \alpha_{c} \Vdash^{\exists} A  $ or $ \mathcal{M}_{\mathcal{B}} \nvDash \alpha_{c} \Vdash^{\forall} A \rightarrow B  $. In both cases, by definition $  \mathcal{M}_{\mathcal{B}} \nvDash x \Vdash_{a} A|B $.
	
	\textbf{[e]} If $ x: A>B $ is in $ \downarrow \Gamma $, then it is also in $ \Gamma $. Consider an arbitrary neighbourhood $ \alpha_{a} $ in $ N(x) $. By definition of $\mathcal{M}_{\mathcal{B}}$ we have that $ a \in N(x) $ is in $ \Gamma $; apply the saturation condition associated to $ \Lcon' $ and conclude that either $ a \Vdash ^{\exists} A $ is in $ \downarrow \Delta $, or $ x \Vdash_{a} A|B $ is in $ \downarrow \Gamma $. By inductive hypothesis, it holds that either $ \mathcal{M}_{\mathcal{B}} \nvDash \alpha_{a} \Vdash ^{\exists} A $ or $ \mathcal{M}_{\mathcal{B}} \vDash  x \Vdash_{a} A|B  $. In both cases, by definition $ \mathcal{M}_{\mathcal{B}} \vDash x :A>B $.
	
	\noindent If $ x: A>B $ is in $ \downarrow \Delta $, by the saturation condition associated to $ \Rcon $, for some $ a $ it holds that $ a\in N(x) $ is in $ \Gamma $, $ a \Vdash^{\exists} A $ is in $ \downarrow \Gamma $ and $ x \Vdash_{a} A|B $ is in $ \downarrow \Delta $. By inductive hypothesis, $ \mathcal{M}_{\mathcal{B}} \vDash \alpha_{a} \Vdash^{\exists} A  $ and $  \mathcal{M}_{\mathcal{B}} \nvDash x \Vdash_{a} A|B  $, thus, by definition, we have $ \mathcal{M}_{\mathcal{B}} \nvDash x: A>B $.
	\end{proof}

\noindent Theorem \ref{theorem:semantic_completeness} together with the soundness of $ \lab{CL} $ provides a constructive proof of the finite model property for the logics: if $A$ is satisfiable in a model (meaning that $ \lnot A $ is not valid), by   soundness of the calculi $ \lnot A $ is not provable. Thus by Theorem \ref{theorem:semantic_completeness} we build a finite countermodel of $ \lnot A $, that is a finite model in which $ A $ is satisfiable.
The same holds for the calculi for extensions of $ \PCL $, once their semantic completeness has been proved.

\subsection{Semantic completeness for extensions}

Semantic completeness for sequent calculi with normality, total reflexivity, weak centering and uniformity can be proved similarly as for $ \lab{CL} $. Extensions of the calculi with rules for centering require a modification on the countermodel construction, to account for singleton neighbourhoods.

\begin{theorem} 
	If $ A $ is valid in $ \PCL $ combined with normality, total reflexivity weak centring and uniformity, then sequent $ \Rightarrow x:A $ is derivable in $  \lab{CL} $ combined with the corresponding rules for normality, total reflexivity, weak centering and uniformity.
\end{theorem}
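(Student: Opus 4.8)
The plan is to argue the contrapositive, following the proof of Theorem~\ref{theorem:semantic_completeness} almost verbatim. Suppose $\Rightarrow x_0:A$ is not derivable in the calculus for the logic under consideration. By the termination result Theorem~\ref{theorem:termination_ext}, root-first proof search under the strategy yields a finite branch $\mathcal{B}$ whose topmost sequent $\Gamma\Rightarrow\Delta$ is saturated; from $\mathcal{B}$ we build the finite model $\mathcal{M}_\mathcal{B}=\langle W_\mathcal{B},N_\mathcal{B},\llbracket\ \rrbracket_\mathcal{B}\rangle$ exactly as there, reading $W_\mathcal{B}$, the sets $\alpha_a$, and $N_\mathcal{B}(x)=\{\alpha_a\mid a\in N(x)\text{ in }\downarrow\Gamma\}$ off the relational atoms of $\downarrow\Gamma$. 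The property $(*)$ (if $a\subseteq b$ is in $\Gamma$ then $\alpha_a\subseteq\alpha_b$) and the two Claims of Theorem~\ref{theorem:semantic_completeness} --- every formula of $\downarrow\Gamma$ is satisfied, every formula of $\downarrow\Delta$ is falsified, by induction on the weight of the formula --- carry over unchanged, because none of the additional rules $\N,\emp,\T,\W,\Unif{1},\Unif{2}$ is a logical rule: the inductive cases for the connectives and for $>$, $|$, $\fe$, $\fu$ invoke only their own saturation conditions, which are exactly as before. Thus the only new obligation is to verify that $\mathcal{M}_\mathcal{B}$ lies in the intended class of models; granting this, the second Claim applied to $x_0:A$ gives $\mathcal{M}_\mathcal{B}\nvDash x_0:A$, contradicting the validity of $A$ in that class.

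To discharge that obligation I would check the frame conditions one rule at a time against the saturation conditions of Figure~\ref{fig:saturation_cond_ext}. Total reflexivity: the saturation condition for $\T$ gives, for each world label $x$, a label $a$ with $a\in N(x)$ and $x\in a$ in $\Gamma$, so $x\in\alpha_a\in N_\mathcal{B}(x)$. Weak centering: the saturation condition for $\W$ gives $x\in\alpha_a$ for every $\alpha_a\in N_\mathcal{B}(x)$, which in particular makes all such neighbourhoods non-empty. Normality: the saturation condition for $\N$ gives $N_\mathcal{B}(x)\neq\emptyset$ for every $x$, and non-emptiness of each $\alpha_a$ is obtained as in Theorem~\ref{theorem:semantic_completeness} for neighbourhoods coming from $\Rcon$, $\Lbar$, $\T$ or $\Unif{i}$ (all of which introduce the neighbourhood together with a forcing or membership atom), and for a neighbourhood introduced by $\N$ via the restricted rule $\emp$ and its saturation condition, clause~3 of the strategy being exactly what makes this application available whenever the neighbourhood is referred to by a forcing atom; a neighbourhood introduced by $\N$ and mentioned by no forcing atom may be made non-empty by adjoining an auxiliary world, which is harmless since no formula of the branch refers to it.

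The main obstacle is uniformity, because its semantic formulation uses the operator $\bigcup$, which the syntax avoids, so one must pass through the decomposed rules. Concretely, $y\in\bigcup N_\mathcal{B}(x)$ unfolds to: the atoms $a\in N(x)$ and $y\in a$ both occur in $\Gamma$; if moreover $z\in\bigcup N_\mathcal{B}(y)$, i.e. $b\in N(y)$ and $z\in b$ occur in $\Gamma$, then the saturation condition for $\Unif{1}$ produces a label $c$ with $c\in N(x)$ and $z\in c$ in $\Gamma$, whence $z\in\alpha_c\in N_\mathcal{B}(x)$, that is $z\in\bigcup N_\mathcal{B}(x)$; symmetrically the saturation condition for $\Unif{2}$ transfers membership from $\bigcup N_\mathcal{B}(x)$ to $\bigcup N_\mathcal{B}(y)$. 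Putting the two inclusions together --- and using Lemma~\ref{lemma:graph_labels} to know that the labels involved genuinely occur on the branch --- yields $\bigcup N_\mathcal{B}(x)=\bigcup N_\mathcal{B}(y)$, i.e. local uniformity. For a calculus carrying several of these rules it suffices to observe that their saturation conditions all hold on the same saturated sequent, which is immediate since Theorem~\ref{theorem:termination_ext} already delivers a branch on which all of them are met; combining the individual arguments then gives membership of $\mathcal{M}_\mathcal{B}$ in the corresponding class and completes the proof.
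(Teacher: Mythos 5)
Your proposal is correct and follows essentially the same route as the paper: reuse the countermodel construction and the two Claims of Theorem \ref{theorem:semantic_completeness} verbatim, then verify each frame condition against the corresponding saturation condition, including the auxiliary-world adjustment for a neighbourhood introduced by $\N$ that is mentioned by no forcing formula, and the two-inclusion argument for uniformity via the saturation conditions for $\Unif{1}$ and $\Unif{2}$. The only cosmetic difference is that the paper folds the auxiliary world into an explicit case distinction in the definition of $W_\mathcal{B}$ and $\alpha_a$, whereas you add it after the fact; the content is the same.
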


\begin{proof} 
	The proof proceeds as the one of Theorem \ref{theorem:semantic_completeness}. For the case of normality, a clause is added in the countermodel construction; however, Claim 1 and 2 continue to hold in the model. For the remaining cases, the countermodel construction does not change, and it only remains to verify that the countermodel $ \mathcal{M}_\mathcal{B} $ satisfies the properties of normality, total reflexivity, weak centering and uniformity, provided that the corresponding rules and saturation conditions are added to the calculus. 

\textit{Normality}: To construct a countermodel for logics featuring \textit{only} normality, the following case distinction applies, for $ Q = \forall, \exists $: 
 we need to add a clause to the definition of $ W_\mathcal{B} $, $ \alpha_a $ and for $ Q = \forall, \exists $:
\begin{itemize}[noitemsep]
	\item If $ a \in N(x) $ occurs in $ \Gamma $, and there are some formulas $ a \Vdash^Q  A $ in $ \downarrow \Gamma~ \cup \downarrow \Delta $, the countermodel $\mathcal{M}_{\mathcal{B}}$ is defined as in the case of $ \PCL $;
	\item If  $ a \in N(x) $ occurs in $ \Gamma $, but no formulas $ a \Vdash^Q  A $ occur in $ \downarrow \Gamma~ \cup \downarrow \Delta$, we set: $ W_\mathcal{B}= S_\mathcal{B} \cup \{u\}  $, for some variable $ u $ not occurring in $ \Gamma$; $ \alpha_a = \{u \}$ and $N_\mathcal{B}(u) = \{\{u\} \}  $.
\end{itemize} 
The model satisfies the condition of normality: according to the saturation condition $ \N $, for every $ x $ occurring in $  \downarrow \Gamma $, there is $ a $ such that $ a\in N(x) $ occurs in $ \Gamma $. By definition of $ \mathcal{M}_\mathcal{B} $, $ \alpha_a \in N_\mathcal{B}(x) $. 
Moreover, we have to verify that non-emptiness of the model holds also for the neighbourhood $ \alpha_a $ introduced by the rule. If there are some formulas $ a \Vdash^Q A $ occurring in $ \downarrow \Gamma \cup \downarrow \Delta $, the saturation condition associated to either $ \emp $, $ \Lfe $ or $ \Rfu $ ensures that there is at least one formula $ y \in a $ in $ \Gamma $. If there are no such formulas, the application of $ \N $ is not relevant to the derivation; following the definition, we introduce an arbitrary world $ u $ to be placed in the neighbourhood\footnote{There is no need to verify non-emptiness for stronger conditions of total reflexivity and weak centering, since the rules added to the calculus add a world belonging to the neighbourhood introduced.}.

\textit{Total reflexivity}: According to the saturation condition $ \T $, for every $ x $ occurring in $ \downarrow \Gamma\, \cup \downarrow \Delta $ also $ a \in N(x) $, $ x \in a $ occur in $ \Gamma $. By definition of $\mathcal{M}_{\mathcal{B}} $, this means that $\alpha_{a} \in N(x) $ and $ x \in \alpha_{a} $, and total reflexivity holds.

\textit{Weak centering}: Suppose $ \alpha_a \in N(x) $. We want to show that $ x \in \alpha_a $. By definition, if $ \alpha_a \in N(x) $ then $ a \in N(x) $ occurs in $ \Gamma $. By the saturation condition associated to $ \W $, it holds that also $ x \in a $ occurs in $ \Gamma $; thus, by definition of the model $ x \in \alpha_a $.

\textit{Uniformity}: Suppose $ y \in \bigcup N(x) $, which means that $ y \in \alpha_a $ and $ \alpha_a \in N(x) $. By definition, $ a \in N(x) $ and $ y \in a $ occur in $ \Gamma $. We have to show that $ \bigcup N(x) = \bigcup N(y) $, that is:
\begin{quote}
	$ z \in \bigcup N(x) $ iff $ z \in \bigcup N(y) $
\end{quote}
Assume $ z \in \bigcup N(x) $. This means that $ z \in \alpha_b $ and $ b \in N(x) $ and, by definition, $ z \in b $ and $ b \in N(x) $ occur in $ \Gamma $. By the saturation condition associated to $ \Unif{2} $, we have that for some $ c $, $ c \in N(y) $ and $ z \in c $ occur in $ \Gamma $. Thus, $ z \in \alpha_c $ and $ \alpha_c \in N(y) $, meaning that $ z \in \bigcup N(y) $. The saturation condition associated to $ \Unif{1} $ is needed to prove the other direction. 
\end{proof}

\begin{theorem} 
	If $ A $ is valid in $ \PCL $ extended with centering, then sequent $ \Rightarrow x:A $ is derivable in $  \lab{CL^C} $.
\end{theorem}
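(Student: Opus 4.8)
The plan is to adapt the proof of Theorem~\ref{theorem:semantic_completeness}, building a finite countermodel from a saturated upper sequent, with the one modification that the neighbourhood attached to a singleton label $\{x\}$ must become a genuine singleton, as the centering condition demands. I would argue by contraposition. Suppose $\Rightarrow x:A$ is not derivable in $\lab{CL^C}$; by Theorem~\ref{theorem:termination_ext}, root-first proof search in accordance with the strategy yields a finite branch $\mathcal{B}$ whose upper sequent $\Gamma\Rightarrow\Delta$ is saturated with respect to all the conditions of Figures~\ref{fig:saturation_conditions_PCL} and~\ref{fig:saturation_cond_ext} (in particular those for $\W$, $\C$, $\Single$, $\Repl{1}$ and $\Repl{2}$) and has $x:A\in\downarrow\Delta$. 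It then suffices to construct a finite model $\mathcal{M}_\mathcal{B}\in\modelex{C}$ and a realization $(\rho,\sigma)$ with $\mathcal{M}_\mathcal{B}\nvDash_{\rho,\sigma}x:A$.

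The first step is to quotient the world labels. Let $\sim$ be the least equivalence relation on $S_\mathcal{B}$ containing every pair $(x,y)$ such that $y\in\{x\}$ occurs in $\Gamma$ (note that $x\in\{x\}$ does occur in $\Gamma$, since the saturation conditions for $\N$ and $\C$ force $\{x\}\in N(x)$ into $\Gamma$ and then that for $\Single$ forces $x\in\{x\}$). Using the saturation conditions for $\Repl{1}$ and $\Repl{2}$ together with the lemma on generalized admissibility of the replacement rules from Section~5, I would check that $\sim$-equivalent labels carry exactly the same atomic labelled formulas $z:P$ and the same relational atoms $z\in a$, $a\in N(z)$, $z\in\{w\}$, so that all the data used below is $\sim$-invariant. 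Then set $W_\mathcal{B}=S_\mathcal{B}/{\sim}$ and $\rho(x)=[x]$; for each neighbourhood label $a$ put $\alpha_a=\{[y]\mid y\in a\text{ occurs in }\Gamma\}$, $\sigma(a)=\alpha_a$, and $N_\mathcal{B}([x])=\{\alpha_a\mid a\in N(x)\text{ occurs in }\Gamma\}$; and for $p$ atomic put $\llbracket p\rrbracket_\mathcal{B}=\{[x]\mid x:p\text{ occurs in }\Gamma\}$. Non-emptiness of each $\alpha_a$ is verified exactly as in Theorem~\ref{theorem:semantic_completeness}, and property $(*)$ --- $a\subseteq b$ in $\Gamma$ implies $\alpha_a\subseteq\alpha_b$ --- follows verbatim from the saturation condition for $\Lc$. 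The new point is that $\alpha_{\{x\}}=\{[x]\}$: it contains $[x]$ since $x\in\{x\}$ is in $\Gamma$, and if $[y]\in\alpha_{\{x\}}$ then some representative $y'$ of $[y]$ has $y'\in\{x\}$ in $\Gamma$, whence $y'\sim x$ and $[y]=[x]$.

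Next I would verify $\mathcal{M}_\mathcal{B}\in\modelex{C}$. Condition $(a)$ of centering, that every $\alpha\in N_\mathcal{B}([x])$ contains $[x]$, follows from the saturation condition for $\W$ exactly as in the weak-centering case; condition $(b)$, that $\{[x]\}\in N_\mathcal{B}([x])$, holds because the saturation conditions for $\N$ and $\C$ put $\{x\}\in N(x)$ into $\Gamma$, so $\alpha_{\{x\}}\in N_\mathcal{B}([x])$, and $\alpha_{\{x\}}=\{[x]\}$ by the observation just made. Finally, Claims~1 and~2 --- every $\mathcal{F}$ in $\downarrow\Gamma$ is satisfied and every $\mathcal{F}$ in $\downarrow\Delta$ is falsified under $(\rho,\sigma)$ --- go through by the same induction on the weight of $\mathcal{F}$ as in Theorem~\ref{theorem:semantic_completeness}, read now over equivalence classes, the only extra ingredient being the $\sim$-invariance of the atomic data, which makes the base cases well posed. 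This gives $\mathcal{M}_\mathcal{B}\nvDash_{\rho,\sigma}x:A$, so $A$ is not valid in $\modelex{C}$, contradicting the hypothesis; hence $\Rightarrow x:A$ is derivable in $\lab{CL^C}$.

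I expect the main obstacle to be the well-definedness of the quotiented model --- one must be sure that no labelled formula, not just the atomic ones, separates two $\sim$-equivalent labels, so that Claims~1 and~2 stay jointly consistent after the identification. This is exactly where the saturation conditions for $\Repl{1}$, $\Repl{2}$ and the generalized admissibility of these rules are needed: were a separating formula present, the branch could be closed by a generalized replacement step followed by a generalized initial sequent (Lemma~\ref{generalized_initial_sequents}), contradicting that $\mathcal{B}$ is a saturated, non-closed branch. Once this is secured, the rest is a routine adaptation of the constructions already carried out for $\lab{CL}$ and for the weak-centering extension.
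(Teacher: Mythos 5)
Your proposal is correct and follows essentially the same route as the paper: quotienting the world labels by the singleton-membership relation (which the paper shows directly to be an equivalence relation using the saturation conditions for $\N$, $\C$, $\Single$, $\Repl{1}$ and $\Repl{2}$), verifying representative-independence and the centering conditions via those same saturation conditions together with the generalized admissibility of the replacement rules, and then rerunning Claims 1 and 2 over equivalence classes. The only difference is cosmetic — you make the neighbourhoods sets of equivalence classes outright, where the paper keeps $\alpha_a$ as a set of labels and adds $[x]\subseteq\alpha_a$ by fiat.
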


\begin{proof}
	In this case, worlds of the countermodel are not defined as the set $ S_\mathcal{B} $ of labels occurring in the branch, but as \textit{equivalence classes} $ [x] $ with respect to the relation $ y \in \{x\} $, which we will show to be an equivalence relation. Then, we require $ [x] $ to be contained in any neighbourhood of $N(x)$. For $ S_\mathcal{B} $, $ N_\mathcal{B} $ and $ \alpha_a $ as defined before, let 
\begin{quote}
	$ [x] = \{ y \in S_\mathcal{B} \mid y \in \{x\} \ \mathit{ occurs \ in }  \ \Gamma\} $;\\
	$ [x] \subseteq \alpha_a $, for $ a \in N(x)$ occurring in $ \Gamma$.
\end{quote}
We construct a model $ \mathcal{M}^c_\mathcal{B} = \langle W^c, N^c, \llbracket \ \rrbracket^c
\rangle $ as follows:
\begin{itemize}[noitemsep]
	\item $ W^c = \{ [ x] \mid  x \in S_\mathcal{B} \} $;
	\item for each $[ x ] \in W^c $, $ N^c( [x ])= \{ \alpha_a \mid a \in N(x) \ \mathit{belongs \ to} \ \downarrow \Gamma  \} $;
	\item for $ p $ atomic, $  \llbracket p\, \rrbracket^c = \{ [x] \in W^c \mid  x :p \ \mathit{belongs \ to} \ \downarrow \Gamma  \}$.
	
\end{itemize}
We first prove that $ y \in \{x\} $ is an equivalence relation. The relation is \textit{reflexive}: for each $ x $ occurring in $  \Gamma $, $ x \in \{x\} $ occurs in $ \Gamma $. This holds from the saturation conditions associated to $ \N $, $ \C $ and  $ \Single $. To prove that the relation is \textit{symmetric}, we have to show that if $ y \in \{x\} $ occurs in $ \Gamma $, then also $ x \in \{y\} $ occurs in $ \Gamma $. By reflexivity, we have that $ y \in \{y\} $. Thus, by the saturation condition associated to $ \Repl{2} $, we have that also $ x \in \{y\} $ belongs to $ \Gamma $. To prove the converse, use saturation condition associated to $ \Repl{1} $.
To show that the relation is \textit{transitive} we have to prove that if $ y \in \{x\} $ and $ x \in \{z\} $ occur in $ \Gamma $, also $ y \in \{z\} $ occurs in $ \Gamma $. By saturation conditions $ \N $ and $ \C $ and $ \Single $, we have that also $ \{ z\} \in N(z) $ occurs in $ \Gamma $. By the saturation condition associated to $ \Repl{1}$ applied to $ x \in \{z\} $, also $ \{z\} \in N(x) $ occurs in the sequent; thus, by the saturation condition associated to $ \C $ we have that both formulas $ \{ x\} \subseteq \{z\}$ and $ \{x\}\in N(z) $ occur in $ \Gamma $. Finally, by the saturation condition associated to $ \Lc $, since $ y \in \{x\} $ and $ \{x\} \subseteq \{z\} $, we have that $ y \in \{z\} $ occurs in the sequent.

Next we need to show that the definitions of $ N^c( [x ])$ and $  \llbracket p\,  \rrbracket^c$ do not depend on the chosen representative of the equivalence class in question.
\begin{quote}
	i) if $y\in [ x ] $, then $a \in N(x)$ is in  $\Gamma$ if and only if $a \in N(y)$ is in  $\Gamma$;\\
	ii) if $y\in [ x ] $, then $x:p $ is in $ \Gamma $ if and only if $ y:p $ is in $ \Gamma $.
\end{quote}
Fact $ i) $ follows from the saturation conditions associated to $\Repl{1}$ and $\Repl{2}$, applied to on the formulas $a\in N(x)$ and $ a \in N(y) $. Fact $ ii) $ follows from application of the same saturation conditions to $ x:p $ and $ y:p $.

The model $ \mathcal{M}^c_\mathcal{B} $ satisfies the property of centering. Observe that in our model $ \{ x\} $ corresponds to $ [x] $: both are defined as the set containing exactly one element, $ x $. Suppose $ \alpha_a \in N(x) $; we have to show that $ \{x\} \subseteq \alpha_a $ and $\{x\} \in N(x)  $. 
By definition of the model we have that $ [x] \subseteq \alpha_a $, and from this and $ \alpha_a \in N(x) $ it follows that $ [x] \in N([x]) $; thus, strong centering holds.

\noindent The following facts are needed in the proof Claims 1 and 2 below.
\begin{quote}
	$1)$	If $ a \subseteq b $ belongs to $ \Gamma $, then $ \alpha_{a} \subseteq \alpha_{b} $;\\
	$ 2) $ if $ [x] \in \llbracket A \rrbracket$ and $ y \in [x] $, then $ [y] \in \llbracket A \rrbracket$; \\
	$ 3) $ If $ [x] \in \llbracket A \rrbracket $, then $ x:A $ belongs to $ \downarrow \Gamma $.
\end{quote}

Fact $ 1) $ is proved in the same way as $ (*) $; the proofs of $ 2) $ and $ 3) $ are immediate from admissibility of $ \Repl{1} $ and $\Repl{2}  $ in their generalized form.

Finally, we define define a realization $ (\rho, \sigma) $ such that $ \rho(x) = [x] $ and $ \sigma(a)= \alpha_{a} $, and prove that:
\begin{itemize}[noitemsep]
	\item[] \textbf{[Claim 1]} If $ \mathcal{F} $ is in $ \downarrow \Gamma $, then $ \mathcal{M}^c_{\mathcal{B}} \vDash_{\rho, \sigma} \mathcal{F}$;
	\item[] \textbf{[Claim 2]} If $ \mathcal{F} $ is in $ \downarrow \Delta $, then $ \mathcal{M}^c_{\mathcal{B}} \nvDash_{\rho, \sigma}\mathcal{F}$.
\end{itemize} 
\noindent Again, $ \mathcal{F} $ denotes the labelled formulas of the language, including $ y \in \{x\}$, $ \{x\} \in N(x) $, $\{ x \} \in a$. The two cases are proved by distinction of cases, and by induction on the height of the derivation.
If $ \mathcal{F} $ is a relational formula that does not contain any singleton, Claim 1 holds by definition of the model, and Claim 2 is empty as in case a) of proof of the previous models. Similarly, if $ \mathcal{F} $ is either $ y \in \{x\} $, $\{x\} \in N(x)$ or $ \{x\} \subseteq a $, Claim 1 is satisfied by definition. 

The cases b)- e) of the previous proof remain unchanged; condition  2) ensures that all the elements of an equivalence class of world labels satisfy the same sets of formulas. 
\end{proof}

\section{Related work and discussion}
In this paper we have studied Preferential Conditional Logic $ \PCL $ and its extensions. We have first provided a natural semantics for this class of logics in terms of neighbourhood models. 
Neighbourhood models generalise Lewis' sphere models for counterfactual logics. We have given  a \emph{direct} proof of soundness and completeness of $\PCL $ and its extensions with respect to this class of models,  with the exception of $ \PP\WW $ and its extensions with (U) and (A).  
We have then presented labelled sequent calculi for all logics of the family. The calculi are modular and have standard proof-theoretical properties, the most important one being cut admissibility, by which completeness of the calculi easily follows. 
We have then tackled  termination of the calculi, with the aim of obtaining a decision procedure for each logic.  For all systems, except for the ones containing absoluteness, we have shown that by adopting a suitable strategy, it holds that every derivation either succeeds or ends by producing a finite tree.  With respect to the known complexity of the logics, the decisions procedures are not optimal, and further work is needed to obtain optimal procedures out of the labelled calculi. The last result we have shown is semantic completeness for the calculi, again with the exception of cases of absoluteness.

\

Concerning the semantics, a few works have considered  neighbourhood models for $\PCL$ or closely related logics. The relation between neighbourhood models and preferential models has been considered in \cite{negri2015sequent,girlandothesis} and is based on  a well-known duality between partial orders and so-called Alexandrov topologies. According to this result, neighbourhood models are build by associating to each world a topology in which the neighbourhoods are  the \emph{open} sets. For conditional logics this duality is studied in detailed in \cite{marti2013topological}. However, the topological semantics of \cite{marti2013topological}   imposes closure under arbitrary unions and non-empty intersections on the neighbourhoods. These conditions are  not required by the logic, and we have not assumed them in the definition of neighbourhood models. 

A kind of neighbourhood semantics, called Broccoli Semantics, has been considered in \cite{girard2007onions}. In this article, it is shown that the logic BL characterised by the Broccoli Semantics coincides with $\PCL$. Completeness of BL is obtained by Burgess' result. 

Yet another kind of neighbourhood semantics bearing some similarity to  ours is the Premise Semantics, considered in the seminal work by Veltman \cite{veltman1985logic}. Premise semantics is shown equivalent to preferential semantics (called ``ordered semantics''). Premise models are neighbourhood models which do not require any additional properties, as in our definition. However, the definition of the conditional is different from ours, as it considers arbitrary intersections of neighbourhoods. Then, the result of \emph{strong} completeness is proved indirectly by resorting to preferential semantics (whence generalising Burgess' result).

In this respect, the direct completeness result with respect to the neighbourhood semantics contained in this work is new. In future work we wish to complete it with the missing cases. 

\

Concerning proof systems, very few calculi are known for $\PCL$ and its extensions. Labelled  tableaux calculi for $\PCL$ and its extensions (including all the ones considered here, and Lewis' logics) are proposed in \cite{giordano2009tableau}.  The calculi are based on preferential semantics with the Limit assumption, and are defined by extending the language by  pseudo-modalities indexed on worlds. 
The tableaux calculi cover all logics considered in this work, but they are inherently different from the ones we introduced, due to the presence of Limit assumption. As a difference with the present work, termination is obtained by relatively complex blocking conditions. 
As a side note, the neighbourhood semantics could be  reformulated by assuming the Limit assumption as follows. Given a formula $A$, consider the set of neighbourhoods $\alpha\in N(x)$ minimal with respect to set inclusion such that $\alpha\fe A$, and define a conditional $A > B$ to be forced at $x$ if each neighbourhood $\alpha$ in this set forces universally $A \to B$. Corresponding calculi could possibly be  developed based on this semantics. 

Labelled sequent calculi based on preferential semantics for $ \PCL $ and its extensions, including counterfactual logics,  are presented in \cite{girlando2019uniform}. In this case, the semantics is defined \emph{without} the limit assumption. However, while there is a proof of termination for all systems of calculi, complexity issues are not analysed in detail.

An unlabelled sequent calculus for $ \PCL $ yielding an optimal $ \mbox{PSPACE}$ decision procedure is presented in \cite{schroder2010optimal}:
the calculus is obtained by closing one step rules by all possible cuts and by adding a specific rule for $\PCL$. The resulting system is undoubtedly significant, but the rules have a highly combinatorial nature and are overly complicated. In particular, a non-trivial calculation (although the algorithm is polynomial) is needed to obtain one backward instance of the (S)-rule for a given sequent.

Recently, a resolution calculus for $\PCL$ has been proposed in \cite{nalon2018resolution}. The calculus does not make use of labels, nor of any additional structure; it relies however on a non-trivial pre-processing of formulas (including renaming of subformulas and addition of propositional constants) in order to transform a formula into a suitable set of clauses to which the resolution rules can be applied. 

As a difference with Lewis' logics\footnote{Refer to \cite{girlando2016standard, girlando2017hypersequent} for recently proposed non-labelled calculi for Lewis' logics.},  it  is remarkable that today, 40 years since preferential logics has been introduced, no \emph{standard} unlabelled sequent calculi for $ \PCL $ or its extensions have been found, where by a standard calculus we mean a proof system with a fixed finite number of rules, each with a fixed finite number of premisses. 

Regarding labelled sequent calculi for preferential logics, from a computational viewpoint the main issue, to be explored in future work, is whether the calculi can be refined in order to achieve optimal complexity. This may lead to a redefinition the semantics itself, in order to obtain sharper labelled rules, or to a modification of calculus structure in itself.

\bibliography{one.bib}
\bibliographystyle{plain}

\appendix
\section*{Appendix}
	\begin{proof}[Proof of Theorem \ref{theorme:completeness_synth}]
		We have to show that the inference rules of $ \axiom{PCL} $ are admissible, and that the axioms of $ \axiom{PCL} $ and its extensions are derivable. By means of example, we show admissibility of (RCEA) and (RCK) in $ \lab{CL} $, and derivability of axioms (CM), (N), (T) and (U$_1  $) in $ \lab{CLT} $. More derivation examples can be found in \cite{girlandothesis}. 
		
		For (RCEA), suppose $ \vdash A\leftrightarrow B $. Thus, $ \vdash x:A\rightarrow B $, whence $ \Rightarrow x:A\rightarrow B  $, and  $ \vdash x:B\rightarrow A $, whence $ \Rightarrow x:B\rightarrow A $. We derive three sequents by application of $ \Cut $ and other rules, and show how to combine them into a derivation of $ \Rightarrow x:(A>C) \rightarrow (B>C) $, i.e., $ \vdash (A>C) \rightarrow (B>C) $. The other direction of the implication is similar.
		
		From $ \Rightarrow x:B\rightarrow A $ obtain by substitution $\Rightarrow y:B\rightarrow A  $.  Sequent $ y:B, y:B\rightarrow A \Rightarrow y:A $ is derivable.
		$$
		\infer[\Lfe]{(1) \quad  a \in N(x), a \fe B, x:A>C \Rightarrow x\barra{a} B|C, a \fe A }{
			\infer[\Rfe]{a \in N(x), y \in a, y: B, x:A>C \Rightarrow x\barra{a} B|C, a \fe A }{
				\infer[\Wk]{ a \in N(x), y \in a, y: B, x:A>C \Rightarrow x\barra{a} B|C, a \fe A, y :A }{ 
					\infer[\Cut]{y:A \Rightarrow y:B}{
						\infer[\Wk]{y:A \Rightarrow y:B\rightarrow A, y:B}{
							\Rightarrow y:B\rightarrow A,
						}
						&
						y:B, y:B\rightarrow A \Rightarrow y:A
					} 
				}
			}
		}
		$$
		From $ \Rightarrow x:A\rightarrow B  $ obtain by substitution $ \Rightarrow y:A\rightarrow B  $. Sequent $ y:A, y:A \rightarrow B \Rightarrow y:B $ is derivable. 
		
		\
		
		\begin{adjustbox}{max width = \textwidth}
			$$
			\infer[\Lfe]{(2) \quad c\subseteq a, a \in N(x), c \in N(x), a \fe B, c \fe A, c \fu A\rightarrow C, x:A>C \Rightarrow x\barra{a} B|C }{
				\infer[\Rfe]{ c\subseteq a, a \in N(x), c \in N(x), y \in c, a \fe B, y:A, c \fu A\rightarrow C,, x:A>C \Rightarrow x\barra{a} B|C  }{
					\infer[\Wk]{c\subseteq a, a \in N(x), c \in N(x), y \in c, a \fe B, y:A, c \fu A\rightarrow C, y : A\rightarrow C,, x:A>C \Rightarrow x\barra{a} B|C}{
						\infer[\Cut]{y:A \Rightarrow y:B}{
							\infer[\Wk]{y:A \Rightarrow y:A\rightarrow B, y:B}{
								\Rightarrow y:A\rightarrow B
							}
							&
							y:A, y:A \rightarrow B \Rightarrow y:B
						}
					}
				}
			}
			$$
		\end{adjustbox}
		
		\
		
		\noindent From  $ \Rightarrow x:B\rightarrow A $ obtain by substitution   $ \Rightarrow z:B\rightarrow A $, for some variable $ z $ different from $ y $. Sequent $z:C \Rightarrow z:C$ is derivable, as well as sequent $ z:B, z:B\rightarrow A \Rightarrow z:A $. 
		
		\
		
		\begin{adjustbox}{max width = \textwidth}
			$$
			\infer[\Rfu]{(3) \ c \subseteq a, a \in N(x), c \in N(x), c \fe A, c \fu A\rightarrow C, a \fe B, x:A>C \Rightarrow x \barra{a} B|C, c \fu B\rightarrow C }{
				\infer[\Lfu]{z \in c,  c \subseteq a, a \in N(x), c \in N(x), c \fe A, c \fu A\rightarrow C,a \fe B, x:A>C \Rightarrow x \barra{a} B|C, z:B\rightarrow C }{
					\infer[\Wk]{ z \in c,  c \subseteq a, a \in N(x), c \in N(x), c \fe A, c \fu A\rightarrow C, z: A\rightarrow C, a \fe B, x:A>C \Rightarrow x \barra{a} B|C,z:B\rightarrow C }{
						\infer[\Rright]{z:A\rightarrow C  \Rightarrow z:B\rightarrow C}{
							\infer[\Lright]{ z:A\rightarrow C, z:B \Rightarrow z:C }{
								z:C \Rightarrow z:C
								&
								\infer[\Cut]{z:B \Rightarrow z:A}{
									\infer[\Wk]{z:B \Rightarrow z:B\rightarrow A, z:A}{\Rightarrow z:B\rightarrow A}
									&
									z:B, z:B\rightarrow A \Rightarrow z:A 
								} 
							}
						}
					}
				}
			}
			$$
		\end{adjustbox}
		
		\

		\noindent To conclude, we combine the above derivations into the following:
		
		\

		\begin{adjustbox}{max width = \textwidth}
			$$
			\infer[\Rright]{\Rightarrow x: (A>C)\rightarrow (B>C) }{
				\infer[\Rcon]{x: A>C\Rightarrow x:B>C}{
					\infer[\Lcon]{a \in N(x), a \fe B, x:A>C \Rightarrow x\barra{a} B|C}{
						(1)
						&
						\infer[\Lbar]{x \barra{a} A|C, a \fe B, x:A>C \Rightarrow x\barra{a}B|C}{
							\infer[\Rbar]{c \subseteq a, a \in N(x), c \in N(x), c \fe A, c \fu A\rightarrow C, a \fe B, x:A>C \Rightarrow x\barra{a}B|C }{
								(2)\qquad 
								&
								\qquad (3)
							}
						}
					}
				}
			}
			$$
		\end{adjustbox}
		
		\
		
		\noindent For (RCK), suppose $ \vdash A\rightarrow B $. Thus, $ \Rightarrow x:A\rightarrow B $. We show how to derive sequent $ \Rightarrow x:(C>A) \rightarrow (C>B) $. From $ \Rightarrow x:A\rightarrow B $ obtain by substitution $ \Rightarrow y:A\rightarrow B $. Then, from this sequent and the derivable sequent $ y:C\Rightarrow y:C $ obtain sequent (1):
		
		\
		
		\begin{adjustbox}{max width = \textwidth}
			$$
			\infer[\Rfu]{(1) \quad b \subseteq a, a \in N(x), b \in N(x), b \fe C, b \fe C\rightarrow A,a \fe C, x:C>A \Rightarrow x \barra{a} C|B, b \fu C\rightarrow B}{
				\infer[\Lfu]{y \in b, b \subseteq a, a \in N(x), b \in N(x), b \fe C, b \fe C\rightarrow A,a \fe C, x:C>A \Rightarrow x \barra{a} C|B, y: C\rightarrow B}{
					\infer[\Wk]{y \in b, b \subseteq a, a \in N(x), b \in N(x), b \fe C, b \fe C\rightarrow A, y : C\rightarrow A, a \fe C, x:C>A \Rightarrow x \barra{a} C|B, y: C\rightarrow B}{
						\infer[\Rright]{y:C\rightarrow A \Rightarrow y:C\rightarrow B}{
							\infer[\Lright]{y:C\rightarrow A,y:C \Rightarrow y:B}{
								y:C \Rightarrow y:C
								&
								\infer[\Wk]{y:C, y:A \Rightarrow y:B}{
									y:A \Rightarrow y:B
								}
							}
						}
					}
				}
			}
			$$
		\end{adjustbox}
		
		\

		\noindent The following two sequents are derivable, by Lemma \ref{generalized_initial_sequents}:
		\begin{quote}
			(2) \ $ a \in N(x), a \fe C, x:C>A \Rightarrow x \barra{a} C|B, a \fe C  $ \\
			(3) $ b \subseteq a, a \in N(x), b \in N(x), b \fe C, b \fe C\rightarrow A,a \fe C, x:C>A \Rightarrow x \barra{a} C|B, b \fe C $\
		\end{quote}
		To conclude the proof, apply the following rules to $ y:C\rightarrow A \Rightarrow y:C\rightarrow B $:
		
		\
		
		\begin{adjustbox}{max width = \textwidth}
			$$
			\infer[\Rright]{\Rightarrow x:(C>A) \rightarrow (C>B)}{
				\infer[\Rcon]{x:C>A \Rightarrow x:C>B}{
					\infer[\Lcon]{ a \in N(x), a \fe C, x:C>A \Rightarrow x \barra{a} C|B }{
						(2)
						&
						\infer[\Lbar]{x \barra{a} C|A, a \in N(x), a \fe C, x:C>A \Rightarrow x \barra{a} C|B}{
							\infer[\Rbar]{b \subseteq a, a \in N(x), b \in N(x), b \fe C, b \fe C\rightarrow A,a \fe C, x:C>A \Rightarrow x \barra{a} C|B  }{
								(3) \qquad 
								&
								\qquad (1)
							}
						}
					}
				}
			}
			$$
		\end{adjustbox}

		\noindent Here follows the derivation of (CM), in which we have omitted three derivable left premisses: $ a \fe a \wedge B \dots \Rightarrow \dots a \fe A $, premiss of the lower occurrence of $ \Lcon $, $ b\fe A \dots \Rightarrow \dots b \fe A  $ premiss of the upper occurrence of $ \Lcon $, and $ c\subseteq a, a \fe A\wedge B \dots \Rightarrow \dots c \fe A\wedge B $, premiss of $ \Rbar $. 
		
		\
		
		\begin{adjustbox}{max width = \textwidth}
			$ 
			\infer[\Rright]{\Rightarrow x: (A > B) \land (A > C) \rightarrow ((A\land B) > C) }{
				\infer[\mathsf{L \wedge}]{x: (A > B) \land (A > C) \Rightarrow x: (A\land B) > C}{
					\infer[\Rcon]{x: A > B, x:A > C \Rightarrow x:(A\land B) > C}{
						\infer[\Lcon]{a \in N(x), a \fe A\wedge B, x: A > B, x:A > C \Rightarrow x \barra{a} A\wedge B | C  }{
							\infer[\Lbar]{x \barra{a} A|B, a \in N(x), a \fe A\wedge B, x: A > B, x:A > C \Rightarrow x \barra{a} A\wedge B | C   }{
								\infer[\Lcon]{ b \in N(x), b\subseteq a, a \in N(x), b\fe A, b\fe A\rightarrow B, a \fe A\wedge B, x: A > B, x:A > C \Rightarrow x \barra{a} A\wedge B | C }{
									\infer[\Lbar]{  x \barra{b} A|C, b \in N(x), b\subseteq a, a \in N(x),  b\fe A, b\fe A\rightarrow B , a \fe A\wedge B, x: A > B, x:A > C \Rightarrow x \barra{a} A\wedge B | C  }{
										\infer[\Trans]{c \in N(x), c \subseteq b, b \in N(x), b\subseteq a, a \in N(x), c\fe A, c\fe A\rightarrow C, b\fe A, b\fe A\rightarrow B , a \fe A\wedge B, x: A > B, x:A > C \Rightarrow x \barra{a} A\wedge B | C  }{
											\infer[\Rbar]{ c \subseteq a, c \subseteq b, c\fe A, c \fe A, c\fu A\rightarrow C, b\fe A, b\fe A\rightarrow B , a \fe A\wedge B, x: A > B, x:A > C \Rightarrow x \barra{a} A\wedge B | C }{
												\infer[\Rfu]{ c \subseteq a\dots c \subseteq b, c \fe A, c\fu A\rightarrow C, b\fe A, b\fe A\rightarrow B , a \fe A\wedge B \dots \Rightarrow \dots c\fu  A\wedge B \rightarrow C }{
													\infer[\Rright]{ y \in c , c \subseteq a, c \subseteq b, c \fe A, c\fu A\rightarrow C, b\fe A, b\fe A\rightarrow B , a \fe A\wedge B \dots \Rightarrow \dots y: A\wedge B \rightarrow C }{
														\infer[\Lfu]{y \in c , c \subseteq a, c \subseteq b, c \fe A, c\fu A\rightarrow C, b\fe A, b\fe A\rightarrow B , a \fe A\wedge B, y: A\wedge B   \dots \Rightarrow \dots y:C}{
															\infer[\Lright]{y \in c , c \subseteq a, c \subseteq b, c 	\fe A, y: A\rightarrow C, b\fe A, b\fu A\rightarrow B , a \fe A\wedge B, y: A\wedge B   \dots \Rightarrow \dots y:C}{
																\deduce{\dots y:C \Rightarrow y:C \dots}{\triangledown}
																&
																\deduce{\dots y:A\wedge B \Rightarrow y:A \dots}{\triangledown}
															}
														}
													}
												}
											}
										}
									}
								}
							}
						}
					}
				}
			}
			$
		\end{adjustbox}
		
		\

		\noindent Here follows the derivation of (N).
		
		\
			
			\begin{adjustbox}{max width = \textwidth}
			$$
			\infer[\mathsf{R\lnot}]{\Rightarrow x:\lnot (\top > \bot )}{
				\infer[\N]{x: \top>\bot\Rightarrow }{ 
					\infer[\Lcon]{a \in N(x), x: \top>\bot\Rightarrow }{
						\infer[\emp] {a \in N(x), x: \top>\bot\Rightarrow a \fe \top}{
							\infer[\Rfe]{y\in a, a \in N(x), x: \top>\bot\Rightarrow a \fe \top}{y\in a, a \in N(x), x: \top>\bot\Rightarrow a \fe \top, y :\top } 
						}
						&
						\infer[\Lbar]{ x\Vdash_a \top | \bot , y\in a, a \in N(x), x: \top>\bot\Rightarrow x:\bot}{
							\infer[\Lfu]{ c\in N(x), c \subseteq a, a \fe \top, a \fu \top \rightarrow \bot , y\in a, a \in N(x), x: \top>\bot\Rightarrow x:\bot }{ 
								\infer[\Lc]{c\in N(x), c \subseteq a, a \fe \top, a \fu \top \rightarrow \bot , y : \top \rightarrow \bot, y\in a, a \in N(x), x: \top>\bot\Rightarrow x:\bot }{ \dots \Rightarrow y:\top \qquad  & \qquad y:\bot \Rightarrow \dots } 
							}
						}
					}
				}
			}
			$$
			\end{adjustbox}
		
		\
		
		\noindent Here follows the derivation of axiom (T). The left premiss of $ \Lcon $ is the derivable sequent $ x\in a, a\in N(x), x:A, x: A>\bot \Rightarrow x: \bot, a \Vdash^{\exists} A $, not shown for reasons of space.

		\
		
		\begin{adjustbox}{max width = \textwidth}
		$$
		\infer[\Rright]{\Rightarrow x: A \rightarrow ((A > \bot ) \rightarrow \bot ) }{ 
			\infer[\Rright]{x: A \Rightarrow x: (A > \bot ) \rightarrow \bot  }{
				\infer[\T]{x: A, x:A < \bot \Rightarrow x: \bot }{ 
					\infer[\Lcon]{x\in a, a\in N(x), x:A, x: A>\bot \Rightarrow x: \bot}{ 
						\infer[\Lc]{x\Vdash_{a} A| \bot,  x \in a, a\in N(x), x:A, x:A>\bot, \Rightarrow x:\bot }{ 
							\infer[\Lfe]{b \in N(x), b\subseteq a, b\Vdash^{\exists} A, b\Vdash^{\forall} A\rightarrow \bot,  x \in a, a \in N(x), x:A>\bot \Rightarrow x:\bot }{ \infer[\Lfu]{ y \in b, y:A, b \in N(x), b\subseteq a, b\Vdash^{\forall} A\rightarrow \bot,  x \in a, a \in N(x), x:A>\bot \Rightarrow x:\bot  }{
									\infer[\Lright]{y:A\rightarrow \bot, y\in b, y:A, b \in N(x), b\subseteq a, x \in a, a \in N(x), x:A>\bot \Rightarrow x:\bot  }{
										y:A\rightarrow \bot, y\in b, y:A \dots \Rightarrow x:\bot, y:A 
										\quad 
										&
										\quad 
										y: \bot \dots \Rightarrow 
						} } } }    
		}  }  }  } 
		$$
		\end{adjustbox}
	
	\

\noindent Finally, here follows the derivation of (U$ _1 $). The derivable left premiss of $ \Lcon $, sequent $z \in c, z \in b \dots z: \lnot A \Rightarrow \dots c \fe \lnot A, z:\lnot A$, is not shown.

\

\begin{adjustbox}{max width = \textwidth}
	$$
	\infer[\Rright]{\Rightarrow x: (\lnot A > \bot ) \rightarrow (\lnot (\lnot A > \bot ) > \bot) }{
		\infer[\Rcon]{x: \lnot A > \bot \Rightarrow x:\lnot (\lnot A > \bot ) > \bot }{ 
			\infer[\Lfe]{a \in N(x), a \fe \lnot (\lnot A > \bot ), x: \lnot A > \bot \Rightarrow   x \Vdash _a \lnot (\lnot A > \bot ) > \bot }{
				\infer[\mathsf{L\lnot}]{y \in a, a \in N(x), y: \lnot (\lnot A > \bot ), x: \lnot A > \bot \Rightarrow   x \Vdash _a \lnot (\lnot A > \bot ) > \bot}{
					\infer[\Rcon]{ y \in a, a \in N(x), x: \lnot A > \bot \Rightarrow   x \Vdash _a \lnot (\lnot A > \bot ) > \bot, y: \lnot A > \bot }{
						\infer[\Lfe]{b \in N(y), y \in a, a \in N(x), b\fe \lnot A,  x: \lnot A > \bot \Rightarrow   x \Vdash _a \lnot (\lnot A > \bot ) > \bot, y\Vdash_b \lnot A | \bot}{
							\infer[\Unif{1}]{z \in b, b \in N(y), y \in a, a \in N(x), z: \lnot A,  x: \lnot A > \bot \Rightarrow   x \Vdash _a \lnot (\lnot A > \bot ) > \bot, y\Vdash_b \lnot A | \bot}{
								\infer[\Lcon]{z \in c, z \in b, b \in N(y), y \in a, a \in N(x), z: \lnot A,  x: \lnot A > \bot \Rightarrow   x \Vdash _a \lnot (\lnot A > \bot ) > \bot, y\Vdash_b \lnot A | \bot}{ 
									\infer[\Lbar]{x\Vdash_c \lnot A | \bot \dots z: \lnot A,  x: \lnot A > \bot \Rightarrow \dots}{
										\infer[\Lfe]{d \subseteq c, d \in N(x) \dots d \fe \lnot A,   d \fu \lnot A \rightarrow \bot, z: \lnot A,  x: \lnot A > \bot \Rightarrow \dots }{
											\infer[\Lfu]{k \in d, d \subseteq c \dots k: \lnot A,   d \fu \lnot A \rightarrow \bot, z: \lnot A,  x: \lnot A > \bot \Rightarrow \dots }{
												\infer[\Lright]{k \in d, d \subseteq c \dots k: \lnot A,   d \fu \lnot A \rightarrow \bot, k:\lnot A \rightarrow \bot,  z: \lnot A,  x: \lnot A > \bot \Rightarrow \dots }{\deduce{ \dots k:\lnot A \Rightarrow \dots k:\lnot A}{\triangledown} & \qquad \qquad\qquad \qquad
													\deduce{\dots k:\bot\Rightarrow \dots }{\init}
												}
											}
										}
									}
								}
							}
						}
					}
				}
			}
		}
	}
	$$
	\end{adjustbox}
	\end{proof}

\end{document}